\newcommand\enc{\mathop{\mathrm{Enc}}}
\newcommand\enco{\mathop{\mathrm{Enc_0}}}
\newcommand \allp{\Sigma}
\newcommand \oddp{\Omega}
\newcommand \conp{\Gamma}
\newcommand \balp{{\mathbf B}}
\newcommand \dimp{\Upsilon}
\newtheorem{theorem}{Theorem}[section]
\newtheorem{proposition}[theorem]{Proposition}
\newtheorem{observation}[theorem]{Observation}
\newtheorem{lemma}[theorem]{Lemma}
\newtheorem*{lemma*}{Lemma}
\newtheorem{construction}[theorem]{Construction}
\newtheorem{corollary}[theorem]{Corollary}
\newtheorem*{remark}{Remark}
\newtheorem{claim}{Claim}
\newtheorem*{theoremMain}{Main Theorem}
\numberwithin{equation}{section}
\newcounter{tmpc}
\newcounter{saveenumi}
\renewcommand{\thefootnote}{\ifcase\value{footnote}\maltese\or(*)\or(**)\or(***)\or(****)\or(\#)\or(\#\#)\or(\#\#\#)\or(\#\#\#\#)\or($\infty$)\fi}
\title{\bf Generalized Gray codes with prescribed ends}
\author{Tom\'{a}\v{s} Dvo\v{r}\'ak, Petr Gregor and V\'aclav Koubek$^{\maltese}$}
\affil{Faculty of Mathematics and Physics\\
Charles University in Prague\\
Czech Republic}
\date{\vspace{-5ex}}
\begin{document}

\maketitle

\begin{abstract}
An $n$-bit Gray code is a sequence of all $n$-bit vectors such that consecutive vectors differ in a~single bit. It is well-known that given $\alpha,\beta\in\{0,1\}^n$, an $n$-bit Gray code between $\alpha$ and $\beta$ exists iff the Hamming distance $d(\alpha,\beta)$ of $\alpha$ and $\beta$ is odd. We generalize this classical result to $k$ pairwise disjoint pairs $\alpha_i, \beta_i\in\{0,1\}^n$:  if $d(\alpha_i,\beta_i)$ is odd for all $i$ and $k<n$, then the set of all $n$-bit vectors can be partitioned into $k$ sequences such that the $i$-th sequence leads from $\alpha_i$ to $\beta_i$ and consecutive vectors differ in a single bit. This holds for every $n>1$ with one exception in the case when $n = k + 1 = 4$. Our result is optimal in the sense that for every $n>2$ there are $n$ pairwise disjoint pairs $\alpha_i,\beta_i\in\{0,1\}^n$ with $d(\alpha_i,\beta_i)$ odd for which such sequences do not exist. 

 \bigskip\noindent \textbf{Keywords:} 
Gray code, Hamiltonian path, hypercube, path partition, prescribed endvertices
\end{abstract}

\footnotetext{Deceased 12 February 2016.}


\section{Introduction}

An $n$-bit Gray code is a sequence of all $2^n$ binary vectors of length $n$ such that consecutive vectors differ in a~single coordinate. The code is named after Frank Gray who in 1953 patented a simple algorithm to generate such a code for every $n\ge1$. Gray codes have found applications in such diverse areas as information retrieval, signal encoding, image processing or data compression, and alternative constructions of Gray codes satisfying certain additional properties have been widely studied~\cite{Knuth,Sa}.

It is well-known that  given $\alpha,\beta\in\{0,1\}^n$, an $n$-bit Gray code between $\alpha$ and $\beta$ exists iff the Hamming distance $d(\alpha,\beta)$ of $\alpha$ and $\beta$ is odd \cite{H}. Caha and Koubek suggested the  following generalization: A set of $k$ pairwise disjoint pairs $\alpha_i ,\beta_i\in\{0,1\}^n$ is called \emph{connectable} if  there are $k$ sequences of vectors of $\{0,1\}^n$ such that
\begin{enumerate}[\upshape(1)]
\item \label{connectable:i}
the $i$-th sequence leads from $\alpha_i$ to $\beta_i$ for every $i=1,\dots,k$,
\item \label{connectable:ii}
consecutive vectors differ in a single bit,
\item \label{connectable:iii}
each vector of $\{0,1\}^n$ occurs in exactly one of these sequences.
\end{enumerate}
In \cite{CK}  they showed that $\{\alpha_i,\beta_i\}_{i=1}^k$ with $d(\alpha_i,\beta_i)$ odd is connectable for every $n\ge2$ provided that $k\le(n-1)/3$, and asked about the maximum value of $k$ for which this statement holds.

The connectability of $\{\alpha_i,\beta_i\}_{i=1}^k$ for $k=2$ follows from earlier works \cite{LW,D,CK}, while instances of the problem for $k$ bounded by a small constant were systematically studied by Casta\~neda and Gotchev  \cite{CG,CGL}. The authors of the present paper characterized the connectability of  $\{\alpha_i,\beta_i\}_{i=1}^k$ for $k\le(n-1)/2$ \cite{GD} and showed that when $k$ is unbounded, the problem becomes NP-hard or NP-complete, depending on the instance description \cite{DG, DK}.

As observed in \cite{DGK05}, for every $n>2$ there are $n$ pairwise disjoint pairs $\alpha_i,\beta_i\in\{0,1\}^n$ with $d(\alpha_i,\beta_i)$ odd for all $i$ which are not  connectable.  Indeed, let the $j$-th coordinate of $\alpha_i$ be 1 iff $i=j$, $\alpha$ be the zero vector, and $\{\beta_1,\dots,\beta_n\}\subseteq\{0,1\}^n\setminus\{\alpha_1,\dots,\alpha_n,\alpha\}$ be arbitrary. If $\{\alpha_i,\beta_i\}_{i=1}^n$ were connectable, then --- by condition \eqref{connectable:iii} in the definition of connectability --- $\alpha$ would be included in a sequence between $\alpha_i$ and $\beta_i$ for some $i$. But --- by condition \eqref{connectable:ii} --- this sequence has to pass through some $\alpha_j$, $j\ne i$, which means that $\alpha_j$ is included in both $i$-th and $j$-th sequences, contrary to  \eqref{connectable:iii}.

The goal of this paper is to derive the maximum $k$ for which  $\{\alpha_i,\beta_i\}_{i=1}^k$ with $d(\alpha_i,\beta_i)$ odd for all $i$ is connectable. The previous paragraph implies that $k<n$. Our main result shows this bound is also sufficient, thus providing the optimal solution to this problem.
\begin{theoremMain}
Let $A=\{\alpha_i,\beta_i\}_{i=1}^k\subseteq\{0,1\}^n$,  $d(\alpha_i,\beta_i)$ be odd for every $i=1,2,\dots,k$, and $k<n>0$.
Then $A$ is connectable except the case when $n=4$ and $A$ is  isomorphic to the set $C_2$  on Fig.~\ref{fig:Fig2}.
\end{theoremMain}
The rest of this paper is organized as follows. The next section introduces notation and terminology, surveys previous results, and deals with the problem for $n\le5$. Section~\ref{section:induction-step} describes a~non-deterministic algorithm (Construction~\ref{construction-simple}) which forms a cornerstone of our inductive construction, with induction step summarized as Proposition~\ref{prop-diminishable}. Various special cases not covered by the  general induction step are settled in Section~\ref{section:special-cases}.

Note that to derive the main theorem which deals with at most $n-1$ odd pairs, we need to prove a~rather stronger statement. Thus our auxiliary results consider also $n$ pairs at both odd and even distances.
\section{Preliminaries}
Each $n$-bit Gray code may be viewed as a Hamiltonian path in the graph of the $n$-dimensional hypercube $Q_n$. In the rest of this paper we therefore resort to this equivalent formulation and use standard graph-theoretic terminology \cite{Bo}.

Hypercubes are graphs on $(0,1)$-vectors.  We first recall basic definitions of hypercubes and describe some tools that may be useful while working with vertices of a hypercube.

Let $n\ge1$ be a natural number and $[n]=\{0,1,\dots,n-1\}$. Let $V_n$ be the set of all $n$-dimensional $(0,1)$-vectors, i.e.~the set of all functions from $[n]$ into $[2]=\{0,1\}$.  For $\alpha\in V_n$ and $i\in[n]$, let $\alpha (i)$ denote the $i$-th coordinate of $\alpha$.
For vectors $\alpha ,\beta
\in V_n$, define $(\alpha\oplus\beta )(i)=\alpha (i)\oplus\beta (i)$ for all $i\in[n]$ where for
$x,y\in[2]$ we have
$$x\oplus y=\begin{cases} 0&\text{ if }x=y,\\
1&\text{ if }x\ne y.\end{cases} $$
Then $\oplus$ is a group operation of order $2$ on the set $V_n$ with the neutral element $\epsilon$ where $\epsilon (i)=0$ for all $i\in[n]$.  Let $e^n_i\in V_n$ for $i\in[n]$ be vectors such that
$$e^n_i(j)=\begin{cases} 1&\text{ if }i=j,\\
0&\text{ if }i\ne j.\end{cases} $$
If $n$ is clear from the context then we omit $n$ and write only $e_i$.

A \emph{hypercube} $Q_n$ of dimension $n$ is a graph with vertex set $V_n$, $\{\alpha ,\beta \}$ being an edge  when $\alpha\oplus\beta =e_i$ for some $i\in[n]$. The set of all edges of $Q_n$ is denoted by $E(Q_n)$.  

In the following, we often need to transform $n$-dimensional vectors into  dimensions $n-1$ or $n+1$. For this purpose, we employ the following operations. For $i\in[n]$ and $\alpha\in V_n$ such that $\alpha (i)=k$ we define $\rho_{i=k}(\alpha )\in V_{n-1}$  as follows:
$$\rho_{i={k}}(\alpha )(j)=\begin{cases} \alpha (j)&\text{ if }j<i,\\
\alpha (j+1)&\text{ if }j\ge i\end{cases} $$
while $\rho_{i=k}(\alpha)$ is undefined if $\alpha(i)\ne k$.
Given a set $X\subseteq V_n$, $i\in[n]$ and $k\in[2]$, put $\rho_{i=k}(X)=\{\rho_{i=k}(\alpha )\mid\alpha\in X\}.$
Conversely, for a vector $\alpha\in V_{n-1}$, $i\in[n]$ and $k\in[2]$, let $\iota_{i=k}(\alpha )$ be a vector of $V_n$ such that
$$\iota_{i=k}(\alpha )(j)=\begin{cases} \alpha (j)&\text{ if }j<i,\\
k&\text{ if }j=i,\\
\alpha (j-1)&\text{ if }j>i.\end{cases} $$
For $\alpha\in V_n$ we define the \emph{parity} $\chi(\alpha)$ of $\alpha$ by $\chi (\alpha)=\prod_{i\in[n]}-1^{\alpha(i)}$.
\par
Let $P_2(V_n)=\{\{\alpha ,\beta \}\mid\alpha ,\beta\in V_n\}$ be the set of all multisets consisting of two elements of $V_n$.  Note that elements of $P_2(V_n)$ --- which we call \emph{pairs} --- may consist of either two different or two identical elements of  $V_n$, both cases are needed throughout the paper.
For $\{\alpha,\beta\}\in A\in P_2(V_n)$ let $\chi(\alpha,\beta)=\chi(\alpha)+\chi(\beta)$ and  $\chi(A)=\sum_{ \{\alpha ,\beta \}\in A}\chi(\alpha,\beta)$.
Note that $\chi(\alpha,\alpha)$ is either $2$ (if $\sum_{i\in[n]}\alpha(i)$ is even) or $-2$ (if $\sum_{i\in[n]}\alpha(i)$ is odd).
We say that a pair $\{\alpha ,\beta \}\in P_2(V_n)$ is
\begin{description}
\item \emph{even} if $\chi(\alpha)=\chi(\beta)$,
\item \emph{odd} if $\chi(\alpha)\ne\chi(\beta)$,
\item \emph{degenerated} if $\alpha =\beta$,
\item \emph{edge}-\emph{pair} if $\alpha\oplus\beta =e_i$ for some $i\in[n]$.
\end{description}
Note that  $\{\alpha ,\beta \}$ is an edge-pair exactly when $\{\alpha ,\beta \}\in E(Q_n)$.  A subset $A\subseteq P_2(V_n)$ is called a \emph{pair}-\emph{set} in $Q_n$ if
\begin{itemize}
\item $\{\alpha ,\beta \}\cap \{\alpha',\beta'\}=\emptyset$ for all distinct $
\{\alpha ,\beta \},\{\alpha',\beta'\}\in A$, and
\item if $A\ne\emptyset$ then there exists $\{\alpha ,\beta \}\in A$ with $\alpha\ne\beta$.
\end{itemize}
Let $\allp_n$ be the set of all pair-sets in $Q_n$.  For $A\in\allp_n$,
\begin{itemize}
\item $|A|$ is the size of $A$,
\item $\|A\|$ denotes the number of odd pairs in $A$, and
\item $\bigcup A=\bigcup_{\{\alpha ,\beta \}\in A}\{\alpha ,\beta \}$.
\end{itemize}

Thus
$|A|-\|A\|$ is the number of even pairs in $A$.  For a positive integer $k$, let $\allp_n^k$ be the set of all pair-sets $A\in\allp_n$ with $|A|\le k$.

We say that a vertex $\alpha\in V_n$ is \emph{encompassed} by a set
$X\subseteq V_n$ if for every edge $\{\alpha ,\beta \}\in E(Q_n)$ we have $
\beta\in X$. Let $\enco(X)$ be the set of all vertices encompassed by $X$ and $\enc(X)=\enco(X)\setminus X$. Given a pair-set $A$, we set $\enco(A)=\enco(\bigcup A)$, $\enc(A)=\enco(A)\setminus\bigcup A$ and say that a vertex $\alpha$ is  \emph{encompassed by a pair-set} $A$ if $\alpha$ is encompassed by $\bigcup A$.

For a pair-set $A\in\allp_n$ and $i\in[n]$, let $\sigma_i(A)=(n_0,n_1)$ where $n_k$ is the number of $\{\alpha ,\beta \}\in A$ with $\alpha (i)=\beta (i)=k$ for $k\in[2]$. Clearly, $|A|=n_0+n_1+|\{\{\alpha ,\beta \}\in A\mid\alpha
(i)\ne\beta (i)\}|$
where $\sigma_i(A)=(n_0,n_1)$. Further, for $k\in[2]$ define
$$\rho_{i=k}(A)=\{\{\rho_{i=k}(\alpha ),\rho_{i=k}(\beta )\}\mid
\{\alpha ,\beta \}\in A,\,\alpha (i)=\beta (i)=k\}.$$
Note that $\rho_{i=k}(A)$ need not be a pair-set.
Indeed, if $\rho_{i=k}(A)\ne\emptyset$ consists only of degenerate pairs, then --- by the
definition --- $\rho_{i=k}(A)$ is not a pair-set. Observe that $\rho_{i=k}(\bigcup A)$ can be distinct from $\bigcup(\rho_{i=k}(A))$.
If $\sigma_i(A)=(n_0,n_1)$ and $\rho_{i=k}(A)$ is a pair-set, then $|\rho_{i=k}(A)|=n_k$ for $k\in[2]$.  Conversely, if $A_0,A_1\in\allp_{n-1}$ and $k\in[2]$, we define
$$\begin{aligned}\iota_{i,k}(A_0,A_1)=&\{\{\iota_{i=k}(\alpha ),\iota_{i=k}
(\beta )\}\mid \{\alpha ,\beta \}\in A_0\}\,\cup\\
&\{\{\iota_{i=1-k}(\alpha ),\iota_{i=1-k}(\beta )\}\mid \{\alpha
,\beta \}\in A_1\}.\end{aligned}$$
Clearly, $\iota_{i,k}(A_0,A_1)\in\allp_n$ such that
$$\rho_{i=k}(\iota_{i,k}(A_0,A_1))=A_0\text{ and }\rho_{i=1-k}(\iota_{
i=k}(A_0,A_1))=A_1.$$
If $A$ is a pair-set such that $n_0+n_1=|A|$ where $\sigma_i(A)=(n_0,n_1)$ for some $i\in[n]$, then $\iota_{i,0}(\rho_{i=0}(A),\rho_{i=1}(A))=A$.

We say that a~pair-set
$A\in\allp_n$ is
\begin{description}
\item \emph{pure} if $\alpha\ne\beta$ for all $\{\alpha ,\beta \}\in A$;
\item \emph{balanced} if $\chi (A)=0$;
\item \emph{odd} if every pair in $A$ is odd.
\end{description}
Let $\balp_n$ and $\oddp_n$ be the sets of all balanced and odd pair-sets from $\allp_n$, respectively, and  put $\balp_n^k=\balp_n\cap\allp_n^k$ and $\oddp_n^k=\oddp_n\cap\allp_n^k$ for every $k$.
Observe that if $A\in\oddp_n$, then $\rho_{i=k}(A)\in\oddp_{n-1}$ for all $i\in[n]$ and  $k\in[2]$.

\begin{observation}
\label{obser-encom}
If $A\in\balp_n^{2n-3}$ then $|\enco(A)|\le2$ and if $\alpha,\beta\in\enco(A)$ are
distinct then $\chi(\alpha)\ne\chi(\beta)$. Moreover, if $A\in\balp_n^{2n-4}$, $\gamma\in \enco(A)$ and $\{\alpha,\beta\}\in A$ with $\{\beta,\gamma\}\in E(Q_n)$ and $\beta'\in V_n\setminus\bigcup A$
such that $\chi(\beta')=\chi(\beta)$, then
$$
\enco(A\setminus\{\{\alpha,\beta\}\}\cup\{\{\alpha,\beta'\}\})=\enco(A)\setminus\{\gamma\}.
$$
\end{observation}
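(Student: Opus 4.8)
The plan is to reduce both assertions to one counting fact that the balance condition $\chi(A)=0$ forces on $\bigcup A$. For $\delta\in V_n$ write $N(\delta)$ for the set of its $n$ neighbours in $Q_n$; then $\delta\in\enco(A)$ means exactly $N(\delta)\subseteq\bigcup A$, and every vertex of $N(\delta)$ has parity $-\chi(\delta)$. First I would classify the pairs of $A$ into five types (even--even, odd--odd, even--odd, and the two kinds of degenerate pairs) and count distinct vertices of each parity in $\bigcup A$ (the pairs being vertex-disjoint). A short computation using $\chi(A)=0$ then shows that the number of distinct vertices of $\bigcup A$ of any fixed parity equals $|A|$ minus the number of degenerate pairs of that parity; in particular each such count is at most $|A|$.

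For the first assertion, suppose $\gamma_1\ne\gamma_2$ are both encompassed and of the same parity, say even. Then $N(\gamma_1)\cup N(\gamma_2)$ consists of odd vertices all lying in $\bigcup A$; since distinct even vertices have even Hamming distance at least $2$ and hence at most two common neighbours, $|N(\gamma_1)\cup N(\gamma_2)|\ge 2n-2$. Thus $\bigcup A$ contains at least $2n-2$ distinct odd vertices, contradicting the bound $|A|\le 2n-3$ from the counting fact. So at most one even and, symmetrically, at most one odd vertex is encompassed, giving both $|\enco(A)|\le2$ and the parity condition.

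For the second assertion put $A'=A\setminus\{\{\alpha,\beta\}\}\cup\{\{\alpha,\beta'\}\}$. Since $\chi(\beta')=\chi(\beta)$ and $\beta'\notin\bigcup A$, this is again a balanced pair-set with $|A'|=|A|$; as the relocated pair is non-degenerate ($\beta\in\bigcup A$ while $\beta'\notin\bigcup A$, and $\beta$ is the endpoint being moved), we have $\bigcup A'=(\bigcup A\setminus\{\beta\})\cup\{\beta'\}$ with $\beta$ genuinely absent. That $\gamma\notin\enco(A')$ is immediate, since its neighbour $\beta$ no longer lies in $\bigcup A'$. For the inclusion $\enco(A)\setminus\{\gamma\}\subseteq\enco(A')$, a vertex $\delta\in\enco(A)$ with $\delta\ne\gamma$ can fail to be encompassed by $A'$ only if $\beta\in N(\delta)$; but then $\delta$ and $\gamma$ are two distinct neighbours of $\beta$, hence two encompassed vertices of $A$ of the same parity, contradicting the first assertion applied to $A$.

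The main obstacle is the reverse inclusion $\enco(A')\subseteq\enco(A)$, i.e.\ ruling out that the fresh vertex $\beta'$ creates a new encompassed vertex $\delta$; this is exactly where the stronger hypothesis $A\in\balp_n^{2n-4}$ is used. Let $O$ be the set of vertices of $\bigcup A$ of parity $-\chi(\gamma)$, so $|O|\le|A|\le 2n-4$ by the counting fact. If some $\delta\in\enco(A')$ were adjacent to $\beta'$, then $\chi(\delta)=-\chi(\beta')=\chi(\gamma)$, and $\delta\ne\gamma$ (as $\gamma\notin\enco(A')$). Both $N(\gamma)$ and $N(\delta)$ consist of vertices of parity $-\chi(\gamma)$ lying in $\bigcup A$ or in $\{\beta'\}$, so $N(\gamma)\cup N(\delta)\subseteq O\cup\{\beta'\}$, giving $|N(\gamma)\cup N(\delta)|\le|O|+1\le 2n-3$. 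But $\gamma,\delta$ are distinct vertices of equal parity, whence $|N(\gamma)\cup N(\delta)|\ge 2n-2$, a contradiction. Therefore $\delta$ is not adjacent to $\beta'$, so $N(\delta)\subseteq\bigcup A$ and $\delta\in\enco(A)$. Combining the three inclusions yields $\enco(A')=\enco(A)\setminus\{\gamma\}$.
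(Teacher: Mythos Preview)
Your proof is correct and follows essentially the same approach as the paper: both arguments hinge on the fact that a balanced pair-set of size $k$ contributes at most $k$ distinct vertices of each parity, combined with the observation that two distinct same-parity vertices have at least $2n-2$ neighbours in total. Your organisation differs slightly---you prove the counting fact explicitly upfront and structure the second assertion as three inclusions, invoking the first assertion directly for $\enco(A)\setminus\{\gamma\}\subseteq\enco(A')$---whereas the paper splits the analysis of $\enco(A')$ by the parity of a candidate $\gamma'$; but the underlying ideas are identical. One small wrinkle: your claim that $\beta$ is ``genuinely absent'' from $\bigcup A'$ tacitly assumes the original pair $\{\alpha,\beta\}$ is non-degenerate (if $\alpha=\beta$ then $\beta$ survives in the new pair $\{\alpha,\beta'\}$), but the paper's proof makes the same implicit assumption.
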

\begin{proof}
Recall that a vertex of $Q_n$ has exactly $n$ neighbors, while two vertices of the same parity share at most two common neighbors. Hence if distinct $\alpha,\beta\in V_n$ with $\chi(\alpha)=\chi(\beta)$ are encompassed by a set $X\subseteq V_n$, then $|\{\eta\in X\mid\chi(\eta)\ne\chi(\alpha)\}|\ge2n-2$. If $\enco(A)$ contains at least three vertices, then two of them must be of the same parity. Since $|\{\alpha\in\bigcup A\mid\chi(\alpha)=\ell\}|\le k$ for $A\in\balp_n^k$ and $\ell\in\{-1,1\}$, it follows that then $|A|\ge2n-2$, contrary to the assumption that $A\in\balp_n^{2n-3}$. This settles the first statement.
\par
The second statement follows from the following facts. First, $\gamma\notin\enco(\bigcup A\setminus\{\beta\})$. Second,
$\{\gamma'\in\enco(A)\mid \chi(\gamma')\ne\chi(\gamma)\}=\{\gamma'\in\enco(\bigcup A\cup\{\beta'\})\mid\chi(\gamma')\ne\chi(\gamma)\}$ because $\chi(\gamma')\ne\chi(\gamma)$ implies $\chi(\gamma')=\chi(\beta')$ and therefore $\beta'$ cannot be a neighbor of $\gamma'$. And finally,
$$|\{\eta\in\bigcup A\mid\{\eta,\gamma'\}\in E(Q_n)\}|\le |A|-(n-2)<n-1$$
for every $\gamma'\in V_n\setminus\{\gamma\}$ with $\chi(\gamma')=\chi(\gamma)$ because $\gamma\in\enco(A)$, which means that such $\gamma'$ cannot be encompassed by $\bigcup A\cup\{\beta'\}$.
\end{proof}
An odd pair-set $A\in\Omega_n$ called \emph{diminishable} if
\begin{itemize}
\item
$|A|\le n-1$ and if $n=4$ then
\begin{itemize}
\item
either $A$ contains an edge pair
\item
or there exists no subset $\bigcup A\subseteq X\subseteq V_4$ such that the  subgraph of $Q_4$ induced by $X$ is isomorphic to $Q_3$;
\end{itemize}
\item
$|A|=n$, $n\ne 4$, $A$ contains at least two edge-pairs, and $\enc(A)=\emptyset$.
\end{itemize}

Let $\dimp_n$ be the set of all diminishable pair-sets in $\allp_n$.
Given $A\in\dimp_n$  with $|A|=n$, we say that $i\in[n]$ is \emph{separating} for $A$ if there exist edge-pairs $\{\alpha ,\beta \},\{\alpha',\beta'\}\in A$ with $\alpha (i)=\beta (i)\ne\alpha'(i)=\beta'(i)$.

A~pair-set $A\in\allp_n$ is called
 \emph{connectable} if there exists a family $\{P_{\alpha ,\beta}\mid
\{\alpha ,\beta \}\in A\}$ of
vertex-disjoint paths in $Q_n$ such that $P_{\alpha ,\beta}$ is a path between
$\alpha$ and $\beta$ and for every vertex $\gamma\in V_n$ there is exactly one $
\{\alpha ,\beta \}\in A$ such
that path $P_{\alpha ,\beta}$  passes through $\gamma$. Then the family $\{P_{\alpha ,\beta}\mid \{\alpha ,\beta \}\in A\}$
is called a \emph{connector} of $A$.
Observe that if $\{P_{\alpha ,\beta}\mid \{\alpha ,\beta \}\in A\}$ is a connector of a pair-set $A$ and $\{\alpha ,\alpha \}\in A$, then $P_{\alpha ,\alpha}$ is a singleton path consisting of $\alpha$ because  it is the unique path from $\alpha$ into $\alpha$. Let $\conp_n$ denote the set of all connectable pair-sets from $\allp_n$.

Note that if there exists a vertex $\alpha\in\enc(A)$ for an odd pair-set $A$, then $A$ is not connectable, because $\chi(\beta)=\chi(\beta')$ for all $\{\alpha,\beta\},\{\alpha,\beta'\}\in E(Q_n)$, and therefore any path of length $>1$ passing through $\alpha$ visits two distinct pairs of $A$.
Hence $\Omega_n^n\setminus \Gamma_n\ne\emptyset$ for every $n\ge 3$. This argument fails for diminishable pairs, because if $\alpha\in\enco(A)$ then $\alpha\in\bigcup A$. In fact, our main result says that every diminishable pair-set $A\in\Upsilon_n$ is connectable.

If $P$ is a path in $Q_n$ between $\alpha$ and $\beta$ and $\gamma ,\delta\in V_n$ are vertices that belong to $P$, then we say that $\gamma$ \emph{is} \emph{closer} \emph{to} $\alpha$ \emph{than} $\delta$ \emph{in} $P$ if the subpath of $P$ between $\alpha$ and $\gamma$ does not contain $\delta$.

For $A,B\in\allp_n$ we write $A\implies B$ if there exist
$\{\alpha ,\beta \},\{\alpha',\beta'\}\in A$ and $i\in[n]$ such that $\beta\oplus e_i=\beta'$  and $B=(A\setminus \{\{\alpha ,\beta \},\{\alpha',\beta'\}\})\cup \{\{\alpha ,\alpha'\}\}$.
The transitive and reflexive closure of $\implies$ is denoted by $\overset {*}{\implies}$.

We say that a pair-set $A$ is an $i$-\emph{completion} of a pair-set $B$
if $A\overset{*}{\implies}B$ and $|A|=n_0+n_1$ where
$\sigma_i(A)=(n_0,n_1)$. Note that then
$A=\iota_{i,0}(\rho_{i=0}(A),\rho_{i=1}(A))$.

The following easy lemma motivates these notions and forms the cornerstone of our proofs.

\begin{lemma}
\label{lemma-simple}
Let $n\ge 1$ be a natural number. Then
\begin{enumerate}[\upshape (1)]
\item
if $A\in\conp_n$ and $A\overset {*}{\implies}B$ then $B\in\conp_
n$;
\item
if $n>1$ and $A,B\in\conp_{n-1}$, then $\iota_{i,k}(A,B)\in\conp_n$ for every $i\in[n]$ and $k\in[2]$;
\item
if $A$ is balanced and $B\overset{*}{\implies}A$ then $B$ is balanced;
\item
if $A\in\conp_{n}$ then $A$ is balanced;
\item
if $A\in\oddp_n$ then $A$ is balanced;
\item
if $A\in\dimp_n$ and $|A|=n$, then there exists $i\in[n]$ such that $i$ is separating for $A$;
\item
If $B\in\allp_n$ is an $i$-completion of $A$ and $\rho_{i=0}(B),\rho_{i=1}(B)\in\conp_{n-1}$, then $A\in\conp_n$. 
\end{enumerate}
\end{lemma}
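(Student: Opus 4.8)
The plan is to build everything from two elementary operations on pair-sets: the single merging step $\implies$, exploited in parts (1) and (3), and the splitting $\iota_{i,k}$ along one coordinate, exploited in part (2); part (7) then drops out as a one-line composition of these. For part (1) it suffices, by transitivity of $\overset{*}{\implies}$, to treat a single step $A \implies B$: given a connector of $A$, I would splice the two paths $P_{\alpha,\beta}$ and $P_{\alpha',\beta'}$ together across the edge $\{\beta,\beta'\}$ (available since $\beta \oplus \beta' = e_i$). Vertex-disjointness of the connector makes the result a simple path from $\alpha$ to $\alpha'$, and leaving the remaining paths untouched yields a connector of $B$. Part (2) is the reverse move: the vertices with $i$-th coordinate $k$ span a copy of $Q_{n-1}$ via $\iota_{i=k}$, and those with coordinate $1-k$ another; embedding a connector of $A$ into the first and of $B$ into the second gives disjoint path families that together exhaust $V_n$, hence a connector of $\iota_{i,k}(A,B)$.

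Part (7) now follows at once. Since $B$ is an $i$-completion, $B = \iota_{i,0}(\rho_{i=0}(B),\rho_{i=1}(B))$, so part (2) applied to the hypothesis $\rho_{i=0}(B),\rho_{i=1}(B) \in \conp_{n-1}$ gives $B \in \conp_n$; then $B \overset{*}{\implies} A$ together with part (1) gives $A \in \conp_n$.

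The parity claims are next. For part (3) I would reduce to one step $B \implies A$ and note $\chi(A) - \chi(B) = -\chi(\beta)-\chi(\beta')$, which vanishes because the adjacent vertices $\beta$ and $\beta'=\beta\oplus e_i$ have opposite parity; thus $\chi$ is constant along $\overset{*}{\implies}$ and balancedness is inherited. Part (5) is immediate, since each odd pair contributes $\chi(\alpha)+\chi(\beta) = 0$ to $\chi(A)$. Part (4) needs one global observation: parities alternate along any path, so a path between $\alpha$ and $\beta$ satisfies $\chi(\alpha)+\chi(\beta) = 2\sum_{v\in P}\chi(v)$; summing this identity over a connector turns the right-hand side into $2\sum_{v\in V_n}\chi(v) = 0$, as $Q_n$ has equally many vertices of each parity, giving $\chi(A) = 0$.

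I expect part (6) to be the main obstacle, being the only claim requiring a genuine case analysis. Taking two edge-pairs of $A$ with flip-directions $j$ and $j'$, I would argue as follows. If $j = j'$, disjointness forces them to disagree at some coordinate $i \ne j$ — otherwise an endpoint would coincide with the other pair's endpoint or its partner — and that $i$ is separating. If $j \ne j'$ and no coordinate outside $\{j,j'\}$ separated them, then all four endpoints would share every coordinate outside $\{j,j'\}$ and hence lie in a single $2$-dimensional subcube; but in $Q_2$ an edge flipping $j$ meets an edge flipping $j'$ in a vertex, contradicting disjointness. So a separating coordinate always exists, the two required edge-pairs being furnished by the definition of $\dimp_n$ when $|A| = n$ and $n \ne 4$.
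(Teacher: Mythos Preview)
Your argument is correct in every part. The paper itself omits the proof entirely, labeling the lemma ``easy'' and treating it as folklore, so there is no approach to compare against; what you have written is precisely the kind of elementary verification the authors had in mind. Your identity $\chi(\alpha)+\chi(\beta)=2\sum_{v\in P}\chi(v)$ for part~(4) is a clean way to globalize the parity count, and your case split in part~(6) --- same flip-direction versus different flip-directions, then reducing to a $Q_2$ incidence --- is exactly right and handles the only part requiring any thought.
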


Observe that if $A\in\allp_n$ is balanced, then there exists a perfect matching $R$ on even pairs from $A$ such that  $\{\{\alpha ,\beta \},\{\alpha',\beta'\}\}\in R$ implies $\chi(\alpha,\beta)+\chi(\alpha',\beta')=0$. Note that then $|R|=\frac {|A|-\|A\|}2$. Such a perfect matching is called \emph{$(A,i)$-matching} for $i\in[n]$ if for every pair of degenerated pairs $\{\{\alpha ,\alpha \},\{\beta ,\beta \}\}\in R$ we have $\alpha (i)=\beta (i)$.

Finally, we recall several known results that shall be useful for our main theorem.

\begin{proposition}
\label{old-results-prop}
Let $n\ge 1$ be a natural number. Then
\begin{enumerate}[\upshape(1)]
\item{\rm\cite{H}}
\label{old-results-prop-part1}
a singleton pair-set $A$ belongs to $\conp_n$ if and only if $A$ is odd; \item{\rm\cite{LW}}
\label{old-results-prop-part2}
if $A=\{\{\alpha ,\beta \},\{\gamma ,\gamma \}\}\in\balp_n$, then $A\in\conp_n$;
\item{\rm\cite{D}}
\label{old-results-prop-part3}
$\oddp_n^2\subseteq\conp_n$;
\item{\rm\cite{CK}}
\label{old-results-prop-part5}
if $A\in\balp^2_n$ is pure with $||A||=0$ and $n\ge 4$, then $A\in\conp_n$;
\item{\rm\cite{CK}}
\label{old-results-prop-part6}
if $A\in\oddp^3_n$ and $n\ge 5$, then $A\in\conp_n$;
\item{\rm\cite{CG,CGL}}
\label{old-results-prop-part7}
if $A\in\balp_n^3$
with  $n\ge 4$
and $\|A\|=1$, then $A\in\conp_n$;
\item{\rm\cite{CG,CGL}}
\label{old-results-prop-part8}
if $A\in\balp_n^4$ is not pure and $n\ge 5$,
then $A\in\conp_n$;
\item{\rm\cite{CG}}
\label{old-results-prop-part9}
 if $A\in\balp_n^4$, $\|A\|=2$, $n\ge4$, and $A$ has two degenerated pairs then $A\in\conp_n$;
\item{\rm\cite{CG}}
\label{old-results-prop-part10}
 if $A\in\balp_n^5$, $n\ge5$ and $A$ has four degenerated pairs then $A\in\conp_n$;
\item{\rm\cite{GD}}
\label{old-results-prop-part11}
 $\balp_n^{\lfloor\frac{n-1}2\rfloor}\subseteq\conp_n$ for every $n\ge 1$;
\item{\rm\cite[Lemma~5.2]{CG}}
\label{old-results-prop-part12}
 if $A\subseteq V_n$ is a set of $k$ vertices of the same parity, $1\le k\le n$, then $|\{\alpha\in V_n\mid\{\alpha,\beta\}\in E(Q_n)\text{ for some } \beta\in A\}|\le1+kn-\binom{k+1}2$. 
\end{enumerate}
\end{proposition}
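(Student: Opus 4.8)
The plan is to treat Proposition~\ref{old-results-prop} as what it is: a compilation of facts from \cite{H,LW,D,CK,CG,CGL,GD}, so my goal is not to reprove them from scratch but to certify that each item is a faithful restatement in the present pair-set language and to supply self-contained arguments only for the two genuinely elementary items, the singleton case~(1) and the counting bound~(12). The organizing principle for the translation is Lemma~\ref{lemma-simple}(4)--(5): since every connectable pair-set is balanced, the recurring hypothesis ``$A\in\balp_n^k$'' is exactly the necessary parity constraint, and in each item it remains only to certify \emph{sufficiency} within the prescribed ranges of the cardinality $|A|$, the odd-count $\|A\|$, and the number of degenerate pairs.

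For part~(1) I would prove both directions directly. A singleton pair-set is pure, so $\alpha\ne\beta$ and connectability means precisely a Hamiltonian path of $Q_n$ from $\alpha$ to $\beta$. Necessity of oddness follows from bipartiteness: the colour classes of $Q_n$ are the two parity classes of $\chi$, and a Hamiltonian path on the even number $2^n$ of vertices alternates between classes, forcing its ends into opposite classes, i.e.\ $\chi(\alpha)\ne\chi(\beta)$. For sufficiency I would induct on $n$ via the gluing Lemma~\ref{lemma-simple}(2): choose a coordinate $i$, reduce along $\rho_{i=0}$ and $\rho_{i=1}$ so that each facet inherits an odd singleton, obtain Hamiltonian paths in both facets by the inductive hypothesis, and splice them through one matching edge using $\iota_{i,k}$. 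This is Havel's construction \cite{H}, and it furnishes the base on which the remaining parts rest.

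Parts~(2)--(11) are the small-cardinality connectability theorems of \cite{LW,D,CK,CG,CGL,GD}. Here I would not argue any new mathematics; instead I would verify, item by item, that the predicates \emph{balanced}, \emph{pure}, \emph{odd}, \emph{degenerated}, the counter $\|A\|$, and the cardinality bound $|A|\le k$ correspond exactly to the endpoint-parity and vertex-coincidence conditions under which each source constructs its connector, and that the admissible dimension $n$ agrees. Because balance is forced by Lemma~\ref{lemma-simple}(4), in every case the stated hypothesis is simultaneously the correct necessary condition and the hypothesis proved sufficient in the cited paper; the authors' own bound $\balp_n^{\lfloor(n-1)/2\rfloor}\subseteq\conp_n$ from \cite{GD} transfers verbatim as part~(11). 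I expect this translation bookkeeping to be the only real obstacle, since the various sources adopt incompatible notation and slightly different normalizations of parity, so the nontrivial effort is matching definitions rather than producing arguments.

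Finally, part~(12) is a purely extremal count with no reference to connectability, and I would present it self-containedly along the lines of \cite[Lemma~5.2]{CG}. The method is to estimate the neighbourhood $\bigcup_{\beta\in A}\{\alpha:\{\alpha,\beta\}\in E(Q_n)\}$ of the $k$ same-parity vertices by inclusion--exclusion, using the structural fact already exploited in the proof of Observation~\ref{obser-encom} that any two vertices of equal parity have at most two common neighbours; starting from the total degree $kn$ and accounting for the forced overlaps among the $\binom{k}{2}$ pairs under the constraint $k\le n$ produces the closed form recorded in \cite[Lemma~5.2]{CG}. Since this item is used only as a counting device in the later expansion estimates, quoting the source for the precise constant is sufficient.
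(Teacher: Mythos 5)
For items (1)--(11) your plan coincides with what the paper itself does: Proposition~2.2 is stated without proof, each item carrying only a citation, so your citation-plus-translation bookkeeping, together with Havel's classical bipartiteness-and-splicing argument for item (1), is exactly the paper's route (your description of the inductive step for (1) is slightly loose in the case $\alpha(i)=\beta(i)$, where the second facet inherits no endpoint and one must splice through an edge of the first path rather than give each facet a singleton, but you mention the edge splice and this is standard).

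Item (12), however, contains a genuine gap, and in fact your sketch cannot be repaired in the stated direction. As printed, the inequality $|\{\alpha\in V_n\mid \{\alpha,\beta\}\in E(Q_n)\text{ for some }\beta\in A\}|\le 1+kn-\binom{k+1}{2}$ is false: take $n=4$, $k=2$, $A=\{0000,1111\}$; both vertices have the same parity, their neighbourhoods are disjoint, so the neighbourhood has $8$ vertices, while the bound gives $1+8-3=6$. What \cite[Lemma~5.2]{CG} provides (and what is useful in the paper's counting of encompassed and forbidden vertices) is the reverse inequality, a \emph{lower} bound on the neighbourhood of $k$ same-parity vertices; the $\le$ in the statement is a misprint for $\ge$. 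Your proposed derivation does not establish either direction: the fact that two same-parity vertices share at most two common neighbours bounds overlaps \emph{from above}, so inclusion--exclusion yields only the Bonferroni lower bound $|N(A)|\ge kn-2\binom{k}{2}$, and there are no ``forced overlaps'' to subtract for an upper bound, since same-parity vertices at distance at least $4$ share no neighbours at all. Moreover, even as a lower bound the pairwise count is strictly weaker than the stated constant for $k\ge3$: three pairwise distance-$2$ vertices such as $110\ldots0$, $101\ldots0$, $011\ldots0$ have neighbourhood of size exactly $3n-5=1+3n-\binom{4}{2}$, whereas Bonferroni gives only $3n-6$. Closing this gap requires the additional observation exploited in the source's proof, namely that a vertex adjacent to $d$ of the chosen vertices accounts for $\binom{d}{2}$ pairwise intersections while contributing only $d-1$ to the overcount, so the naive $2\binom{k}{2}$ subtraction overestimates the loss; your two-sentence sketch elides exactly this point, which is the entire content of the lemma.
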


At the end of this section we state several folklore and simple results on connectable pair-sets in hypercubes of dimension at most $5$. Note that  these statements deal only with pair-sets of small dimensions, and therefore we were able to verify them by a computer search. Formal proofs are --- due to their technicality and length --- moved to another paper \cite{DGK1}.

We say that pair-sets $A,B\in\allp_n$ are \emph{isomorphic} if there
exists an automorphism $f$ of a hypercube $Q_n$ such that
$B=\{\{f(\alpha),f(\beta)\}\mid\{\alpha,\beta\}\in A\}$. The following Proposition~\ref{dimen-3} may be verified by inspection.

\begin{proposition}
\label{dimen-3}
Let $A\in\balp_3$. Then
\begin{enumerate}[\upshape(1)]
\item
\label{dimen-3-1}
if $|A|=2$ then $A$ is connectable if and only if $A$ is not isomorphic to the pair-sets $C_0$ and $C_1$
on Fig.~\ref{fig:Fig1};
\item
\label{dimen-3-2}
if $A$ is diminishable then $A$ is connectable.
\end{enumerate}
\end{proposition}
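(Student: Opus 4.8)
The plan is to treat Proposition~\ref{dimen-3} as a finite verification in $Q_3$, organised so that the results collected in Proposition~\ref{old-results-prop} dispose of all but a few genuinely three-dimensional configurations. For part~\eqref{dimen-3-1}, let $A=\{p,q\}\in\balp_3$ with $|A|=2$ and write $\chi(p)=\chi(\alpha)+\chi(\beta)\in\{-2,0,2\}$ for $p=\{\alpha,\beta\}$. Balance forces $\chi(p)=-\chi(q)$, so either both pairs are odd (each with $\chi=0$) or one pair joins two vertices of even parity while the other joins two of odd parity ($\{\chi(p),\chi(q)\}=\{2,-2\}$); the mixed possibility $\|A\|=1$ cannot occur, as it would give $\chi(A)=\pm2\neq0$. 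If both pairs are odd then $A\in\oddp_3^2\subseteq\conp_3$ by Proposition~\ref{old-results-prop}\eqref{old-results-prop-part3}. Otherwise, if one pair is degenerate then $A\in\conp_3$ by Proposition~\ref{old-results-prop}\eqref{old-results-prop-part2}, so the only configurations left open are the pure ones---precisely the case that Proposition~\ref{old-results-prop}\eqref{old-results-prop-part5} settles only for $n\geq4$.

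I would dispatch these pure configurations by classifying them up to isomorphism. A parity-preserving automorphism of $Q_3$ normalises the even-parity pair to $\{000,011\}$; its stabiliser has order four and acts on the six candidate odd-parity pairs with exactly three orbits. The subgraph of $Q_3$ induced by $\bigcup A$ separates these orbits---it is $2K_2$, a $4$-cycle, or $P_4$---so the three classes are pairwise non-isomorphic. For the $P_4$-class I would exhibit an explicit connector, whereas the remaining two classes are exactly the pair-sets $C_0$ and $C_1$ of Fig.~\ref{fig:Fig1} (one inducing $2K_2$, the other a $4$-cycle), which I claim admit none.

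Proving that $C_0$ and $C_1$ are not connectable is the heart of the matter, since here one must exclude every connector rather than produce one. In any connector the four prescribed endpoints have degree one and the four remaining vertices have degree two, and no connector edge may join vertices of different pairs. For $C_1$ the four endpoints form a facial $4$-cycle whose diagonals are the two prescribed pairs; every edge of that face then joins vertices in different paths and so is unavailable, which forces each endpoint's unique edge to cross to the opposite face and leaves the two endpoints of each path at non-adjacent corners there, impossible to join. For $C_0$, whose two pairs lie one in each dimension-$0$ face, the two cut-edges incident to endpoints are likewise forbidden; since each path must cross the cut an even number of times and only two cut-edges remain, no admissible distribution covers all four interior vertices. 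With the reductions above this establishes both directions of part~\eqref{dimen-3-1}.

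For part~\eqref{dimen-3-2}, a diminishable $A\in\dimp_3$ is odd with either $|A|\le2$ or $|A|=3$, the latter carrying at least two edge-pairs and satisfying $\enc(A)=\emptyset$. The cases $|A|=1$ and $|A|=2$ are immediate from Proposition~\ref{old-results-prop}\eqref{old-results-prop-part1} and~\eqref{old-results-prop-part3}. When $|A|=3$, each of the three odd pairs joins one vertex of each parity, so $\bigcup A$ omits exactly one vertex of each parity; the condition $\enc(A)=\emptyset$ forces these two omitted vertices to be adjacent, and after normalising them to $\{000,100\}$ the possible matchings of the two remaining triples reduce to a short list for which I would verify a connector directly (alternatively, items~(6) and~(7) of Lemma~\ref{lemma-simple} split such an $A$ across a separating coordinate into two connectable pair-sets of $Q_2$). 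The main obstacle throughout is the non-connectability argument for $C_0$ and $C_1$: one must both certify that the isomorphism classification is complete and rule out all connectors, and the forced-edge and cut-parity arguments sketched above are the cleanest route I see to the latter.
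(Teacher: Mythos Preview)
Your proposal is correct. The paper itself offers no proof of Proposition~\ref{dimen-3} beyond the remark that it ``may be verified by inspection'' (with formal details deferred to a companion paper), so your structured case analysis---reducing via Proposition~\ref{old-results-prop}\eqref{old-results-prop-part2},\eqref{old-results-prop-part3} to the pure $\|A\|=0$ configurations, classifying those into three isomorphism classes, and giving forced-edge/cut-parity arguments for the non-connectability of $C_0$ and $C_1$---is exactly the kind of inspection the paper has in mind, only made explicit. Your treatment of part~\eqref{dimen-3-2} via the observation that $\enc(A)=\emptyset$ forces the two omitted vertices to be adjacent, followed by a short enumeration of admissible matchings, is likewise sound and in the same spirit.
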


\begin{proposition}
\label{dimen-4}
Let $A\in\oddp_4$ be a pair-set. Then
\begin{enumerate}[\upshape(1)]
\item
\label{dimen-4-1}
if $|A|\le 3$ then $A$ is connectable if and only if $A$ is not isomorphic to the pair-set $C_2$ on Fig.~\ref{fig:Fig2};
\item
\label{dimen-4-2}
if $|A|=4$, $A$ contains at least three edge pairs and $\enc(A)=\emptyset$, then $A$ is connectable.
\end{enumerate}
\end{proposition}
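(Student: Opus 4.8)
My plan is to reduce each four-dimensional instance to a pair of three-dimensional ones, for which connectability is already settled by Proposition~\ref{dimen-3}. The engine of the reduction is Lemma~\ref{lemma-simple}: to connect a pair-set $A\in\allp_4$ it suffices, by its parts (2) and (7), to find a coordinate $i\in[4]$ and an $i$-completion $B$ of $A$ (so $B\overset{*}{\implies}A$ and every pair of $B$ lies inside one hyperplane $x_i=\text{const}$, obtained by ``un-merging'' the hyperplane-crossing pairs of $A$ along suitable $e_i$-edges) such that both halves $\rho_{i=0}(B)$ and $\rho_{i=1}(B)$ are connectable in $Q_3$. Since $Q_3$ has only $8$ vertices and Proposition~\ref{dimen-3} characterizes connectability there, verifying that a good coordinate and a good choice of un-merging edges exist amounts to a finite inspection of the configurations up to isomorphism.

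\textbf{The connectable cases.} For $|A|\le 2$ nothing is needed: $\oddp_4^2\subseteq\conp_4$ by Proposition~\ref{old-results-prop}(3), and $C_2$ has three pairs, so it is not among these. For $|A|=3$ not isomorphic to $C_2$, and for the situation of part~(2) ($|A|=4$ with at least three edge-pairs and $\enc(A)=\emptyset$), I would enumerate the configurations up to isomorphism and, for each, exhibit a coordinate $i$ together with un-merging edges making both halves either diminishable (hence connectable by Proposition~\ref{dimen-3}(2)) or of size $\le 2$ and not isomorphic to $C_0,C_1$ (hence connectable by Proposition~\ref{dimen-3}(1)). Balancedness of the halves comes for free, since odd pair-sets are balanced (Lemma~\ref{lemma-simple}(5)); in part~(2) the abundance of edge-pairs guarantees a separating coordinate (Lemma~\ref{lemma-simple}(6)) and the hypothesis $\enc(A)=\emptyset$ removes the only generic obstruction. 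The real content here is choosing the un-merging edges so that neither $Q_3$-half becomes a forbidden pattern.

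\textbf{Non-connectability of $C_2$.} For the exception I would argue directly. Up to isomorphism $C_2$ consists of the three Hamming-distance-$3$ (antipodal) pairs lying inside one $3$-face $F_0$ of $Q_4$, other than one pair; writing $F_0,F_1$ for the two faces obtained by splitting on the relevant coordinate, the two uncovered vertices $u,w\in F_0$ form the remaining antipodal pair of $F_0$ and satisfy $\chi(u)\ne\chi(w)$, while all of $F_1$ is uncovered. In any connector $u,w$ are interior, hence of degree $2$. The three $F_0$-neighbors of $u$ are exactly one endpoint from each of the three pairs, so $u$ cannot be joined to two of them (that would force two endpoints of distinct pairs onto a single three-vertex path); therefore $u$ must use its edge into $F_1$, and likewise $w$. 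Once these crossing edges are fixed, every prescribed endpoint has all its $F_0$-neighbors forbidden (they are endpoints of other pairs, or the now-saturated $u,w$), so each endpoint is forced across into $F_1$ and no path can return to $F_0$. Thus the connector induces on $F_1$ three vertex-disjoint paths partitioning its $8$ vertices, with entry/exit vertices determined by how $u,w$ attach to the pairs. A short check of the (essentially unique, by symmetry) distributions shows that in each case the three entry/exit pairs are pairwise non-adjacent in $F_1\cong Q_3$ and their potential ``short'' connecting vertices are themselves entry/exit vertices of other paths; this forces each $F_1$-path to contain at least four vertices (at least five when its two ends have equal parity), so together they would occupy more than the eight vertices of $F_1$ --- a contradiction. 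Hence $C_2\notin\conp_4$.

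\textbf{Main obstacle.} The delicate point is twofold. First, the positive direction requires showing that \emph{every} non-$C_2$ configuration admits some coordinate and un-merging for which both $Q_3$-halves avoid the exceptional patterns $C_0,C_1$; organizing this finite analysis so that $C_2$ emerges as the unique obstruction is the bulk of the work, and is exactly where a computer verification is convenient. Second, the non-connectability of $C_2$ is subtle precisely because the usual obstruction is unavailable: a size-$3$ odd pair-set encompasses no vertex, since each vertex has its four neighbors in the opposite parity class of which only three lie in $\bigcup A$, so $\enc(C_2)=\emptyset$; one is therefore compelled to argue globally, through the forced traffic between the two $3$-faces, as above.
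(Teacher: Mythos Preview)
The paper does not actually prove this proposition: it states that the small-dimension results were verified by computer search and that formal proofs are deferred to the companion paper~\cite{DGK1}. Your positive direction is therefore methodologically the same as the paper's: a finite case analysis (organised via Lemma~\ref{lemma-simple}(7) and Proposition~\ref{dimen-3}) that you describe in words rather than execute. There is nothing to compare here beyond noting that neither you nor the paper carries out the enumeration explicitly in this text.

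Your non-connectability argument for $C_2$, by contrast, is genuine added value over what the paper records. It is essentially correct. A couple of small imprecisions are worth tightening. First, it is not true that ``each endpoint is forced across into $F_1$'': the endpoint $\alpha_j$ adjacent to $u$ (and $\beta_k$ adjacent to $w$) stays in $F_0$ for one step before crossing. What is true, and what you implicitly use, is that once you have determined the two path-edges at $u$ and at $w$, \emph{all} path-edges incident to $F_0$-vertices are forced, so there are exactly six crossing edges and each of the three paths has a single $F_1$-segment. Second, your vertex-count at the end is stronger than needed: in the case $j\ne k$ the two same-parity $F_1$-pairs are at distance~$2$ in $Q_3$, so their segments have at least $3$ vertices each, and the antipodal pair needs at least $4$; already $3+3+4=10>8$. (Your sharpening to ``at least five'' is correct---the two common neighbours of each same-parity pair are endpoints of other segments---but unnecessary.) In the case $j=k$ the three $F_1$-pairs are antipodal in $Q_3$, so one can either count $4+4+4>8$ or simply observe that the two non-endpoint vertices of $F_1$ are encompassed, which is the quicker kill.
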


\begin{figure}[!h]
\centering
\hspace{-1cm}%
\scalebox{0.8}{\includegraphics{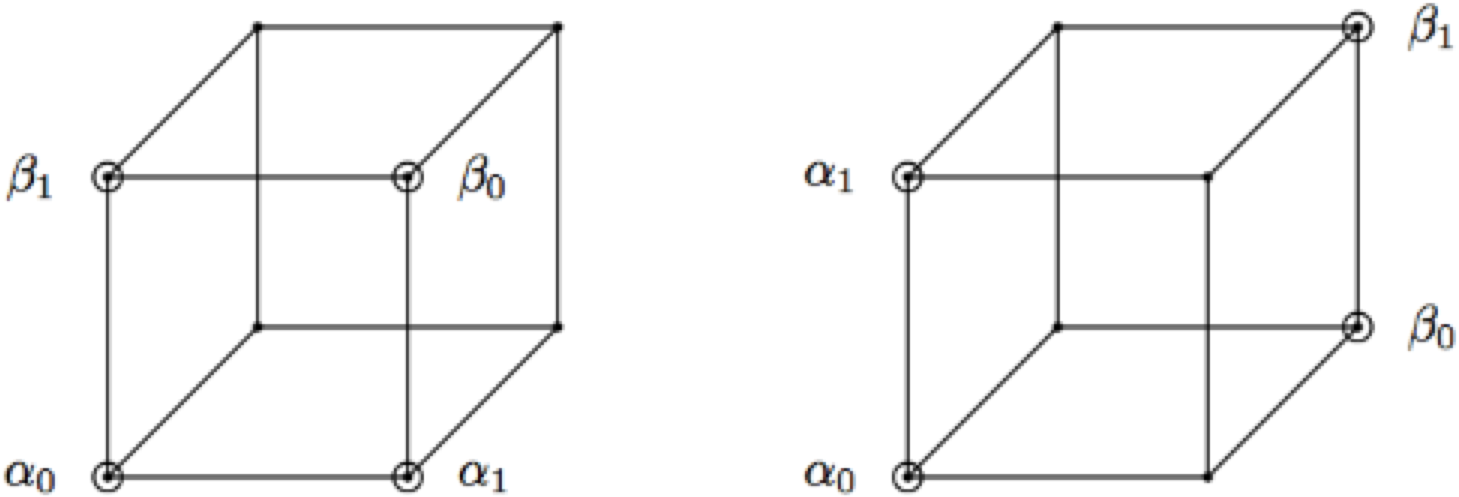}}
\begin{minipage}[t]{.23\linewidth}
\centering
\subcaption*{$C_0=\{\{\alpha_0,\beta_0\},\{\alpha_1,\beta_1\}\}$}\label{fig:C0}
\end{minipage}%
\begin{minipage}[t]{.38\linewidth}
\centering
\subcaption*{$C_1=\{\{\alpha_0,\beta_0\},\{\alpha_1,\beta_1\}\}$}\label{fig:C1}
\end{minipage}
\caption{The only non-connectable balanced pair-sets in $Q_3$\label{fig:Fig1}}
\end{figure}

\begin{figure}
\centering
\hspace{-1cm}%
\begin{minipage}[t]{0.5\linewidth}
\centering
\scalebox{0.8}{\includegraphics{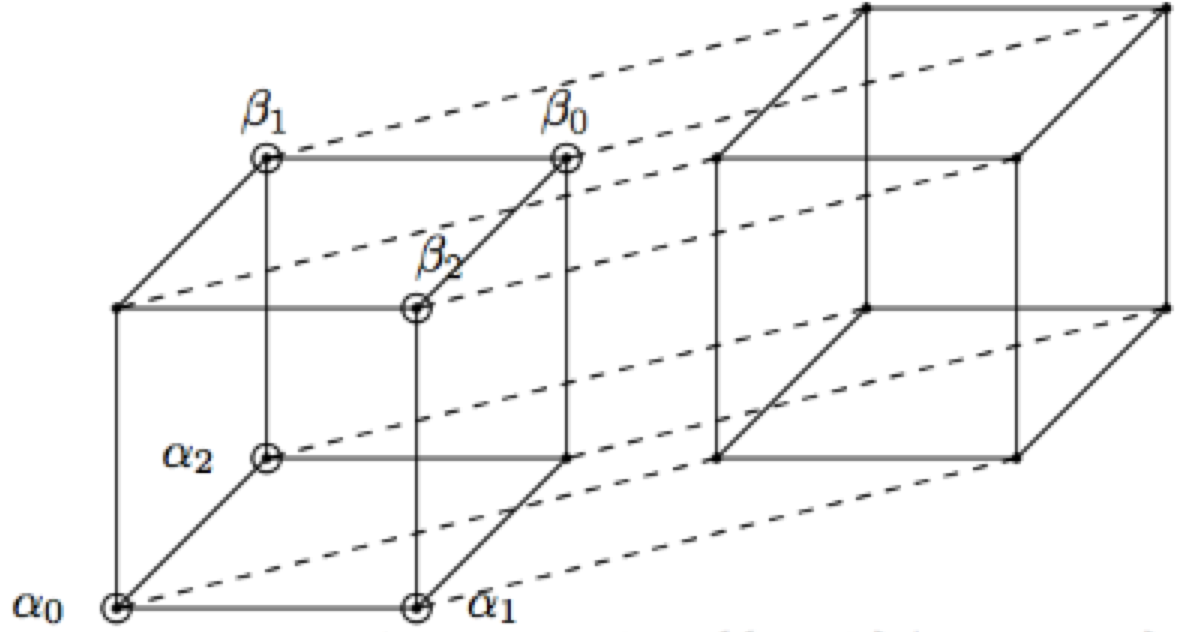}}
\subcaption*{$C_2=\{\{\alpha_i,\beta_i\}\mid i=0,1,2\}$}\label{fig:C2}
\end{minipage}
\caption{The only non-connectable odd pair-set in $Q_4$\label{fig:Fig2}}
\end{figure}

\begin{remark} Assumptions of Proposition~\ref{dimen-4}\,\eqref{dimen-4-2} are the best possible in the following sense: By a computer search we identified 53 non-connectable pair-sets $A\in\oddp_4$ such that $|A|=4$, $A$ contains two edge pairs and $\enc(A)=\emptyset$.
\end{remark}

In the rest of this text we often use the following consequence of Proposition~\ref{dimen-4}.

\begin{corollary}
\label{cor-dim-4}
If $A\in\oddp_4$ with $|A|=3$ and either $A$ contains an edge pair or for every $k\in[4]$ there exists a pair $\{\alpha,\beta\}\in A$ with $\alpha(k)\ne\beta(k)$, then $A$ is connectable.
\end{corollary}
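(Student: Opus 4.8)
The plan is to reduce the statement to Proposition~\ref{dimen-4}\,\eqref{dimen-4-1}. Since $A\in\oddp_4$ with $|A|=3$, that proposition asserts that $A$ is connectable if and only if $A$ is not isomorphic to the pair-set $C_2$ of Fig.~\ref{fig:Fig2}. Hence it suffices to prove that, whenever $A$ satisfies either of the two stated conditions, $A$ cannot be isomorphic to $C_2$.

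To this end I would first record two structural features of $C_2$ read off from Fig.~\ref{fig:Fig2}: \emph{(i)} $C_2$ contains no edge-pair; and \emph{(ii)} there is a coordinate $k\in[4]$ in which all three pairs of $C_2$ agree---in fact all six vertices of $\bigcup C_2$ share a common value $c\in[2]$ in coordinate $k$, so that $\bigcup C_2$ lies in a single facet $\{x\in V_4\mid x(k)=c\}$. Feature (ii) is exactly the reason $C_2$ fails to be diminishable; here one uses the easily checked fact that the only subsets $X\subseteq V_4$ whose induced subgraph in $Q_4$ is isomorphic to $Q_3$ are these eight facets.

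Next I would verify that both features are invariant under automorphisms of $Q_4$. An automorphism maps edges to edges, so ``contains an edge-pair'' is preserved. Moreover every automorphism of $Q_n$ is a composition of a coordinate permutation with a translation $x\mapsto x\oplus v$: a permutation merely relabels the distinguished coordinate $k$, while a translation preserves each relation $\alpha(k)=\beta(k)$ because $\oplus$ preserves equality coordinatewise. Hence ``there is a coordinate in which all pairs agree'' is also preserved, and consequently every $A\cong C_2$ again satisfies (i) and (ii).

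Finally I would combine these observations. If $A$ contains an edge-pair, then the invariance of (i) gives $A\not\cong C_2$. If instead for every $k\in[4]$ some pair of $A$ differs in coordinate $k$, then $A$ negates (ii), and the invariance of (ii) again gives $A\not\cong C_2$. In either case Proposition~\ref{dimen-4}\,\eqref{dimen-4-1} shows that $A$ is connectable. The only point requiring genuine care is reading the structure of $C_2$ correctly from Fig.~\ref{fig:Fig2}---in particular, confirming that the two conditions of the corollary are, respectively, the negations of features (i) and (ii); once this is pinned down, the automorphism-invariance and the final combination are routine.
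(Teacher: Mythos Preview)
Your approach is correct and is precisely what the paper intends: the corollary is stated without proof as a ``consequence of Proposition~\ref{dimen-4}'', and your argument---showing that each hypothesis excludes isomorphism with $C_2$ and then invoking Proposition~\ref{dimen-4}\,\eqref{dimen-4-1}---is the natural way to fill that in. One small remark: for the second condition you only need the weaker feature ``there exists $k$ with $\alpha(k)=\beta(k)$ for all $\{\alpha,\beta\}\in C_2$'' (not that all six vertices share the same value in coordinate $k$), and your invariance argument already establishes exactly this weaker property; the detour through the classification of induced $Q_3$'s in $Q_4$ is therefore unnecessary, though harmless.
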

The following theorem serves as an initial step for the inductive construction, described in the next section. 
Note that we were able to verify its statement by a computer search.
\begin{theorem}
\label{dimen-5}
$\dimp_5\subseteq\conp_5$.
\end{theorem}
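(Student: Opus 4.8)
The plan is to reduce the five-dimensional problem to the four-dimensional one, for which connectability of odd pair-sets is completely understood through Proposition~\ref{dimen-4} and Corollary~\ref{cor-dim-4}. The two tools that make such a reduction possible are Lemma~\ref{lemma-simple}\,(2) and (7): if I can split $Q_5$ along some coordinate $i\in[5]$ into two copies of $Q_4$ so that, after possibly performing a few $\implies$-moves, no pair crosses between the copies and the resulting projections $\rho_{i=0}(B),\rho_{i=1}(B)$ are connectable in $Q_4$, then $A$ itself is connectable. The whole argument stays inside odd pair-sets: the text already notes that $\rho_{i=k}$ sends odd pair-sets to odd pair-sets, and a direct parity check shows that the merge in $\implies$ turns two odd pairs into a single odd pair (the two merged endpoints have opposite parity, hence are distinct), so every $B$ with $A\overset{*}{\implies}B$ remains odd and each nonempty projection lies in $\oddp_4$.

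The small cases are immediate. If $|A|\le 3$ then $A$ is connectable directly: a singleton by Proposition~\ref{old-results-prop}\,(1), the case $|A|=2$ by $\oddp_n^2\subseteq\conp_n$ (Proposition~\ref{old-results-prop}\,(3)), and the case $|A|=3$ by Proposition~\ref{old-results-prop}\,(5), which gives $\oddp_5^3\subseteq\conp_5$ with no exception in dimension five. So the real content lies in $|A|=4$ and $|A|=5$, neither of which is covered by a cited result for \emph{pure} odd pair-sets (Proposition~\ref{old-results-prop}\,(8) requires impurity and Proposition~\ref{old-results-prop}\,(11) reaches only $\balp_5^2$).

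For these two cases I would argue by counting over the five coordinates. For $|A|=4$ I pick a coordinate $i$ minimizing the number of crossing pairs (those with $\alpha(i)\ne\beta(i)$); since the crossings summed over the five directions equal $\sum_{\{\alpha,\beta\}\in A}d(\alpha,\beta)$, averaging locates a direction with few crossings, which are then eliminated by $\implies$-moves to yield an $i$-completion $B$ with $|B|\le 4$. Splitting $B$ along $i$ gives two halves whose sizes sum to $|B|$; aiming for halves of size at most $3$, each half of size $\le 2$ is connectable by Proposition~\ref{old-results-prop}\,(3), and each half of size $3$ is connectable unless isomorphic to $C_2$, a possibility excluded via the edge-pair or split-coordinate criterion of Corollary~\ref{cor-dim-4}. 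For $|A|=5$ the diminishability hypotheses do the steering: Lemma~\ref{lemma-simple}\,(6) yields a separating coordinate $i$ placing an edge-pair in each copy of $Q_4$, so both halves are nonempty and balanced, while $\enc(A)=\emptyset$ together with Observation~\ref{obser-encom} (applicable since $|A|\le 5\le 2\cdot 5-3$) prevents an encompassed vertex from obstructing a size-$4$ half, which is then handled by Proposition~\ref{dimen-4}\,(2).

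The main obstacle is precisely the presence of the exceptional configuration $C_2$ in dimension four. Because $Q_4$ harbors a non-connectable odd pair-set, a careless split can send a half onto (an isomorphic copy of) $C_2$, or onto a size-$4$ pair-set that fails the ``at least three edge-pairs and $\enc=\emptyset$'' requirement of Proposition~\ref{dimen-4}\,(2) even though $A$ only guarantees two edge-pairs overall; so the crux is to show that among the five directions and the available $\implies$-completions there is always a choice avoiding every bad half simultaneously. Establishing this avoidance amounts to inspecting a bounded but sizeable list of near-extremal configurations—pairs clustered into few coordinates, or projections repeatedly landing on $C_2$—and it is this finite case analysis, rather than any single clean inequality, that I expect to be the real work, and which is exactly what the computer search certifies.
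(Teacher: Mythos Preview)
The paper does not give a proof of this theorem at all: it states explicitly that ``we were able to verify its statement by a computer search'' and leaves it at that. Your proposal is therefore not competing with a written argument; it is an attempt to sketch what a by-hand proof \emph{might} look like, and you yourself conclude that the decisive step---ruling out that every available split lands one half on a copy of $C_2$ or on a size-$4$ configuration failing the three-edge-pair hypothesis of Proposition~\ref{dimen-4}\,\eqref{dimen-4-2}---``is exactly what the computer search certifies.'' So in substance you and the paper agree: the theorem is established by exhaustive verification, not by a structural argument.

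That said, your outline is not yet a proof even modulo a finite check, because two of the reductions you describe do not work as stated. For $|A|=4$, the averaging bound $\sum_i(\text{crossings in direction }i)=\sum_{\{\alpha,\beta\}\in A}d(\alpha,\beta)$ can be as large as $20$ (four pairs at Hamming distance $5$ are not excluded by disjointness in $Q_5$), so the minimum over $i$ can be $4$; every step~\eqref{construction-simple:ii} of Construction~\ref{construction-simple} then doubles a crossing pair, and you may end with $|B|=8$ and two halves of size $4$ each, not $\le 3$. For $|A|=5$, a separating $i$ gives one edge-pair per side, but Proposition~\ref{dimen-4}\,\eqref{dimen-4-2} demands \emph{three} edge-pairs, and nothing in your argument manufactures the missing ones; as the paper's Remark after Proposition~\ref{dimen-4} records, there are $53$ non-connectable odd $4$-pair-sets in $Q_4$ with two edge-pairs and empty $\enc$, so this gap is real. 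Your framework is a sensible scaffold for organising a computer search, but as a self-contained proof it has genuine holes at exactly the places you flag.
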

\section{Induction step}
\label{section:induction-step}
First we present a general construction for a reduction of dimension.  Under some conditions imposed on a given pair-set $A\in\allp_n$, we construct two pair-sets in $\allp_{n-1}$ such that if these pair-sets are connectable, then $A$ is also connectable.  This construction exploits the concept of $i$-completion. The section is concluded when we prove that under some conditions, this construction may be applied to diminishable pairs.
\begin{construction}
\label{construction-simple}
Let $i\in[n]$ and $A\in\allp_n$ be a non-empty balanced pair-set.  Choose an $(A,i)$-matching $R$, set
$B=\emptyset$ and $A'=A$ and repeat one of the following steps whenever it is possible
until $A'=\emptyset$:
\begin{enumerate}[\upshape(i)]
\rm
\item\label{construction-simple:i}
if $\{\alpha ,\beta \}\in A'$ is an odd pair and $\alpha (i)=\beta (i)$, then add $\{\alpha ,\beta \}$ to $B$ and delete $\{\alpha ,\beta \}$ from $A'$;
\item \label{construction-simple:ii}
if $\{\alpha ,\beta \}\in A'$ is an odd pair and $\alpha (i)\ne\beta (i)$, then choose $\gamma\in V_n$ such that $\gamma (i)=\alpha (i)$, $\chi(\alpha)\ne\chi(\gamma)$ and $\gamma ,\gamma\oplus e_i\notin\bigcup A\cup\bigcup B$, add $\{\alpha ,\gamma \}$ and $\{\gamma\oplus e_i,\beta \}$ to $B$ and delete $\{\alpha ,\beta \}$ from $A'$;
\item\label{construction-simple:iii}
if $\{\alpha ,\beta \},\{\alpha',\beta'\}\in A'$ are even such that $\{\{\alpha ,\beta \},\{\alpha',\beta'\}\}\in R$ and $\alpha (i)=\beta (i)=\alpha'(i)=\beta'(i)$,
then add $\{\alpha ,\beta \}$ and $\{\alpha',\beta'\}$ to $B$ and delete $\{\alpha ,\beta \}$ and $\{\alpha',\beta'\}$ from $A'$;
\item\label{construction-simple:iv}
if $\{\alpha ,\beta \},\{\alpha',\beta'\}\in A'$ are even such that $\{\{\alpha ,\beta \},\{\alpha',\beta'\}\}\in R$ and $\beta'(i)=k\ne\alpha (i)=\beta (i)=\alpha'(i)$ for some $k\in[2]$ --- choose $\gamma\in V_n$ such that $\gamma (i)\ne k$, $\chi(\gamma)=\chi(\alpha')$ and $\gamma ,\gamma\oplus e_i\notin\bigcup A\cup\bigcup B$,
add $\{\alpha ,\beta \}$, $\{\alpha',\gamma \}$ and $\{\gamma\oplus e_i,\beta'\}$ to $B$ and delete $\{\alpha ,\beta \}$ and $\{\alpha',\beta'\}$ from $A'$;
\item\label{construction-simple:v}
if $\{\alpha ,\beta \},\{\alpha',\beta'\}\in A'$ are even such that $\{\{\alpha ,\beta \},\{\alpha',\beta'\}\}\in R$, $\alpha (i)=\beta (i)\ne\alpha'(i)=\beta'(i)$ and $\{\alpha',\beta'\}$ is not degenerated,
then choose distinct $\gamma ,\gamma'\in V_n$ such that $\gamma (i)=\gamma'(i)=\alpha'(i)$, $\chi(\gamma)=\chi(\gamma')\ne\chi(\alpha')$ and $\gamma ,\gamma',\gamma\oplus e_i,\gamma'\oplus e_i\notin\bigcup A\cup\bigcup B$,
add $\{\alpha ,\beta \}$, $\{\alpha',\gamma \}$, $\{\beta',\gamma'\}$, $\{\gamma\oplus e_i,\gamma'\oplus e_i\}$ to $B$ and delete $\{\alpha ,\beta \}$ and $\{\alpha',\beta'\}$ from $A'$;
\item\label{construction-simple:vi}
if $\{\alpha ,\beta \},\{\alpha',\beta'\}\in A'$ are even such that $\{\{\alpha ,\beta \},\{\alpha',\beta'\}\}\in R$, $\alpha (i)\ne\beta (i)$, $\alpha'(i)\ne\beta'(i)$, then --- assuming without loss of generality that $\alpha (i)=\alpha'(i)$ ---
choose distinct $\gamma,\gamma'\in V_n$ such that $\gamma (i)=\gamma'(i)=\alpha (i)$, $\chi(\gamma)=\chi(\alpha)$, $\chi(\gamma')=\chi(\alpha')$ and $\gamma ,\gamma',\gamma\oplus e_i,\gamma'\oplus e_i\notin\bigcup A\cup\bigcup B$, add $\{\alpha ,\gamma \}$, $\{\alpha',\gamma'\}$, $\{\gamma\oplus e_i,\beta \}$, $\{\gamma'\oplus e_i,\beta'\}$ to $B$ and delete $\{\alpha ,\beta \}$ and $\{\alpha',\beta'\}$ from $A'$.
\end{enumerate}
If the vertex $\gamma$ required for steps \eqref{construction-simple:ii} or \eqref{construction-simple:iv} does not exist, or the vertices $\gamma$ and $\gamma'$ required for steps \eqref{construction-simple:v} or \eqref{construction-simple:vi} do not exist, then Construction~\ref{construction-simple} ends unsuccessfully, i.e. with $A'\ne\emptyset$. If for each of the steps \eqref{construction-simple:ii},\eqref{construction-simple:iv},\eqref{construction-simple:v} or \eqref{construction-simple:vi} the required vertices exist, then Construction~\ref{construction-simple} ends successfully and creates a set $B$. The set of all sets $B$ that were successfully constructed from a set $A$ by Construction~\ref{construction-simple} is denoted by $\mathcal C_i(A)$.
 \end{construction}
Observe that every $B\in\mathcal C_i(A)$ is a pair-set. Indeed, if a step of Construction~\ref{construction-simple} treats a non-degenerate pair $\{\alpha,\beta\}$, then it adds a non-degenerate pair into $B$.  If the sets $B$ and $A'$ were constructed by execution of some sequence of successfull steps of Construction~\ref{construction-simple}, then $B\cup A'$ is a pair-set that is called a \emph{partial} $i$-\emph{completion} of a balanced pair-set $A$  for an
$(A,i)$-matching $R$.
\begin{proposition}
\label{observation-on-construction-simple}
Let $i\in[n]$ and $A\in\balp_n$ such that $A\ne\emptyset$. Then
\begin{enumerate}[\upshape(1)]
\item \label{observation-on-construction-simple:1}
Construction~\ref{construction-simple} is non-deterministic: by a selection of $(A,i)$-matching $R$ and by choices in steps \eqref{construction-simple:ii}, \eqref{construction-simple:iv}, \eqref{construction-simple:v}, and \eqref{construction-simple:vi}, more pair-sets may be obtained.
\item \label{observation-on-construction-simple:2}
every partial $i$-completion of $A$ has the same family of degenerated pairs as $A$;
\item \label{observation-on-construction-simple:3}
$\rho_{i=k}(A)\subseteq\rho_{i=k}(B)$ for every partial $i$-completion $B$ of $A$ and both $k\in[2]$;
\item\label{observation-on-construction-simple:4}
every partial $i$-completion $B$ of $A$ is balanced and $B\overset{*}{\implies}A$;
\item\label{observation-on-construction-simple:5}
every $B\in\mathcal C_i(A)$ is an $i$-completion of $A$;
\item\label{observation-on-construction-simple:6}
if $B'\in\mathcal C_i(B)$ for a partial $i$-completion $B$ of $A$, then $B'\in\mathcal C_i(A)$.
\end{enumerate}
\end{proposition}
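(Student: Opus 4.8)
The plan is to organize the whole proof around the sequence of steps executed by Construction~\ref{construction-simple}, tracking how a partial $i$-completion $B\cup A'$ evolves. Part~(1) needs no work beyond inspection: the $(A,i)$-matching $R$ and the vertices $\gamma$ (resp.\ $\gamma,\gamma'$) demanded by steps (ii), (iv), (v), (vi) are only constrained to lie on a prescribed side of coordinate $i$, to carry a prescribed parity, and to be fresh, and these constraints typically leave several admissible choices, each leading to a different output. Parts (2), (3), (4) I would prove by induction on the number of steps already executed, verifying that a single step preserves the invariant in question; parts (5) and (6) are then consequences.

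The induction base is trivial, since initially $B\cup A'=A$. For the induction step I would split the six step-types into the \emph{transparent} steps (i), (iii), which merely move pairs from $A'$ into $B$ unchanged, and the \emph{active} steps (ii), (iv), (v), (vi), which delete one or two pairs and insert new ones built from a fresh vertex $\gamma$ (or $\gamma,\gamma'$). The heart of (4) is to check that in each active step the inserted pairs collapse back onto the deleted ones under $\implies$, by one or two merges along the edge $\{\gamma,\gamma\oplus e_i\}$ (and $\{\gamma',\gamma'\oplus e_i\}$); this gives $B\cup A'\overset{*}{\implies}A$ by transitivity, and then balance of every partial $i$-completion is immediate from Lemma~\ref{lemma-simple}(3), since $A$ is balanced. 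For (2) the key point is that each $\gamma$ is chosen outside $\bigcup A\cup\bigcup B$, so every inserted pair is non-degenerate, whereas a degenerate pair of $A$ is even and is always the pair kept unchanged: two matched degenerate pairs agree in coordinate $i$ by the definition of an $(A,i)$-matching and are handled by step (iii), while step (v) explicitly forbids degenerate $\{\alpha',\beta'\}$. Hence the family of degenerate pairs is invariant. Part~(5) then follows at once: successful termination means $A'=\emptyset$, and a one-line inspection shows every pair that any step places into $B$ has its two endpoints equal on coordinate $i$; thus $|B|=n_0+n_1$ for $\sigma_i(B)=(n_0,n_1)$, which together with $B\overset{*}{\implies}A$ is exactly the definition of an $i$-completion of $A$.

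For (3) I would again induct on the steps. The transparent steps and the active steps (ii), (iv), (vi) are harmless, because the pairs they delete are \emph{split} on coordinate $i$ and so contribute nothing to $\rho_{i=0}(A)$ or $\rho_{i=1}(A)$, while every pair of $A$ lying wholly on side $k$ is either still unprocessed in $A'$ or was copied unchanged into $B$. The one delicate step is (v), which deletes a pair $\{\alpha',\beta'\}$ that does lie wholly on one side; the remedy is that its endpoints $\alpha'$ and $\beta'$ reappear, on the same side, in the inserted pairs $\{\alpha',\gamma\}$ and $\{\beta',\gamma'\}$, so the projection survives at the level of the vertex sets $\bigcup\rho_{i=k}$---the reading consistent with the earlier remark that $\rho_{i=k}(\bigcup A)$ and $\bigcup\rho_{i=k}(A)$ need not coincide. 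I expect step (v) to be the only point in (3) that requires genuine care.

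The main obstacle is part (6). Here $B$ is a partial $i$-completion of $A$ and $B'\in\mathcal C_i(B)$; I first observe that $B'\overset{*}{\implies}B\overset{*}{\implies}A$ and that $B'$ is an $i$-completion, but this only shows $B'$ is \emph{some} $i$-completion of $A$, which is strictly weaker than membership in $\mathcal C_i(A)$. To obtain $B'\in\mathcal C_i(A)$ I would exhibit an actual successful run of Construction on $A$ producing $B'$, namely the concatenation of the steps $A\rightsquigarrow B$ (using the matching $R$) with the steps of the given run $B\rightsquigarrow B'$. Because $\bigcup A\subseteq\bigcup B$, the freshness conditions on the vertices chosen in the second run still hold relative to $A$, so each individual step of the concatenation is a legal step on $A$. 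The real difficulty is the matching: a single run on $A$ sends all inserted pairs directly into $B$ and never re-examines them, so its matching touches only the even pairs of $A$, whereas the run $B\rightsquigarrow B'$ uses a $(B,i)$-matching $R'$ that may pair an even pair freshly created in the first phase with an even pair inherited from $A$. The plan is to show that the freshly created even pairs arise in parity-cancelling groups confined to a single side of coordinate $i$, so that $R'$ may be assumed---without altering the output $B'$---to match these among themselves (then disposed of trivially by step (iii)) and to match the inherited even pairs exactly as the restriction of an $(A,i)$-matching. Reconciling $R$ and $R'$ into one $(A,i)$-matching in this way, and checking that the rearrangement leaves $B'$ unchanged, is the crux of the whole proposition.
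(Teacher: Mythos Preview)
Your inductive approach along the sequence of steps is exactly what the paper does, and your treatment of (1), (2), (4), (5) matches the paper's proof almost verbatim; your derivation of balance in (4) via Lemma~\ref{lemma-simple}(3) is actually cleaner than the paper's direct parity-count.

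Where your proposal diverges is in the level of care applied to (3) and (6), and here you are being considerably more scrupulous than the paper, which dispatches both with the single sentence ``follow from the definitions''. Your worry about step (v) in part (3) is well-founded: when two non-degenerate even pairs on opposite sides are $R$-matched, the pair playing the role of $\{\alpha',\beta'\}$ is genuinely broken up, so its projection as a \emph{pair} does not survive in $\rho_{i=k}$ of the partial completion. The paper does not comment on this, and the literal pair-set inclusion fails in that case; your vertex-set reinterpretation is the natural repair, but it changes the statement. Since (3) is never invoked later in the paper, this looks like a minor imprecision in the statement rather than a gap in the theory.

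For (6) you have put your finger on a genuine difficulty that the paper's one-line dismissal does not address: a run on the partial completion $B$ uses a fresh $(B,i)$-matching $R'$ that may couple newly created even pairs with inherited ones, and concatenating the two phases is not literally a single run on $A$ unless the matchings can be reconciled. Your observation that every step inserts its new even pairs in parity-cancelling groups confined to one side of coordinate $i$ is correct and is the key structural fact. What remains to be checked --- and what you flag as the crux --- is that replacing $R'$ by a matching that pairs these ``fresh'' even pairs internally (so step (iii) disposes of them) does not alter the output $B'$; this needs a short argument that the splitting performed by step (v) on a fresh pair can always be re-routed through the original $A$-pairs. The paper offers nothing here, so your plan is not merely an alternative but a necessary completion of an argument the paper leaves implicit.
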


\begin{proof}
Statements \eqref{observation-on-construction-simple:1} -- \eqref{observation-on-construction-simple:3} and \eqref{observation-on-construction-simple:6} follow from the definitions of  Construction~\ref{construction-simple} and partial $i$-completion. Referring to the notation of Construction~\ref{construction-simple}, if a step adds an even pair into $B$, then it actually adds two even pairs of distinct parities. Because $A$ is balanced, every partial $i$-completion of $A$ must be balanced as well. 
Moreover, if a step deletes some $\{\alpha,\beta\}$ from $A'$, then it adds into $B$ either $\{\alpha,\beta\}$, or $\{\alpha,\gamma\},\{\gamma\oplus e_i,\beta\}$, or $\{\alpha,\gamma \},\{\beta,\gamma'\},\{\gamma\oplus e_i,\gamma'\oplus e_i\}$ for suitable $\gamma, \gamma'$. Therefore if $B\cup A'\overset{*}{\implies}A$ before this step, then also $B\cup A'\overset{*}{\implies}A$ for the new values of $A'$ and $B$ after this step. 
It follows that after every successful sequence of steps of Construction~\ref{construction-simple} we have $B\cup A'\overset{*}{\implies}A$ 
and \eqref{observation-on-construction-simple:4} is proved.
Since every $\{\alpha,\beta\}\in B\in\mathcal C_i(A)$ satisfies $\alpha(i)=\beta(i)$, the statement \eqref{observation-on-construction-simple:5} is proved.
\end{proof}

A pair-set $B\in \mathcal C_i(A)$ is called a \emph{simple} $i$-\emph{completion} of $A$.
Our next important task is to find a condition under which every partial $i$-completion of $A$ can be extended to a simple $i$-completion of $A$.

\begin{lemma}
\label{lemma-on-construction-simple}
Let $i\in[n]$, $A\in\allp_n$ be a balanced pair-set, and $R$ be an
$(A,i)$-matching such that
$$2|A|-k_0-2k_1-k_2\le 2^{n-2}\quad\text{or}\quad A\text { is odd and }2|A|-k_0-1\le2^{n-2}$$
where
\begin{itemize}
\item
$k_0$ is the number of odd pairs $\{\alpha,\beta\}\in A$ with $\alpha(i)=\beta(i)$,
\item
$k_1$ is the number of $\{\{\alpha,\beta\},\{\alpha',\beta'\}\}\in R$ with $\alpha(i)=\beta(i)=\alpha'(i)=\beta'(i)$,
\item
$k_2$ is the number of $\{\{\alpha,\beta\},\{\alpha',\beta'\}\}\in R$ with $\alpha(i)=\beta(i)=\alpha'(i)\ne\beta'(i)$.
\end{itemize}
Then, for every partial $i$-completion $B$ of $A$ and either for every odd pair $\{\alpha,\beta\}\in A'$ or for every even pairs $\{\alpha,\beta\},\{\alpha',\beta'\}\in A'$ with $\{\{\alpha,\beta\}\{\alpha',\beta'\}\}\in R$
we can successfully execute the required step of Construction~\ref{construction-simple}.
Hence $\mathcal C_i(B)\ne \emptyset$ and, in particular, $\mathcal C_i(A)\ne\emptyset$. Moreover, for every $B\in\mathcal C_i(A)$ we have $|B|=2|A|-k_0-2k_1-k_2$ and $|\rho_{i=\ell}(B)|\le |A|$ for both $\ell\in[2]$.
\end{lemma}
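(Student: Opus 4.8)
The plan is to dispatch the two counting identities by bookkeeping, and then to prove solvability (existence of the required fresh vertices at every step) through a slice‑occupancy estimate resting on a single parity observation.

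For the size identities I would classify the pairs of $A$ by the step that processes them: $k_0$ odd pairs through step \eqref{construction-simple:i}, the remaining $\|A\|-k_0$ odd pairs through \eqref{construction-simple:ii}, and the $\tfrac{|A|-\|A\|}2$ matched even pairs split into $k_1$ through \eqref{construction-simple:iii}, $k_2$ through \eqref{construction-simple:iv}, and the rest through \eqref{construction-simple:v}/\eqref{construction-simple:vi}. Reading off how many pairs each step adds to $B$ (one for \eqref{construction-simple:i}, two for \eqref{construction-simple:ii} and \eqref{construction-simple:iii}, three for \eqref{construction-simple:iv}, four for \eqref{construction-simple:v},\eqref{construction-simple:vi}) and summing gives $|B|=2|A|-k_0-2k_1-k_2$. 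For $|\rho_{i=\ell}(B)|\le|A|$, note that every pair of $B$ has $\alpha(i)=\beta(i)$, so it lands in exactly one of $\rho_{i=0}(B),\rho_{i=1}(B)$, whose sizes sum to $|B|$; recording how many output pairs of each step fall into slice $\ell$ (steps \eqref{construction-simple:ii},\eqref{construction-simple:v},\eqref{construction-simple:vi} split evenly, \eqref{construction-simple:i},\eqref{construction-simple:iii},\eqref{construction-simple:iv} may favour one slice) a short computation yields $|\rho_{i=0}(B)|=|A|-k_0^{(1)}-2k_1^{(1)}-k_2^{(1)}$, where the superscripts count the pairs landing in slice $1$. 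As these are nonnegative, $|\rho_{i=0}(B)|\le|A|$, and symmetrically for the other slice.

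The core is solvability. I would work with the \emph{$i$-reduced parity} $\chi_i(\gamma)=\chi(\gamma)\cdot(-1)^{\gamma(i)}$, i.e.\ the parity of $\rho_{i=\gamma(i)}(\gamma)$, which splits $V_n$ into two classes $P_{+},P_{-}$ of size $2^{n-1}$. A key point is that any slice $\{\gamma:\gamma(i)=c,\ \chi(\gamma)=\ell\}$ together with its $e_i$-image is exactly one class $P_s$ with $s=\ell(-1)^c$. Whenever a step seeks $\gamma$ of prescribed $i$-coordinate and parity with $\gamma,\gamma\oplus e_i\notin U$, where $U=\bigcup A\cup\bigcup B$ is the current occupied set, the candidates $\gamma$ range over a slice of size $2^{n-2}$ inside one class $P_s$, and the number of blocked candidates is at most $|U\cap P_s|$ (each of $\gamma\in U$ and $\gamma\oplus e_i\in U$ kills one candidate, and both lie in $P_s$). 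Thus it suffices to show $|U\cap P_s|<2^{n-2}$ for the steps needing one vertex, and $|U\cap P_s|\le 2^{n-2}-2$ for step \eqref{construction-simple:v}, which needs two. I would then establish the crucial \emph{balance fact}: each single step places equally many vertices (its own endpoints plus the fresh $\gamma$'s it introduces) into $P_+$ and into $P_-$ — a one-line parity check in each of the six cases, using that an $(A,i)$-matching pairs even pairs of opposite total parity. Hence the vertices placed by the completed steps are exactly balanced between $P_+$ and $P_-$; and — as a statement about \emph{sizes only}, hence determined by $A$ and $R$ in advance and not assuming the construction finishes — the vertices the remaining steps will place are balanced class-wise as well. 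Splitting $U$ into ``placed by completed steps'' and ``originals of not-yet-processed pairs,'' and bounding the latter by half of its originals-plus-future-fresh-vertices, gives $|U\cap P_s|\le\tfrac12(|\bigcup A|+N)-|W_f\cap P_s|$, where $N=2|A|-2k_0-2k_2-4k_1$ is the total number of fresh vertices and $W_f\cap P_s$ the fresh vertices still to be placed in $P_s$. The first term equals $|B|-\tfrac d2$ with $d$ the number of degenerate pairs, so the hypothesis yields $|B|\le 2^{n-2}$ (or $\le 2^{n-2}+1$ in the odd case). Since the fresh vertices of the current step all lie in the very class $P_s$ being drawn from — two of them in \eqref{construction-simple:ii},\eqref{construction-simple:iv},\eqref{construction-simple:vi}, four in \eqref{construction-simple:v} — the term $|W_f\cap P_s|$ absorbs them and produces the needed strict slack. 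In the odd case only \eqref{construction-simple:i},\eqref{construction-simple:ii} occur, and \eqref{construction-simple:ii} again contributes the $-2$, so the weaker bound $|B|\le 2^{n-2}+1$ still suffices.

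The main obstacle, and the reason the reduced-parity viewpoint is essential, is precisely this last estimate. The naive bound $|U|\le 2|A|+N$ is far too weak, and bounding $|U\cap P_s|$ by reference to the finished $B$ is circular. The remedy is that the balance holds step by step, so only sizes (fixed by $A$ and $R$ beforehand) enter the count, and that the fresh vertices demanded by the current step always fall into the class from which they are drawn — giving exactly the slack the hypothesis $2|A|-k_0-2k_1-k_2\le 2^{n-2}$ was designed to provide, with step \eqref{construction-simple:v} (two vertices in one class) as the tight case.
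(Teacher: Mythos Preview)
Your argument is correct, and the bookkeeping for $|B|$ and $|\rho_{i=\ell}(B)|$ matches the paper's. For solvability, however, you take a genuinely different route. The paper counts forbidden vertices in the target slice $X$ by attributing each one to the original pair (or $R$-matched pair of even pairs) that produced it, then tabulates case by case: an odd pair with equal $i$-coordinates forbids~$1$, an odd pair with unequal $i$-coordinates forbids~$2$, an $R$-pair forbids $2$, $3$, or $4$ depending on how many of its four endpoints share the same $i$-coordinate, and the current step contributes the saving ``$-2$''. Your reduced-parity invariant $\chi_i(\gamma)=\chi(\gamma)(-1)^{\gamma(i)}$ replaces this six-way case analysis by a single balance observation: every step of Construction~\ref{construction-simple} places (with multiplicity) equally many vertices in the two $\chi_i$-classes, whence the multiset count in either class at any moment equals $|B|-|W_f\cap P_s|$. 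This is more conceptual and avoids the per-case tabulation; the paper's approach is more concrete and sidesteps the multiset-versus-set subtlety that your sketch glosses over.

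Two small corrections to tighten your write-up. First, the balance fact holds \emph{with multiplicity}; the set count $|U\cap P_s|$ is only bounded above by the multiset count, so the cleanest bound is $|U\cap P_s|\le |B|-|W_f\cap P_s|$ without the $-d/2$ term, and this already suffices. Second, in step~\eqref{construction-simple:vi} the two required vertices $\gamma$ and $\gamma'$ have opposite parities and hence lie in \emph{different} classes $P_s$ and $P_{-s}$, contrary to your phrasing ``all lie in the very class $P_s$''; but since each class then has $|W_f\cap P_{\pm s}|\ge 2$ (from $\{\gamma,\gamma\oplus e_i\}$ and $\{\gamma',\gamma'\oplus e_i\}$ respectively) and only one vertex is needed from each, the argument still goes through.
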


\begin{remark}
Lemma~\ref{lemma-on-construction-simple} states that if $A\in\allp_n$, $i\in[n]$ and $(A,i)$-matching $R$ satisfy the assumptions of Lemma~\ref{lemma-on-construction-simple},
then Construction~\ref{construction-simple} always successfully ends,
no matter which $\gamma$ or $\gamma'$ were possibly chosen in steps \eqref{construction-simple:ii}, \eqref{construction-simple:iv}, \eqref{construction-simple:v}, or \eqref{construction-simple:vi} of Construction~\ref{construction-simple}.
\end{remark}

\begin{proof}[Proof~of~Lemma~\ref{lemma-on-construction-simple}]
First note that we need not consider steps \eqref{construction-simple:i} and \eqref{construction-simple:iii} of Construction~\ref{construction-simple}, as they do not require choosing any additional vertices and therefore can be always performed. We have to prove that in steps \eqref{construction-simple:ii} and \eqref{construction-simple:iv}  we can choose a vertex $\gamma$, and in steps \eqref{construction-simple:v} and \eqref{construction-simple:vi} we can choose vertices $\gamma$, $\gamma'$ satisfying the requirements.  The requirements are of three types:
\begin{description}
\item[1st] determines the $i$-th coordinate of $\gamma$ or $\gamma'$,
\item [2nd] determines $\chi(\gamma)$ or $\chi(\gamma')$,
\item [3rd] forbids some vertices.
\end{description}
Assume that one of the steps \eqref{construction-simple:ii}, \eqref{construction-simple:iv} --
\eqref{construction-simple:vi} of Construction~\ref{construction-simple} is to be executed.
Let $X\subseteq V_n$ be a~set satisfying the first two requirements, then $|X|=2^{n-2}$.
To deal with the third requirement, we analyze the reasons why a vertex is forbidden.
Observe that a vertex $\gamma$ is forbidden if $\gamma\in\bigcup A'\cup\bigcup B$ or
$\gamma\oplus e_i\in\bigcup A'\cup\bigcup B$.
Since the $i$-th coordinate of $\gamma$ or $\gamma'$ is fixed by the 1st requirement,
any pair of vertices $\{\alpha,\alpha\oplus e_i\}$ such that $\{\alpha,\alpha\oplus e_i\}\cap(\bigcup A'\cup\bigcup B)\ne\emptyset$
forbids at most one vertex of $X$. Moreover, $\{\alpha,\alpha\oplus e_i\}\cap(\bigcup A'\cup\bigcup B)\ne\emptyset$
if and only if either $\{\alpha,\alpha\oplus e_i\}\cap\bigcup A'\ne\emptyset$ or one
vertex from $\{\alpha,\alpha\oplus e_i\}$ was chosen in a preceding step of
Construction~\ref{construction-simple} and in this case
$\{\alpha,\alpha\oplus e_i\}\subseteq\bigcup B$.  Thus 
\begin{itemize}
\item either $\gamma\in\{\alpha,\beta\}\in A'$, 
\item or there exists an odd $\{\alpha,\beta\}\in A'\setminus A$ such that $\gamma\in\{\alpha,\beta\}$ or $\gamma$ was chosen by the execution of a step of Construction~\ref{construction-simple} on $\{\alpha,\beta\}$ (if $\{\alpha,\beta\}$ was deleted from $A'$),
\item or there exists a pair $\{\{\alpha,\beta\},\{\alpha',\beta'\}\}\in R$ of even pairs and $\gamma\in\{\alpha,\beta,\alpha',\beta'\}$ or $\gamma$ was chosen by the execution of a step of Construction~\ref{construction-simple} on $\{\{\alpha,\beta\},\{\alpha',\beta'\}\}$ (if $\{\alpha,\beta\},\{\alpha',\beta'\}$ were deleted from $A'$).
\end{itemize}
 In the second case we say that $\{\alpha,\beta\}$ forbids $\gamma$, in the third case we say that  $\{\{\alpha,\beta\},\{\alpha',\beta'\}\}\in R$ forbids $\gamma$. In the first case, if $\{\alpha,\beta\}$ is even, then there exists $\{\alpha',\beta'\}\in A'$ such that  $\{\{\alpha,\beta\},\{\alpha',\beta'\}\}\in R$ and in this case we say that $\{\{\alpha,\beta\},\{\alpha',\beta'\}\}\in R$ forbids $\gamma$; if $\{\alpha,\beta\}$ is odd then $\{\alpha,\beta\}$ forbids $\gamma$. To perform a more detailed analysis of this process,
we investigate odd pairs of $A$ and elements of $(A,i)$-matching $R$.

 Let $\{\alpha,\beta\}\in A$ be an odd pair. If $\alpha(i)=\beta(i)$, then this pair forbids only one vertex regardless whether $\{\alpha,\beta\}\in A'$ or $\{\alpha,\beta\}\not\in A'$, because $\chi(\alpha)\ne\chi(\beta)$ and $X$ consists -- by the 2nd requirement -- of vertices of the same parity. If $\alpha(i)\ne\beta(i)$ and $\{\alpha,\beta\}\in A'$, then vertices $\alpha$ and $\beta$ forbid either two vertices of $X$ or no vertex of $X$; the latter case occurs, in particular, if the step to be executed deals with $\{\alpha,\beta\}$. If $\{\alpha,\beta\}\notin A'$, then $\{\alpha,\beta\}$ was treated previously by step \eqref{construction-simple:ii}: recall that this step adds into $B$ two odd pairs with the same $i$-th coordinate, so these pairs forbid two vertices of $X$. In all cases, $\{\alpha,\beta\}$ forbids at most two vertices of $X$.

Let $\{\{\alpha,\beta\},\{\alpha',\beta'\}\}\in R$. If $\alpha(i)=\alpha'(i)=\beta(i)=\beta'(i)$, then these vertices forbid two vertices in $X$ regardless whether $\{\alpha,\beta\}\in A'$ or $\{\alpha,\beta\}\not\in A'$. Next assume that $\alpha(i)=\beta(i)=\alpha'(i)\ne\beta'(i)$.
If $\{\alpha,\beta\},\{\alpha',\beta'\}\notin A'$, then $\{\alpha,\beta\}$ and $\{\alpha',\beta'\}$ were treated previously by step \eqref{construction-simple:iv}: recall that this step adds into $B$
two even pairs with distinct parities and the same $i$-th coordinate and one odd pair with the same $i$-th coordinate, which forbids three vertices of $X$.
If $\{\alpha,\beta\},\{\alpha',\beta'\}\in A'$,
then these vertices forbid either three vertices or one vertex of $X$; the latter case occurs in particular if  $\{\alpha,\beta\}$ and $\{\alpha',\beta'\}$ are treated by the step to be executed. Assume that $\alpha(i)=\alpha'(i)\ne\beta(i)=\beta'(i)$.  If $\{\alpha,\beta\},\{\alpha',\beta'\}\notin A'$, then $\{\alpha,\beta\}$ and $\{\alpha',\beta'\}$ were treated previously by step \eqref{construction-simple:vi}: recall that this step adds  into $B$ two even pairs with distinct parities and the same $i$-th coordinate and two odd pairs with the same $i$-th coordinate, which forbids four vertices of $X$.
If $\{\alpha,\beta\},\{\alpha',\beta'\}\in A'$, then these vertices forbid two vertices of $X$; note that this happens in particular if  $\{\alpha,\beta\}$ and $\{\alpha',\beta'\}$ are treated by the step to be  the executed.
Finally assume that $\alpha(i)=\beta(i)\ne\alpha'(i)=\beta'(i)$.
If $\{\alpha,\beta\},\{\alpha',\beta'\}\notin A'$, then $\{\alpha,\beta\}$ and $\{\alpha',\beta'\}$ were treated previously by step \eqref{construction-simple:v}: recall that this step adds into $B$ two even pairs with distinct parities and the same $i$-th coordinate and two odd pairs with the same $i$-th coordinate, which forbids four vertices of $X$. If $\{\alpha,\beta\},\{\alpha',\beta'\}\in A'$, then these vertices forbid either four vertices of $X$ or no vertex of $X$; the latter case occurs in particular if $\{\alpha,\beta\}$ and $\{\alpha',\beta'\}$ are treated by the step to be executed.

To sum up these contributions, let
\begin{itemize}
\item $\ell_0$ be the number of odd pairs $\{\alpha,\beta\}\in A$ with $\alpha(i)=\beta(i)$;
\item $\ell_1$ be the number of odd pairs $\{\alpha,\beta\}\in A$ with $\alpha(i)\ne\beta(i)$;
\item $\ell_2$ be the number of elements $\{\{\alpha,\beta\},\{\alpha',\beta'\}\}\in R$ with $\alpha(i)=\beta(i)=\alpha'(i)=\beta'(i)$;
\item $\ell_3$ be the number of elements $\{\{\alpha,\beta\},\{\alpha',\beta'\}\}\in R$ such that exactly three vertices from the set $\{\alpha,\beta,\alpha',\beta'\}$ have the same $i$-th coordinate;
\item $\ell_4$ be the number of elements $\{\{\alpha,\beta\},\{\alpha',\beta'\}\}\in R$ with $\alpha(i)=\alpha'(i)\ne\beta(i)=\beta'(i)$;
\item $\ell_5$ be the number of elements $\{\{\alpha,\beta\},\{\alpha',\beta'\}\}\in R$ with $\alpha(i)=\beta(i)\ne\alpha'(i)=\beta'(i)$.
\end{itemize}
Then the number of forbidden vertices of $X$ is at most $\ell_0+2(\ell_1+\ell_2)+3\ell_3+4(\ell_4+\ell_5)-2$. Since $|A|=\ell_0+\ell_1+2(\ell_2+\ell_3+\ell_4+\ell_5)$, $k_0=\ell_0$, $k_1=\ell_2$, $k_2=\ell_3$ and $2|A|-k_0-2k_1-k_2\le 2^{n-2}=|X|$, we obtain that
$$
|X|-\left(\ell_0+2(\ell_1+\ell_2)+3\ell_3+4(\ell_4+\ell_5)-2\right)\ge2
$$
(or if $A$ is odd then $|X|-\left(\ell_0+2\ell_1-2\right)\ge1$)
and therefore $X$ always contains a vertex $\gamma$ in steps \eqref{construction-simple:ii} and \eqref{construction-simple:iv} or vertices $\gamma,\gamma'$ in steps \eqref{construction-simple:v} and \eqref{construction-simple:vi} that are not forbidden. Clearly, for every $B\in\mathcal C_i(A)$ we have $|B|=\ell_0+2(\ell_1+\ell_2)+3\ell_3+4(\ell_4+\ell_5)=2|A|-k_0-2k_1-k_2$. For $\l\in[2]$, each of steps
\begin{itemize}
\item \eqref{construction-simple:i} -- \eqref{construction-simple:ii} deletes one pair from $A'$ and adds  into $B$ at most one pair $\{\alpha,\beta\}$ with $\alpha(i)=\beta(i)=\l$,
\item \eqref{construction-simple:iii} -- \eqref{construction-simple:vi} deletes two pairs from $A'$ and adds into $B$ at most two pairs $\{\alpha,\beta\}$ with $\alpha(i)=\beta(i)=\l$,
\end{itemize}
which means that $|\rho_{i=\ell}(B)|\le|A|$ and the proof is complete.
\end{proof}

\begin{lemma}
\label{cor-on-construction-simple}
Let $A\in\balp_n$.
\begin{enumerate}[\upshape(1)]
\item
\label{cor-on-construction-simple:1}
If $2|A|\le2^{n-2}$, then $\mathcal C_i(A)\ne\emptyset$ for every $i\in[n]$;
\item
\label{cor-on-construction-simple:2}
if $|A|\le n-1$ and $n\ge 5$, then $\mathcal C_i(A)\ne\emptyset$ for every $i\in[n]$;
\item
\label{cor-on-construction-simple:3}
if $A$ is diminishable and $|A|=n\ge 5$, then there exists a separating $i\in[n]$ and
for every separating $i\in[n]$ we have $\mathcal C_i(A)\ne\emptyset$;
\item
\label{cor-on-construction-simple:4}
if $A$ is odd, $|A|=n-1$ and $n\ge4$, then there exists $i\in[4]$ such that $\mathcal{C}_i(A)\ne\emptyset$;
\item
\label{cor-on-construction-simple:5}
if $i\in[n]$ and $B$ is a partial $i$-completion of $A$, then
$$|A|-k_1^A-k_2^A\ge|B|-k_1^B-k_2^B;$$
where
\begin{itemize}
\item
$k_1^A$ (or $k_1^B$) is the number of odd pairs $\{\alpha,\beta\}\in A$ (or $\{\alpha,\beta\}\in B$)
with $\alpha(i)=\beta(i)$,
\item
$k_2^A$ (or $k_2^B$) is the number of even pairs $\{\alpha,\beta\}\in A$
(or $\{\alpha,\beta\}\in B$) with $\alpha(i)=\beta(i)$.
\end{itemize}
\item
\label{cor-on-construction-simple:6}
if $B$ is a simple $i$-completion of $A$ where $i\in[n]$ and $\{\alpha,\beta\}\in A$ is an odd pair
with $\alpha(i)=\beta(i)$, then $\{\alpha,\beta\}\in B$;
\item
\label{cor-on-construction-simple:7}
if $A$ is odd such that $2|A|-n_0-n_1<2^{n-2}$, $|A|-n_j\le n-1$ for both $j\in[2]$ where $\sigma_i(A)=(n_0,n_1)$
and $\enc(\rho_{i=\ell}\bigcup(A))=\emptyset$
for both $\ell\in[2]$, then there exists a simple $i$-completion $B$ of $A$ such that
$\enc(\rho_{i=\ell}(B))=\emptyset$ for both $\ell\in[2]$.
\end{enumerate}
\end{lemma}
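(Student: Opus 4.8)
The plan is to deduce \eqref{cor-on-construction-simple:1}--\eqref{cor-on-construction-simple:4} directly from the quantitative part of Lemma~\ref{lemma-on-construction-simple}, to read \eqref{cor-on-construction-simple:5} and \eqref{cor-on-construction-simple:6} off a single step of Construction~\ref{construction-simple}, and to concentrate the real work on \eqref{cor-on-construction-simple:7}. For \eqref{cor-on-construction-simple:1}, fix any $(A,i)$-matching $R$ (one exists because $A$ is balanced); since $k_0,k_1,k_2\ge0$ we have $2|A|-k_0-2k_1-k_2\le2|A|\le2^{n-2}$, so the first hypothesis of Lemma~\ref{lemma-on-construction-simple} holds and $\mathcal C_i(A)\ne\emptyset$. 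Part \eqref{cor-on-construction-simple:2} is the instance $2|A|\le2(n-1)\le2^{n-2}$, the last inequality holding for $n\ge5$. For \eqref{cor-on-construction-simple:3} a separating coordinate exists by part~(6) of Lemma~\ref{lemma-simple}; fixing such an $i$, the two edge-pairs witnessing separation are odd and have equal $i$-th coordinates on their endpoints, so $k_0\ge2$ and $2|A|-k_0-1=2n-k_0-1\le2n-3\le2^{n-2}$, which is the odd branch of Lemma~\ref{lemma-on-construction-simple}. For \eqref{cor-on-construction-simple:4} the case $n\ge5$ is \eqref{cor-on-construction-simple:2}; for $n=4$ note that an odd pair cannot differ in all four coordinates (the all-ones vector is even), so some coordinate $i$ has $\alpha(i)=\beta(i)$ for a chosen pair, giving $k_0\ge1$ and $2|A|-k_0-1\le5-1=2^{2}$.

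For \eqref{cor-on-construction-simple:5}, observe that $|X|-k_1^X-k_2^X$ counts exactly the pairs of $X$ that coordinate $i$ splits, i.e.\ those with $\alpha(i)\ne\beta(i)$. Inspecting the six steps of Construction~\ref{construction-simple} shows that none increases this number: steps \eqref{construction-simple:i}, \eqref{construction-simple:iii} and \eqref{construction-simple:v} produce only non-split pairs and leave the count unchanged, \eqref{construction-simple:ii} and \eqref{construction-simple:iv} each eliminate one split pair, and \eqref{construction-simple:vi} eliminates two. As a partial $i$-completion is obtained from $A$ by a sequence of such steps, the number of split pairs can only drop, which is the asserted inequality. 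For \eqref{cor-on-construction-simple:6}, an odd pair with $\alpha(i)=\beta(i)$ is admissible only for step \eqref{construction-simple:i} (step \eqref{construction-simple:ii} requires $\alpha(i)\ne\beta(i)$), and that step copies it into $B$ unchanged; since $\alpha,\beta\in\bigcup A$ are never eligible as an auxiliary vertex $\gamma$, the pair persists into every $B\in\mathcal C_i(A)$.

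Part \eqref{cor-on-construction-simple:7} carries the weight. For odd $A$ one has $k_0=n_0+n_1$, so $2|A|-n_0-n_1<2^{n-2}$ is stronger than the odd branch of Lemma~\ref{lemma-on-construction-simple}; hence $\mathcal C_i(A)\ne\emptyset$, and every $B\in\mathcal C_i(A)$ is built by applying step \eqref{construction-simple:i} to the non-split pairs and step \eqref{construction-simple:ii} to the $s:=|A|-n_0-n_1$ split pairs. Each split pair contributes one \emph{ghost} $g\in V_{n-1}$—the common value of $\rho$ on the chosen $\gamma$ and on $\gamma\oplus e_i$—that lands in both layers, so with $U_\ell:=\rho_{i=\ell}(\bigcup A)$ and $G$ the set of ghosts, $\bigcup\rho_{i=\ell}(B)=U_\ell\cup G$ and $|\rho_{i=\ell}(B)|=n_\ell+s=|A|-n_{1-\ell}\le n-1$. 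Since $\rho_{i=\ell}(B)$ is odd (hence balanced) and $n_\ell+s\le n-1\le2(n-1)-3$, Observation~\ref{obser-encom} gives $|\enco(\rho_{i=\ell}(B))|\le2$, the two possible vertices being of opposite parity. The strategy is therefore to take an arbitrary $B\in\mathcal C_i(A)$ and delete the vertices of $\enc(\rho_{i=\ell}(B))$ one by one: if $\gamma_0\in\enc(\rho_{i=\ell}(B))$ then, as $\enc(U_\ell)=\emptyset$, the vertex $\gamma_0$ has a neighbour $g\in G$; re-running step \eqref{construction-simple:ii} for the split pair producing $g$ with a new auxiliary vertex replaces $g$ by a fresh ghost $g'$ of the same parity. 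In layer $\ell$ this is exactly the substitution of the ``moreover'' clause of Observation~\ref{obser-encom} (with $\beta=g$, $\beta'=g'$; legitimate because $n_\ell+s\le2(n-1)-4$ for $n\ge5$), which removes $\gamma_0$ from $\enco$ and introduces no new encompassed vertex there.

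The delicate point, and the one I expect to be the main obstacle, is that this substitution simultaneously rewrites the ghost's pair in the opposite layer $1-\ell$, replacing $g$ by $g'$ there too. Removing $g$ can only shrink $\enco(\rho_{i=1-\ell}(B))$, but one must still ensure (i) that the removed $g$ does not itself remain encompassed once it leaves $\bigcup\rho_{i=1-\ell}(B)$, and (ii) that the inserted $g'$ is not the last missing neighbour of any vertex there. Danger (i) is handled by selecting, among the ghost-neighbours of $\gamma_0$, one that is not encompassed in layer $1-\ell$—possible because $|\enco(\rho_{i=1-\ell}(B))|\le2$—and danger (ii) forbids only boundedly many positions for $g'$, again by the equal-parity common-neighbour estimate behind Observation~\ref{obser-encom}. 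The hypotheses then supply the room: from $|A|-n_j\le n-1$ (and $n_0+n_1\le|A|$) we get $|A|\le2n-2$, so $U_\ell$ is linear in $n$, and combined with the strict bound $2|A|-n_0-n_1<2^{n-2}$ the pool of admissible $g'$ (prescribed parity, avoiding $\bigcup A\cup\bigcup B$, counted as in Lemma~\ref{lemma-on-construction-simple}) dwarfs the forbidden set by an exponential margin for all but the smallest $n$. Each substitution strictly decreases the potential $|\enc(\rho_{i=0}(B))|+|\enc(\rho_{i=1}(B))|\le4$ while not increasing it in the untouched layer, so the process terminates with empty $\enc$ in both layers; the finitely many tight small-dimension configurations, where the hypothesis already forces $|A|+s$ to be tiny, are to be settled directly, in the spirit of Proposition~\ref{dimen-4} and Theorem~\ref{dimen-5}.
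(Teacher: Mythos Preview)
Your treatment of \eqref{cor-on-construction-simple:1}--\eqref{cor-on-construction-simple:6} is correct and essentially coincides with the paper's proof; indeed your observation that $|X|-k_1^X-k_2^X$ simply counts the pairs split by coordinate~$i$ is a slightly cleaner way to phrase \eqref{cor-on-construction-simple:5} than the paper's step-by-step bookkeeping.

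For \eqref{cor-on-construction-simple:7} your approach is genuinely different and, as written, incomplete. The paper argues by \emph{forward} induction on the steps of Construction~\ref{construction-simple}: it maintains the invariant $\enc(\rho_{i=\ell}(\bigcup B_1))=\emptyset$ for both~$\ell$ throughout, choosing each auxiliary $\gamma$ in step~\eqref{construction-simple:ii} so as to preserve it. The count is tight but clean: the usual restrictions forbid at most $2|A|-n_0-n_1-2$ positions, and the hypothesis $|A|-n_j\le n-1$ guarantees (via Observation~\ref{obser-encom}) that at most one choice of~$\gamma$ per layer could create an encompassed vertex; adding these two gives $2|A|-n_0-n_1<2^{n-2}$ forbidden positions, so a valid $\gamma$ always exists. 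No case analysis, no small-dimension exceptions.

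Your backward fix-up---first take any $B\in\mathcal C_i(A)$, then swap ghosts to destroy encompassed vertices---could perhaps be made to work, but two points are not established. First, your handling of ``danger~(i)'' asserts that among the ghost-neighbours of $\gamma_0$ one can always select a $g$ not in $\enco(\rho_{i=1-\ell}(B))$; but if $\gamma_0$ has a single ghost-neighbour this choice is forced, and you have not shown that removing that forced $g$ from layer $1-\ell$ and inserting $g'$ cannot expose $g$ itself as a new element of $\enc$. Second, the closing sentence---that the ``finitely many tight small-dimension configurations \ldots\ are to be settled directly, in the spirit of Proposition~\ref{dimen-4} and Theorem~\ref{dimen-5}''---is not an argument: the hypotheses of \eqref{cor-on-construction-simple:7} place no lower bound on~$n$, and the margin you actually need is not exponential but the single extra vertex granted by the \emph{strict} inequality $2|A|-n_0-n_1<2^{n-2}$, precisely the constant the paper exploits. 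The paper's forward invariant avoids both difficulties because at the moment each $\gamma$ is chosen, the cross-layer situation is already clean.
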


\begin{proof}
\eqref{cor-on-construction-simple:1} follows from Lemma~\ref{lemma-on-construction-simple}.
Since $2(n-1)\le2^{n-2}$ for $n\ge 5$, part \eqref{cor-on-construction-simple:2}
follows from Lemma~\ref{lemma-on-construction-simple} as well. To verify \eqref{cor-on-construction-simple:3},
observe that the conditions imposed on $A$ in this case guarantee that there exists a separating $i\in[n]$. Moreover, for every separating $i\in[n]$ we have $n_0,n_1\ge 1$ where $\sigma_i(A)=(n_0,n_1)$ and therefore
$k_0=|\{\{\alpha,\beta\}\in A\mid\alpha(i)=\beta(i)\}|=n_0+n_1\ge2$.
Hence $2|A|-k_0\le2n-2\le2^{n-2}$ and $\mathcal C_i(A)\ne\emptyset$ by Lemma~\ref{lemma-on-construction-simple}, which completes the proof of \eqref{cor-on-construction-simple:3}.
If $A\in\oddp_n$ and $|A|=n-1$ for $n\ge4$ then, by Lemma~\ref{lemma-on-construction-simple},
$\mathcal{C}_i(A)=\emptyset$ implies that $n=4$ and $\sigma_i(A)=(0,0)$,  but since
$A$ is odd while $n=4$ is even, there exists $i\in[4]$ such that $\sigma_i(A)\ne(0,0)$ and
\eqref{cor-on-construction-simple:4} is proved.

To prove \eqref{cor-on-construction-simple:5}, consider pair-sets $B_1$  and $B_2$ such
that $B_2$ arises from a partial $i$-completion $B_1$ of $A$ by one step of Construction~\ref{construction-simple}.
Let $k_1$ (or $k_1'$) be the number of odd pairs $\{\alpha,\beta\}\in B_1$ (or $\{\alpha,\beta\}\in B_2$)
with $\alpha(i)=\beta(i)$ and let $k_2$ (or $k_2'$) be the number of even pairs $\{\alpha,\beta\}\in B_1$ (or $\{\alpha,\beta\}\in B_2$)
with $\alpha(i)=\beta(i)$. If $B_2$ arises by step \eqref{construction-simple:i} or  \eqref{construction-simple:iii} of Construction~\ref{construction-simple},
then $|B_1|=|B_2|$, $k_1=k'_1$ and $k_2=k_2'$, if $B_2$ arises by step \eqref{construction-simple:ii} then
$|B_1|+1=|B_2|$, $k_1+2=k'_1$ and $k_2=k'_2$, if $B_2$ arises by step \eqref{construction-simple:iv} then
$|B_1|+1=|B_2|$, $k_1+1=k'_1$ and $k_2+1=k'_2$, and  if $B_2$ arises by step \eqref{construction-simple:v} or
\eqref{construction-simple:vi} then
$|B_1|+2=|B_2|$, $k_1+2=k'_1$ and $k_2+2=k'_2$. Thus $|B_1|-k_1-k_2\ge|B_2|-k'_1-k'_2$ and hence by an easy induction we obtain
\eqref{cor-on-construction-simple:5}.

\eqref{cor-on-construction-simple:6} follows from the fact that in this case, step \eqref{construction-simple:i} of Construction~\ref{construction-simple} must be applied to $\{\alpha,\beta\}$.
To prove \eqref{cor-on-construction-simple:7}, consider a partial $i$-completion $B_1$ of $A$. Let $\sigma_i(B_1)=(m_0,m_1)$ and let
$B_2$ arise from $B_1$ by one step of Construction~\ref{construction-simple}. If $B_2$ is created from $B_1$ by
step \eqref{construction-simple:i} then $|B_2|=|B_1|$,
$\sigma_i(B_2)=(m_0,m_1)$ and $\enc(B_1)=\enc(B_2)$. If $B_2$ is created from $B_1$ by step \eqref{construction-simple:ii} then $|B_2|=|B_1|+1$ and $\sigma_i(B_2)=(m_0+1,m_1+1)$. Hence if
$\sigma_i(A)=(n_0,n_1)$ then, by induction, we obtain that
$m_0=n_0+|B_1|-|A|$, $m_1=n_1+|B_1|-|A|$ and thus $2|B_1|-m_0-m_1=2|A|-n_0-n_1$.
In the following we assume that $B_1$ is a partial $i$-completion of $A$ which is not an $i$-completion of $A$ and $\enc(\rho_{i=\ell}(\bigcup B_1))=\emptyset$ for both
$\ell\in[2]$. Hence $|B_1|<2|A|-n_0-n_1$ since $k_0=n_0+n_1$ and, by Lemma~\ref{lemma-on-construction-simple}, $2|A|-k_0\ge|B_2|>|B_1|$. Then in Construction~\ref{construction-simple} we have $A'\ne\emptyset$ and assume that the next step of Construction~\ref{construction-simple} is applied on $\{\alpha,\beta\}\in A'$. If $\alpha(i)=\beta(i)$, then necessarily step \eqref{construction-simple:i} is applied, which means that  $B_2=B_1$ and $\enc(\rho_{i=\ell}(\bigcup B_2))=\emptyset$
for both $\ell\in[2]$. Thus we can assume that $\alpha(i)=0\ne\beta(i)$. Our aim is to prove that
there exists $\gamma\in V_n\setminus\bigcup B_1$ such that $\gamma(i)=0$,
$\chi(\gamma)\ne\chi(\alpha)$, $\gamma\oplus e_i\notin\bigcup B_1$ and
$\enc(\rho_{i=\ell}(\bigcup B_1\cup\{\gamma,\gamma\oplus e_i\}))=\emptyset$ for both $\ell\in[2]$.
If step \eqref{construction-simple:ii} of Construction~\ref{construction-simple} chooses $\gamma$ with these properties,
then $B_2$ arising from $B_1$ satisfies $\enc(\rho_{i=\ell}(\bigcup B_2))=\emptyset$
for both $\ell\in[2]$. To prove the existence of such a $\gamma$, let $X=\{\eta\in V_n\mid\eta(i)=0,\,\chi(\eta)\ne\chi(\alpha)\}$, note that then $|X|=2^{n-2}$. In the proof of Lemma~\ref{lemma-on-construction-simple} we prove that the number of $\gamma\in X$ that do not forbid is at least $2|A|-n_0-n_1-2$ and, by the foregoing discussion,  $2|B_1|-m_0-m_1-2=2|A|-n_0-n_1-2$.
If $\rho_{i=0}(\eta)\in\enc(\rho_{i=0}(\bigcup B_1\cup\{\gamma\}))$ for some $\gamma\in X$, then $\eta(i)=0$ and $\{\gamma,\eta\}\in E(Q_n)$ and $$|\{\nu\in\bigcup B_1\mid\nu(i)=0,\,\{\nu,\eta\}\in E(Q_n)\}|=n-2.$$
Analogously if $\rho_{i=1}(\eta)\in\enc(\rho_{i=1}(\bigcup B_1\cup\{\gamma\oplus e_i\}))$ for some $\gamma\in X$ then $\eta(i)=1$ and $\{\gamma\oplus e_i,\eta\}\in E(Q_n)$ and 
$$|\{\nu\in\bigcup B_1\mid\nu(i)=1,\,\{\nu,\eta\}\in E(Q_n)\}|=n-2 $$ 
because $\enc(\rho_{i=\ell}(\bigcup B_1))=\emptyset$ for both $\ell\in[2]$.
By a straightforward computation $$|\{\nu\in\bigcup B_1\mid\nu(i)=0,\,\chi(\nu)\ne\chi(\alpha)\}|\le |A|-n_1-1\le n-2$$ and $$|\{\nu\in\bigcup B_1\mid\nu(i)=1,\,\chi(\nu)=\chi(\alpha)\}|\le |A|-n_0-1\le n-2$$ because $\chi(\alpha)\ne\chi(\gamma)=\chi(\beta)$ and $\beta(i)=1\ne\gamma(i)$. By Observation~\ref{obser-encom} there exists at most one $\gamma\in X$ such that $\enc(\rho_{i=0}(\bigcup B_1)\cup\{\gamma\}))\ne\emptyset$ and there exists at most one $\gamma'\in X$ such that $\enc(\rho_{i=1}(\bigcup B_1)\cup\{\gamma'\oplus e_i\}))\ne\emptyset$. Including these encompassed vertices among forbidden vertices and using the fact that $2|A|-n_0-n_1<2^{n-2}$, we conclude that there exists a vertex $\gamma\in X$ that is not forbidden. Thus if we select such a $\gamma$, the step \eqref{construction-simple:ii} of  Construction~\ref{construction-simple} creates a partial $i$-completion $B_2$ from $B_1$ such that $\enc(\rho_{i=\ell}(\bigcup B_2))=\emptyset$ for both $\ell\in[2]$.
An easy induction on steps of Construction~\ref{construction-simple}
completes the proof of \eqref{cor-on-construction-simple:7}.
\end{proof}
Next we give properties of pair-sets in $\mathcal C_i(A)$.
\begin{lemma}
\label{lemma-on-construction-simple-properties}
Let $A\in\balp_n$, $i\in[n]$ and let $R$ be an $(A,i)$-matching.
For $k\in[2]$ let
\begin{description}
\item $m_{o,k}$ be the number of odd $\{\alpha,\beta\}\in A$ with $\alpha(i)=\beta(i)=k$;
\item $m_{e,k}$ be the number of pairs $\{\{\alpha ,\beta \},\{\alpha',\beta'\}\}\in R$ such that $\alpha(i)=\beta(i)=\alpha'(i)=\beta'(i)=k$;
\item $m_{3,k}$ be the number of $\{\{\alpha ,\beta \},\{\alpha',\beta'\}\}\in R$ such that exactly three vertices from  $\{\alpha,\beta,\alpha',\beta'\}$ have its the $i$-th coordinate equal  to $k$.
\end{description}
then
\begin{enumerate}[\upshape(1)]
\setcounter{enumi}{\thetmpc}
\item
\label{lemma-on-construction-simple-properties:4}
for every $B\in\mathcal C_i(A)$ we have $|B|-\|B\|=|A|-\|A\|$;
\item
\label{lemma-on-construction-simple-properties:5}
for every $B\in \mathcal C_i(A)$ and both $k\in[2]$
we have $|\rho_{i=1-k}(B)|= |A|-(m_{o,k}+2m_{e,k}+m_{3,k})$;
\item
\label{lemma-on-construction-simple-properties:6}
 if $2|A|\le 2^{n-2}$ and $\sigma_i(A)=(n_0,n_1)$, then for
each $k\in[2]$ there exists $B\in \mathcal C_i(A)$ such that
\end{enumerate}
\vspace{-0.5cm}
\begin{align*}
\|A\|-m_{o,1-k}+m_{3,1-k}\le&\|\rho_{i=k}(B)\|\le\|A\|-m_{o,1-k}+2(|R|-m_{e,0}-m_{e,1}-m_{3,0})-m_{3,1},\\
2(m_{e,k}+m_{3,k})\le&|\rho_{i=k}(B)|-\|\rho_{i=k}(B)\|\le2(|R|-m_{e,1-k}-m_{3,1-k}),\\
&|\rho_{i=k}(B)|=2\|A\|-m_{o,0}-m_{o,1}+2|R|-2m_{e,1-k}-m_{3,1-k}.
\end{align*}
\end{lemma}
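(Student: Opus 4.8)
The plan is to prove all three parts by a single careful accounting of how Construction~\ref{construction-simple} distributes pairs of each parity onto the two ``sides'' determined by coordinate $i$; say a pair $\{\gamma,\delta\}$ \emph{lies on side} $k$ if $\gamma(i)=\delta(i)=k$. By Proposition~\ref{observation-on-construction-simple}\,\eqref{observation-on-construction-simple:5} every $B\in\mathcal C_i(A)$ is an $i$-completion of $A$, so each pair of $B$ lies on exactly one side. Moreover $\chi(\rho_{i=k}(\gamma))=(-1)^k\chi(\gamma)$, hence projecting a side-$k$ pair by $\rho_{i=k}$ scales both parities by the same sign and preserves its even/odd type. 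Consequently $\|\rho_{i=k}(B)\|$ equals the number of odd pairs of $B$ on side $k$, and $|\rho_{i=k}(B)|-\|\rho_{i=k}(B)\|$ the number of even pairs of $B$ on side $k$; the whole lemma thus reduces to counting these side-and-parity multiplicities.

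First I would read off the multiplicities step by step. An odd pair of $A$ is treated by \eqref{construction-simple:i} or \eqref{construction-simple:ii}: a non-crossing one ($\alpha(i)=\beta(i)=k$) adds one odd pair on side $k$, a crossing one adds one odd pair on each side (the halves $\{\alpha,\gamma\}$ and $\{\gamma\oplus e_i,\beta\}$). A matched even element $\{\{\alpha,\beta\},\{\alpha',\beta'\}\}\in R$ is treated by one of \eqref{construction-simple:iii}--\eqref{construction-simple:vi} according to its $i$-pattern and in every case contributes exactly two even pairs to $B$; since $R$ exhausts the even pairs of $A$, this already gives \eqref{lemma-on-construction-simple-properties:4} via $|B|-\|B\|=2|R|=|A|-\|A\|$. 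In detail, the all-on-side-$k$ case \eqref{construction-simple:iii} (counted by $m_{e,k}$) puts both even pairs on side $k$; the three-on-side-$k$ case \eqref{construction-simple:iv} ($m_{3,k}$) puts its two even pairs on side $k$ and its one odd pair on side $1-k$; and each of the two $2$-$2$ patterns \eqref{construction-simple:v}, \eqref{construction-simple:vi} produces two even and two odd pairs that land on opposite sides, with the construction free to choose which. Summing the contributions landing on side $1-k$ and substituting $|A|=\|A\|+2|R|$ yields \eqref{lemma-on-construction-simple-properties:5}; reading the same sum for the complementary side (that is, \eqref{lemma-on-construction-simple-properties:5} with $k$ replaced by $1-k$) gives $|\rho_{i=k}(B)|$ in closed form, the equality of \eqref{lemma-on-construction-simple-properties:6}.

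For the two-sided inequalities of \eqref{lemma-on-construction-simple-properties:6} the point is that the only remaining freedom is the side chosen for the even pairs (and oppositely, the odd pairs) of the $2$-$2$ elements \eqref{construction-simple:v}, \eqref{construction-simple:vi}, of which there are $r=|R|-m_{e,0}-m_{e,1}-m_{3,0}-m_{3,1}$. If in a fixed $B$ exactly $s$ of these route their even pairs to side $k$, with $0\le s\le r$, then the even count on side $k$ equals $2(m_{e,k}+m_{3,k})+2s$ and the odd count on side $k$ equals $(\|A\|-m_{o,1-k}+m_{3,1-k})+2(r-s)$. As $s$ ranges over $[0,r]$ these stay within the displayed intervals (the even range having endpoints $2(m_{e,k}+m_{3,k})$ and $2(|R|-m_{e,1-k}-m_{3,1-k})$, the odd range starting at $\|A\|-m_{o,1-k}+m_{3,1-k}$), so \emph{every} $B\in\mathcal C_i(A)$ satisfies all three parts and it remains only to exhibit one such $B$. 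This is where the hypothesis $2|A|\le 2^{n-2}$ enters: by Lemma~\ref{cor-on-construction-simple}\,\eqref{cor-on-construction-simple:1} it guarantees $\mathcal C_i(A)\ne\emptyset$, the auxiliary vertices $\gamma,\gamma'$ required in steps \eqref{construction-simple:ii}, \eqref{construction-simple:iv}--\eqref{construction-simple:vi} always existing by Lemma~\ref{lemma-on-construction-simple}.

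The main obstacle I anticipate is not conceptual but the exhaustiveness of the bookkeeping: one must pin down both the parity and the side of every pair created in the crossing steps \eqref{construction-simple:ii}, \eqref{construction-simple:iv}--\eqref{construction-simple:vi}, where the fresh vertices $\gamma$, $\gamma'$ and their $e_i$-flips appear, and one must keep the degenerate pairs coherent. Being even and unable to cross $i$, a degenerate pair is by Proposition~\ref{observation-on-construction-simple}\,\eqref{observation-on-construction-simple:2} always retained whole and can occur only as the kept pair of \eqref{construction-simple:iii}, \eqref{construction-simple:iv}, or \eqref{construction-simple:v}; this merely freezes the side of such an element, which restricts the attainable $s$ but never violates the routing ranges underlying \eqref{lemma-on-construction-simple-properties:6}.
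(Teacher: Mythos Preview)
Your proof is correct and follows essentially the same step-by-step accounting as the paper's own proof: both arguments classify the steps of Construction~\ref{construction-simple} by the $i$-pattern of the treated pair(s), read off how many odd and even pairs each step places on each side, and then exploit the freedom in steps \eqref{construction-simple:v}--\eqref{construction-simple:vi} (your parameter $s$, the paper's ``non-deterministically select $k$'') to obtain the ranges in \eqref{lemma-on-construction-simple-properties:6}. Your observation that the inequalities of \eqref{lemma-on-construction-simple-properties:6} in fact hold for \emph{every} $B\in\mathcal C_i(A)$, with the hypothesis $2|A|\le 2^{n-2}$ used only to guarantee $\mathcal C_i(A)\ne\emptyset$, is a mild sharpening of the paper's ``there exists $B$'' formulation; and your derivation of the closed form for $|\rho_{i=k}(B)|$ directly from \eqref{lemma-on-construction-simple-properties:5} (via $|A|=\|A\|+2|R|$) is cleaner than re-summing and avoids the apparent misprint in the displayed equality of the statement.
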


\begin{proof}
If a step of Construction~\ref{construction-simple} deletes some even pairs from $A'$, then the same number of even pairs is inserted into $B$.
Hence $|A|-\|A\|=|B|-\|B\|$ and \eqref{lemma-on-construction-simple-properties:4} is proved.

To prove \eqref{lemma-on-construction-simple-properties:5}, let $k\in[2]$ and observe the step of Construction~\ref{construction-simple} omitting an edge $\{\alpha,\beta\}$ from $A'$.
If this step omits odd pair from $A'$, then it adds one odd pair to $\rho_{i=1-k}(B)$ except $\alpha(i)=\beta(i)=k$, The number of these steps is $m_{o,k}$.
If this step omits two pairs $\{\alpha,\beta\},\{\alpha',\beta'\}$ from $A'$, then $\{\alpha,\beta\}$ and $\{\alpha',\beta'\}$ are even pairs. This step adds two pairs to $\rho_{i=1-k}(B)$ except that
\begin{itemize}
\item
either $\alpha(i)=\beta(i)=\alpha'(i)=\beta'(i)=k$ --- this step adds no pair to $\rho_{i=1-k}(B)$ and the number of such steps is $m_{e,k}$
\item
or $|\{\gamma\in\{\alpha,\beta,\alpha',\beta'\mid\gamma(i)=k\}|=3$ --- this step adds one pair to $\rho_{i=1-k}(B)$ and the number of such steps is $m_{3,k}$.
\end{itemize}
Therefore $|\rho_{i=1-k}(B)|=|A|-(m_{o,k}+2m_{e,k}+m_{3,k})$ and \eqref{lemma-on-construction-simple-properties:5} is proved.

\par
To prove \eqref{lemma-on-construction-simple-properties:6}, we have to analyse steps more carefully.  Step (i) (or (iii)) only shifts pairs from $A'$ to $B$, and it is repeated  $m_{o,0}+m_{o,1}$-times (or $m_{e,0}+m_{e,1}$-times).  Thus step (i) (or (iii)) adds $m_{o,k}$ (or $2m_{e,k}$) odd (or even) pairs $
\{\alpha ,\beta \}$  with $\alpha (i)=\beta (i)=k$ to $B$.
 Step (ii) is repeated $\|A\|-(m_{o,0}+m_{o,1})$-times and it adds two odd pairs $\{\alpha ,\beta \}$ and $\{\alpha',\beta'\}$  with $\alpha (i)=\beta (i)\ne\alpha'(i)=\beta'(i)$ to $B$.
Step (iv) is repeated $m_{3,0}+m_{3,1}$-times, and it adds to $B$
\begin{description}
\item $2m_{3,0}$ even pairs $\{\alpha,\beta \}$  with $\alpha (i)=\beta (i)=0$,
 \item $m_{3,1}$ odd pairs $\{\alpha ,\beta \}$ to  with $\alpha (i)=\beta (i)=0$,
\item $2m_{3,1}$ even pairs $\{\alpha ,\beta \}$ with $\alpha (i)=\beta (i)=1$,
\item $m_{3,0}$ odd pairs $\{\alpha ,\beta \}$ with $\alpha (i)=\beta (i)=1$.
\end{description}
Steps (v) and (vi) are repeated $|R|-(m_{e,0}+m_{e,1}+m_{3,0}+m_{3,1})$-times altogether. Note that they non-deterministically select $k\in[2]$, and then they add two even pairs to $\rho_k(B)$ as well as two odd pairs to $\rho_{1-k}(B)$.  Assuming that these steps select the same $k$, we conclude that they add
\begin{description}
\item $2(|R|-(m_{e,0}+m_{e,1}+m_{3,0}+m_{3,1})$ even pairs to $\rho_k(B)$, and
\item $2(|R|-(m_{e,0}+m_{e,1}+m_{3,0}+m_{3,1})$ odd pairs to $\rho_{1-k}(B)$.
\end{description}
If we sum all these numbers and use \eqref{lemma-on-construction-simple-properties:4},
then we obtain the equalities of \eqref{lemma-on-construction-simple-properties:6}.
To complete the proof of \eqref{lemma-on-construction-simple-properties:6}, note that the existence of  $B\in \mathcal C_i(A)$ satisfying these equalities follows from Lemma~\ref{lemma-on-construction-simple}.
\end{proof}

\begin{corollary}
\label{cor-induction-step-easy}
Let $n$ and $m$ be natural numbers with $m<n\ge 5$.  If every balanced  pair-set (or odd pair-set) $A\in\allp_n$ of size $|A|=m$ belongs to $\conp_n$, then $\balp^m_{n'}\subseteq\conp_{n'}$ (or $\oddp^m_{n'}\subseteq\conp_{n'})$ for every $n'\ge n$.
\end{corollary}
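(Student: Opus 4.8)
The plan is to prove the statement by induction on $n'\ge n$, establishing $\balp^m_{n'}\subseteq\conp_{n'}$ (and, in the odd version, $\oddp^m_{n'}\subseteq\conp_{n'}$). Two reductions drive the argument: within a fixed dimension I pass from the hypothesis, which concerns pair-sets of size \emph{exactly} $m$, to \emph{all} pair-sets of size at most $m$; and across dimensions I descend from $n'$ to $n'-1$ via Construction~\ref{construction-simple} and its $i$-completions. Throughout I treat only non-empty $A$, since the empty pair-set never occurs as a relevant instance.

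For the base case $n'=n$ I must upgrade ``size exactly $m$'' to ``size at most $m$''. Given a balanced (or odd) $A$ with $1\le|A|=j<m$, I build a balanced (or odd) $A'$ with $|A'|=m$ and $A'\overset{*}{\implies}A$ by reversing single $\implies$ steps: pick a pair $\{\alpha,\alpha'\}\in A$, a coordinate $i$, and a fresh edge $\{\beta,\beta\oplus e_i\}$ disjoint from $\bigcup A$, and replace $\{\alpha,\alpha'\}$ by $\{\alpha,\beta\}$ and $\{\alpha',\beta\oplus e_i\}$. Because $\chi(\beta)+\chi(\beta\oplus e_i)=0$, balance is preserved automatically; to keep the set odd I additionally take $\beta$ with $\chi(\beta)=\chi(\alpha')$, which makes both new pairs odd. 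Such an edge exists because $\bigcup A$ blocks at most $2(m-1)<2^{n-1}$ of the $2^{n-1}$ direction-$i$ edges (using $m<n$ and $n\ge5$). After $m-j$ steps the hypothesis gives $A'\in\conp_n$, and Lemma~\ref{lemma-simple}\,(1) yields $A\in\conp_n$.

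For the induction step assume $n'>n$ and the statement at $n'-1$. Let $A\in\balp^m_{n'}$ (resp.\ $\oddp^m_{n'}$) be non-empty; since $|A|\le m<n\le n'-1$ and $n'\ge5$, Lemma~\ref{cor-on-construction-simple}\,(2) gives $\mathcal C_i(A)\ne\emptyset$ for every $i\in[n']$. I choose $i$ so that some non-degenerate pair of $A$ crosses coordinate $i$ (possible because $A$ has a pair $\{\alpha,\beta\}$ with $\alpha\ne\beta$), and fix $B\in\mathcal C_i(A)$. By Proposition~\ref{observation-on-construction-simple}\,(5), $B$ is an $i$-completion of $A$, so $B=\iota_{i,0}(\rho_{i=0}(B),\rho_{i=1}(B))$, and by Lemma~\ref{lemma-on-construction-simple} the restrictions satisfy $|\rho_{i=\ell}(B)|\le|A|\le m$. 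In the odd case each $\rho_{i=\ell}(B)$ is again odd, hence balanced and free of degenerate pairs; in the balanced case the step-by-step design of Construction~\ref{construction-simple} keeps each side internally balanced, so each $\rho_{i=\ell}(B)$ is balanced. Thus $\rho_{i=0}(B),\rho_{i=1}(B)\in\balp^m_{n'-1}$ (resp.\ $\oddp^m_{n'-1}$), the inductive hypothesis makes them connectable, and Lemma~\ref{lemma-simple}\,(7) delivers $A\in\conp_{n'}$.

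The main obstacle is not the counting but the verification that both $\rho_{i=0}(B)$ and $\rho_{i=1}(B)$ are \emph{genuine connectable instances}: a non-empty all-degenerate set is not a pair-set at all, and an unbalanced set is never connectable, so either defect would invalidate the appeal to the inductive hypothesis. This is precisely why $i$ is chosen to cross a non-degenerate pair: tracing the relevant step of Construction~\ref{construction-simple}---step~\eqref{construction-simple:ii} for an odd crossing pair, and steps~\eqref{construction-simple:iv}--\eqref{construction-simple:vi} for an even crossing pair together with its $R$-partner---shows that each side then receives a non-degenerate pair, so both restrictions are non-empty pair-sets. The remaining care is the balance bookkeeping: one must check that every step contributes total parity $0$ to \emph{each} of the two sides, so that $\chi(\rho_{i=0}(B))=\chi(\rho_{i=1}(B))=0$ and not merely $\chi(\rho_{i=0}(B))=\chi(\rho_{i=1}(B))$, which is all that $\chi(B)=0$ guarantees by itself.
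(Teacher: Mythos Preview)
Your proof is correct and follows the same two-part strategy as the paper: pad a pair-set of size $m'<m$ up to size exactly $m$ by inserting fresh edge-pairs (the paper does this in one chain attached to a single non-degenerate pair, you iterate one edge at a time), and then reduce dimension via a simple $i$-completion together with Lemma~\ref{lemma-simple}\,(7). You are actually more explicit than the paper on two points its proof leaves implicit---choosing $i$ so that a non-degenerate pair crosses it, guaranteeing that each $\rho_{i=\ell}(B)$ is a genuine non-empty pair-set, and verifying step-by-step that every rule of Construction~\ref{construction-simple} contributes parity~$0$ to \emph{each} side, so that both $\rho_{i=\ell}(B)$ are balanced rather than merely of equal parity sum.
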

\begin{proof}Let $m'<m$ and $B\in\balp_n^{m'}$  (or $B\in\oddp_n^{m'}$). Select a non-degenerated pair $\{\alpha,\beta\}\in B$. Since $n\ge5$,
there exist $k=m-m'$ pairwise disjoint edge-pairs
$\{\alpha_i,\beta_i\}\subseteq V_n\setminus\bigcup B$ such that
$\chi(\alpha)\ne\chi(\alpha_i)$ for all $i\in[k]$. It follows that
\begin{itemize}
\item $\{\alpha,\alpha_0\}$ and $\{\beta_i,\alpha_{i+1}\}$  are odd pairs for all
$i\in[k-2]$;
\item $\chi(\beta_{k-1},\beta)=\chi(\alpha,\beta)$.
\end{itemize}
Replacing $\{\alpha,\beta\}$ in $B$ with
$\{\alpha,\alpha_0\},\{\beta_0,\alpha_1\},\dots,\{\beta_{k-2},\alpha_{k-1}\},\{\beta_{k-1},\beta\}$,
we obtain a balanced  (or odd) pair-set $A\in\allp_n$  such that
$A\overset {*}{\implies}B$ and $|A|=m$. By Lemma~\ref{lemma-simple}, if
$A\in\conp_n$, then $B\in\conp_n$ as well.
\par
Now let $n'>n$ and $m'\le m$ and $A\in\balp_{n'}^{m'}$  (or $A\in\oddp_{n'}^{m'}$) pair-set
with $|A|=m'$.  By Lemmas~\ref{cor-on-construction-simple} and~\ref{lemma-on-construction-simple-properties}, for every $i\in[n']$ there exists
an $i$-completion $B$  of $A$ such that $|B|-\|B\|=|A|-\|A\|$ and
$|\rho_{i=0}(B)|,|\rho_{i=1}(B)|\le |A|$.  Further if $A$ is odd, then
$\rho_{i=0}(B)$ and $\rho_{i=1}(B)$ are odd pair-sets.  If
$\rho_{i=0}(B),\rho_{i=1}(B)\in\conp_{n-1}$ then  by
Lemma~\ref{lemma-simple} we have $A\in\conp_n$. To complete the proof of the corollary, we argue by induction on $n'$, where the initial step is for $n'=n$.
\end{proof}
Another easy consequence that gives the first step of reduction is the following corollary.
\begin{corollary}
\label{cor-simple}
Let $A\in\oddp_n^{n-1}$ be an odd pair-set for $n\ge 5$ such that for some
$i\in[n]$ we have $\sigma_i(A)=(n_0,n_1)$ where $n_0\ne 0\ne n_1$. Then
for every simple $i$-completion $B$ of $A$ we have $\rho_{i=0}(B),\rho_{i=1}(B)\in\oddp^{n-2}_{n-1}$.
Thus if $\oddp^{n-2}_{n-1}\subseteq\conp_{n-1}$, then $A\in\conp_n$.

Let $A\in\dimp_n$ be a diminishable pair-set for $n\ge 5$ such that for some
$i\in[n]$ we have $\sigma_i(A)=(n_0,n_1)$ where $n_0,n_1>1$. Then 
for every simple $i$-completion $B$ of $A$ we have $\rho_{i=0}(B),\rho_{i=1}(B)\in\oddp^{n-2}_{n-1}$.
Thus if $\oddp^{n-2}_{n-1}\subseteq\conp_{n-1}$, then $A\in\conp_n$.
\end{corollary}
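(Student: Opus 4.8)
The plan is to reduce both statements to Lemma~\ref{lemma-simple}\,(7): I will show that any simple $i$-completion $B\in\mathcal C_i(A)$ restricts, via $\rho_{i=0}$ and $\rho_{i=1}$, to odd pair-sets in $Q_{n-1}$ of size at most $n-2$, so that the assumption $\oddp^{n-2}_{n-1}\subseteq\conp_{n-1}$ forces $\rho_{i=0}(B),\rho_{i=1}(B)\in\conp_{n-1}$ and hence $A\in\conp_n$. The first key point is that in both parts $A$ is odd, so the only admissible $(A,i)$-matching $R$ is empty and Construction~\ref{construction-simple} can invoke only steps~\eqref{construction-simple:i} and~\eqref{construction-simple:ii}. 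I would record that each of these steps contributes to $B$ only odd pairs with equal $i$-th coordinates: step~\eqref{construction-simple:i} keeps the odd pair unchanged, while in step~\eqref{construction-simple:ii} a direct parity check shows that $\{\alpha,\gamma\}$ and $\{\gamma\oplus e_i,\beta\}$ are both odd (using $\chi(\gamma)\ne\chi(\alpha)$ and that flipping bit~$i$ reverses parity, so $\chi(\gamma\oplus e_i)=\chi(\alpha)\ne\chi(\beta)$) and that $\gamma(i)=\alpha(i)$, $(\gamma\oplus e_i)(i)=\beta(i)$. Consequently every $B\in\mathcal C_i(A)$ is odd, and since $\rho_{i=k}$ carries odd pair-sets to odd pair-sets (noted just before Observation~\ref{obser-encom}), both $\rho_{i=0}(B),\rho_{i=1}(B)\in\oddp_{n-1}$.

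Next I would guarantee that such a $B$ exists and estimate the sizes of its restrictions. For Part~1 the bound $|A|\le n-1$ with $n\ge5$ gives $\mathcal C_i(A)\ne\emptyset$ immediately from Lemma~\ref{cor-on-construction-simple}\,\eqref{cor-on-construction-simple:2}; for Part~2, where $A$ may have $|A|=n$, I would apply the odd branch of Lemma~\ref{lemma-on-construction-simple}, observing that $k_0=n_0+n_1\ge4$ yields $2|A|-k_0-1\le 2n-5\le 2^{n-2}$ for $n\ge5$, so again $\mathcal C_i(A)\ne\emptyset$. The sizes come from Lemma~\ref{lemma-on-construction-simple-properties}\,\eqref{lemma-on-construction-simple-properties:5}: with $R=\emptyset$ we have $m_{e,k}=m_{3,k}=0$ and $m_{o,k}=n_k$, so that $|\rho_{i=0}(B)|=|A|-n_1$ and $|\rho_{i=1}(B)|=|A|-n_0$.

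Finally I would insert the numerical hypotheses. In Part~1, $n_0,n_1\ge1$ and $|A|\le n-1$ give $|\rho_{i=0}(B)|,|\rho_{i=1}(B)|\le n-2$; in Part~2, $n_0,n_1\ge2$ and $|A|\le n$ give the same bound. Hence $\rho_{i=0}(B),\rho_{i=1}(B)\in\oddp^{n-2}_{n-1}$ in both cases. Since $B$ is an $i$-completion of $A$ by Proposition~\ref{observation-on-construction-simple}\,\eqref{observation-on-construction-simple:5}, the hypothesis $\oddp^{n-2}_{n-1}\subseteq\conp_{n-1}$ places both restrictions in $\conp_{n-1}$, and Lemma~\ref{lemma-simple}\,(7) yields $A\in\conp_n$.

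The computations are routine; the single step deserving care is the parity bookkeeping in step~\eqref{construction-simple:ii}, since it is exactly the fact that the two freshly created pairs are odd and share their $i$-th coordinate that keeps the entire completion $B$ odd and lets it descend to odd pair-sets one dimension lower. Everything else is arithmetic comparing the margins $n_0,n_1\ge1$ (resp.\ $\ge2$) against $|A|\le n-1$ (resp.\ $\le n$).
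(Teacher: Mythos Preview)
Your proof is correct and follows exactly the intended route: the paper states this corollary without proof as an ``easy consequence'' of Lemma~\ref{lemma-on-construction-simple} and Lemma~\ref{lemma-on-construction-simple-properties}, and your argument is precisely the natural way to spell out those details. The key observations---that $R=\emptyset$ forces only steps~\eqref{construction-simple:i} and~\eqref{construction-simple:ii}, that both steps produce odd pairs with equal $i$-th coordinate, and that Lemma~\ref{lemma-on-construction-simple-properties}\,\eqref{lemma-on-construction-simple-properties:5} with $m_{e,k}=m_{3,k}=0$ gives $|\rho_{i=1-k}(B)|=|A|-n_k$---are exactly what the paper has in mind.
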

The rest of this section is devoted to diminishable pair-sets.
\begin{lemma}
\label{lemma-diminishable}
Let $n>4$, $A\in\Upsilon_n$, $\sigma_i(A)=(n_
0,n_1)$ for some 
$i\in n$, and if  $|A|=n$ then $i$ is separating for $A$. Then
\begin{enumerate}[\upshape(1)]
\item\label{lemma-diminishable:1}
if $|A|=n-1$ and $n_1=0$, or $|A|=n$ and $n_1=1$, then 
$\gamma\in\enc(\rho_{i=1}(\bigcup A)$ implies $|A|=n-1$ and $n_0=
0$ or $|A|=n$ and 
$n_0\le1$;
\item\label{lemma-diminishable:2}
there are at most two distinct $\gamma ,\gamma'\in V_n$ with 
$\rho_{i=\gamma (i)}(\gamma )\in\enc(\rho_{i=\gamma (i)}(\bigcup 
A))$ and $\rho_{i=\gamma'(i)}(\gamma')\in$ \linebreak$\in\enc(\rho_{i=\gamma'(i)}
(\bigcup A))$;
\item\label{lemma-diminishable:3}
if  $\gamma ,\gamma'\in V_n$ are distinct such that $\rho_{i=\gamma 
(i)}(\gamma )\in\enc(\rho_{i=\gamma (i)}(\bigcup A))$ 
and $\rho_{i=\gamma'(i)}(\gamma')\in\enc(\rho_{i=\gamma'(i)}(\bigcup A))$, then $|A|\ge n-1$ and $\chi(\gamma)\ne\chi(\gamma')$. Moreover,
\begin{itemize}
\item if $\gamma(i)=\gamma'(i)$ then either  $|A|=n-1$ and $n_{\gamma(i)}=n-1$, or $|A|=n$ and $n_{\gamma(i)}\ge n-1$; 
\item if
$\gamma(i)\ne\gamma'(i)$ then either $|A|=n-1$ and $n_0=n_1=0$, or $|A|=n$ and $n_0,n_1\le1$.
\end{itemize}
\end{enumerate}
\end{lemma}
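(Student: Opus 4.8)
The plan is to reduce all three parts to a single parity count of the vertices of $\bigcup A$ on each side of coordinate $i$. Since $A$ is diminishable it is odd, so each pair contributes exactly one vertex of each parity; in particular $\bigcup A$ contains exactly $|A|$ vertices of each parity, every both-$k$ pair $\{\alpha,\beta\}$ (with $\alpha(i)=\beta(i)=k$) projects under $\rho_{i=k}$ to two vertices of opposite parity, and every mixed pair contributes one vertex to each side. I would first record the translation underlying everything: because $\rho_{i=k}$ is an isomorphism of the subcube $\{v\in V_n\mid v(i)=k\}$ onto $Q_{n-1}$, a vertex $\gamma$ with $\gamma(i)=k$ satisfies $\rho_{i=k}(\gamma)\in\enc(\rho_{i=k}(\bigcup A))$ if and only if $\gamma\notin\bigcup A$ and all $n-1$ neighbours $\gamma\oplus e_j$ with $j\ne i$ lie in $\bigcup A$. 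These $n-1$ neighbours all have parity $-\chi(\gamma)$ and $i$-th coordinate $k$; I will call such a $\gamma$ \emph{good}.

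For part \eqref{lemma-diminishable:1}, a good vertex on side $1$ of parity $\chi$ forces $\bigcup A$ to contain at least $n-1$ vertices of parity $-\chi$ with $i$-th coordinate $1$. Their number is at most $n_1+m$, where $m=|A|-n_0-n_1$ is the number of mixed pairs; since $n_1+m=|A|-n_0$, this gives $|A|-n_0\ge n-1$, i.e. $n_0\le|A|-(n-1)$, from which the two stated cases ($n_0=0$ for $|A|=n-1$ and $n_0\le1$ for $|A|=n$) follow immediately. For part \eqref{lemma-diminishable:2} and the opening assertions of \eqref{lemma-diminishable:3}, I would run the same count for \emph{two} good vertices of equal parity: their $(n-1)$-element neighbourhoods both sit inside the set of $|A|$ vertices of $\bigcup A$ of the opposite parity. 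If the two lie on different sides the neighbourhoods are disjoint, forcing $|A|\ge 2(n-1)$; if they lie on the same side they share at most two common neighbours (two distinct equal-parity vertices of a hypercube do, exactly as in the proof of Observation~\ref{obser-encom}), forcing $|A|\ge 2n-4$. Both inequalities contradict $|A|\le n$ for $n>4$, so there is at most one good vertex of each parity, hence at most two in all (this is \eqref{lemma-diminishable:2}), any two distinct good vertices satisfy $\chi(\gamma)\ne\chi(\gamma')$, and a single good vertex already needs $n-1$ neighbours among the $|A|$ vertices of one parity, giving $|A|\ge n-1$.

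The ``moreover'' part of \eqref{lemma-diminishable:3} splits according to whether the two good vertices share a side. In the same-side case (both on side $k$) their opposite parities force at least $n-1$ vertices of \emph{each} parity on side $k$, so the total $2n_k+m$ of side-$k$ vertices is at least $2(n-1)$; substituting $m=|A|-n_k-n_{1-k}$ gives $n_k-n_{1-k}\ge 2n-2-|A|$. For $|A|=n-1$ this reads $n_k-n_{1-k}\ge n-1$, and since $n_k\le|A|=n-1$ it forces $n_k=n-1$; for $|A|=n$ it reads $n_k-n_{1-k}\ge n-2$, and since $i$ is separating we have $n_{1-k}\ge1$, whence $n_k\ge n-1$. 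In the different-side case, assume without loss of generality $\chi(\gamma)=+1$ with $\gamma$ on side $k$ and $\chi(\gamma')=-1$ on side $1-k$; then side $k$ holds at least $n-1$ vertices of parity $-1$ and side $1-k$ at least $n-1$ of parity $+1$. As each parity class of $\bigcup A$ has exactly $|A|$ vertices, the complementary side can hold at most $|A|-(n-1)$ of that parity, while the $n_{1-k}$ both-$(1-k)$ pairs (resp.\ the $n_k$ both-$k$ pairs) already place that many vertices of parity $-1$ (resp.\ $+1$) there; hence $n_k,n_{1-k}\le|A|-(n-1)$, giving $n_0=n_1=0$ when $|A|=n-1$ and $n_0,n_1\le1$ when $|A|=n$.

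The only genuinely delicate point is the translation in the first paragraph, namely keeping straight which vertices of $\bigcup A$ land on which side and with which parity under $\rho_{i=k}$; once that bookkeeping is fixed, each remaining claim is a one-line counting inequality, the sole quantitative input being the hypercube fact that two equal-parity vertices share at most two common neighbours. I anticipate no combinatorial obstacle beyond this careful accounting of sides and parities.
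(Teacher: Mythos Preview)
Your proposal is correct and follows essentially the same approach as the paper: both arguments reduce to counting, for each side and each parity, how many vertices of $\bigcup A$ can land there, using that an odd pair contributes one vertex of each parity and that two equal-parity vertices share at most two common neighbours. The paper packages the count via the set $X$ of subcube-neighbours of $\gamma$ and $\gamma'$, while you track the quantities $n_k$, $m$, and $|A|$ directly, but the inequalities obtained and the way they force the stated conclusions are identical.
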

\begin{proof}
Assume that $\gamma\in\enc(\rho_{i=1}(\bigcup A)$ and either $
|A|=n-1$, 
$n_1=0$ or $|A|=n$ and $n_1=1$. Then every neighbor $\alpha$ of $
\gamma$ with 
$\alpha (i)=1$ belongs to $\rho_{i=1}(\bigcup A)$. Therefore there exist $
n-1$ vertices 
$\alpha\in\bigcup A$ with $\alpha (i)=1$ and $\chi (\gamma )\ne\chi 
(\rho_{i=1}(\alpha ))$. If $|A|=n-1$ then there are
$|A|-n_0$ such vertices, if $|A|=n$ then there are  
$|A|-n_0-1$ such vertices. This completes the proof of \eqref{lemma-diminishable:1}.

To prove \eqref{lemma-diminishable:3}, first note that the that the existence of a $\gamma\in V_n$ with  
$\rho_{i=\gamma (i)}(\gamma )\in\enc(\rho_{i=\gamma (i)}(\bigcup 
A))$ immediately implies that $|A|\ge n-1$. 
Set 
$$X=\{\alpha\in V_n\mid(\alpha (i)=\gamma(i)\wedge\{\alpha ,\gamma 
\}\in E(Q_n)) \vee (\alpha(i)=\gamma'(i)\wedge\{\alpha ,\gamma'\}\in E(Q_n))\},$$
then $X\subseteq\bigcup A$. First assume that $\chi (\gamma )=\chi (\gamma')$. Then $|X|\ge 2n-4$, but as the vertices of X share the same parity, we also have $|X|\le|A|\le n$. This would imply that $n\le4$, contrary to our assumption. It follows that $\chi (\gamma )\ne\chi (\gamma')$.

If $\gamma (i)=\gamma'(i)$ then our assumptions on $\gamma$ and $\gamma'$ imply that $\{\gamma,\gamma'\}\not\in E(Q_n)$. 
Hence $|X|=2n-2$, $X$ is balanced and all vertices of $X$ share the same $i$-th coordinate. Thus either $|A|=n-1$ and $n_{\gamma(i)}=n-1$, or $|A|=n$ and  --- considering that $i$ is separating for $A$ --- $n_{\gamma(i)}\ge n-1$. 
If $\gamma (i)\ne\gamma'(i)$ then $|X|=2n-2$, $X$ consists of $n-1$ vertices with the $i$-th coordinate $0$ and of $n-1$ vertices with the $i$-th 
coordinate $1$. Moreover, the vertices of $X$ with the same $i$-th coordinate share the same parity that is opposite for vertices of distinct $i$-th coordinate. Therefore either $|A|=n-1$ and $n_0=n_1=0$, or $|A|=n$ and $n_0,n_1\le1$.
The proof of \eqref{lemma-diminishable:3} is complete.

To complete the proof of the lemma, note that \eqref{lemma-diminishable:2} follows from \eqref{lemma-diminishable:3}. Indeed, if there are three pairwise distinct vertices of $V_n$ satisfying the assumptions of \eqref{lemma-diminishable:2}, then two of them must share the same parity, contrary to \eqref{lemma-diminishable:3}. The proof is complete. 
\end{proof}
\begin{lemma}
\label{lemma-diminishableII}
Let $n\ge 6$, $i\in[n]$ and $A\in\Omega_n$ such that $n-1\le|A|\le n+3$,
$|A|-n+1\le n_0$, $n_1=|A|-n+1$ where $\sigma_i(A)=(n_0,n_1)$, and $\enc(\rho_{i=0}(\bigcup A))=\emptyset$.
Then
\begin{enumerate}[\upshape(1)]
\item\label{lemma-diminishableII:1}
if there exist two distinct pairs $\{\alpha ,\beta \},\{\alpha',\beta'
\}\in A$ with
$\alpha (i)=\alpha'(i)=0\ne\beta (i)=\beta'(i)$ and $\chi (\alpha
)=\chi (\alpha')$, then there exists
$\gamma\in V_n$   such that $\gamma(i)=0$, $\gamma ,\gamma\oplus e_i\notin\bigcup
A$, $\{\{\alpha ,\gamma \},\{\alpha',\gamma \}\}\cap E(Q_n)\ne\emptyset$,
and for both $\ell\in[2]$ we have
$$\enc(\rho_{i=\ell}(\bigcup A\cup\{\gamma,\gamma\oplus e_i\}))=\emptyset;$$
\item\label{lemma-diminishableII:2}
if there exist three pairwise distinct pairs
$\{\alpha ,\beta \},\{\alpha',\beta'\},\{\alpha^{\prime\prime},\beta^{
\prime\prime}\}\in A$ with
$\alpha (i)=\alpha'(i)=\alpha^{\prime\prime}(i)=0\ne\beta (i)=\beta'
(i)=\beta^{\prime\prime}(i)$ and
$\chi (\alpha )=\chi (\alpha')=\chi (\alpha^{\prime\prime})$, then there exist
distinct $\gamma,\gamma'\in V_n$ such
that $\gamma(i)=\gamma'(i)=0$, $\gamma,\gamma\oplus e_i,\gamma',\gamma'\oplus e_i
\notin\bigcup A$, $\{\gamma,\eta\},\{\gamma',\eta'\}\in E(Q_n)$ for some distinct
$\eta,\eta'\in\{\alpha ,\alpha',\alpha^{\prime\prime}\}$ and for both $\ell\in[2]$
we have $$\enc(\rho_{i=\ell}(\bigcup A\cup\{\gamma,\gamma\oplus e_i,\gamma',\gamma'\oplus e_i\}))=\emptyset.$$
\end{enumerate}
\end{lemma}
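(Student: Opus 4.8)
The plan is to prove both parts by an existence-by-counting argument, in the spirit of the proof of Lemma~\ref{lemma-on-construction-simple} and of part~\eqref{cor-on-construction-simple:7} of Lemma~\ref{cor-on-construction-simple}, the novelty being the extra demand that the new vertex be \emph{adjacent} to one of the prescribed endpoints. I first treat~\eqref{lemma-diminishableII:1}. Since $\{\alpha,\beta\},\{\alpha',\beta'\}$ are odd with $\alpha(i)=\alpha'(i)=0$ and $\chi(\alpha)=\chi(\alpha')$, every admissible $\gamma$ lies in the set $Y$ of vertices of the slice $\{v:v(i)=0\}$ adjacent to $\alpha$ or to $\alpha'$; all of them have $\gamma(i)=0$ and $\chi(\gamma)=-\chi(\alpha)$. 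As $\alpha,\alpha'$ are distinct and of equal parity they share at most two common neighbours, so $|Y|\ge 2(n-1)-2=2n-4$. I would then discard from $Y$ the vertices forbidden for four reasons: (a) $\gamma\in\bigcup A$; (b) $\gamma\oplus e_i\in\bigcup A$; (c) adjoining $\gamma$ to the $0$-slice creates an encompassed vertex; (d) adjoining $\gamma\oplus e_i$ to the $1$-slice creates an encompassed vertex. The goal is to show $Y$ cannot be exhausted by these.

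For the vertex count write $\sigma_i(A)=(n_0,n_1)$ and let $c=|A|-n_0-n_1$ be the number of crossing pairs; by hypothesis $c=n-1-n_0$. Because each odd pair has endpoints of opposite parity, the number of $0$-slice vertices of parity $-\chi(\alpha)$ in $\bigcup A$ equals $n_0+x$ and the number of $1$-slice vertices of parity $\chi(\alpha)$ equals $n_1+x$, where $x$ counts the crossing pairs whose $0$-slice endpoint has parity $-\chi(\alpha)$. The key observation is that the two prescribed pairs have $0$-slice endpoints $\alpha,\alpha'$ of parity $\chi(\alpha)$, hence are not counted by $x$, so $x\le c-2=n-3-n_0$. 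Since each vertex forbidden by (a) is such a $0$-slice vertex of parity $-\chi(\alpha)$ and each forbidden by (b) is such a $1$-slice vertex of parity $\chi(\alpha)$, reasons (a) and (b) together discard at most $(n_0+x)+(n_1+x)=n_0+n_1+2x\le 2n-6$ vertices, using $n_0\ge n_1$. Thus at least two vertices of $Y$ survive (a) and (b), with an extra surplus of $n_0-n_1$.

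For (c) I would invoke Observation~\ref{obser-encom}: since $\enc(\rho_{i=0}(\bigcup A))=\emptyset$ and $\rho_{i=0}(\bigcup A)$ arises from a balanced pair-set of size $\le|A|\le 2(n-1)-3$, adjoining one vertex can newly encompass a vertex for at most one choice of $\gamma$, so (c) discards at most one vertex. For (d) the point is a parity computation: a $1$-slice vertex $\eta$ encompassed after adjoining $\gamma\oplus e_i$ needs $n-1$ neighbours of parity $-\chi(\eta)$ inside the $1$-slice, and counting as above shows that a vertex of parity $-\chi(\alpha)$ can never be encompassed, because fewer than $n-1$ vertices of parity $\chi(\alpha)$ are available there; hence every encompassed $1$-slice vertex has parity $\chi(\alpha)$, and by Observation~\ref{obser-encom} there is at most one. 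A short count further shows (d) is vacuous unless $n_0\le n_1+1$. Combining: when $n_0\ge n_1+2$ reason (d) is empty and the surplus $2+(n_0-n_1)\ge4$ beats the single vertex of (c); when $n_0=n_1+1$ the surplus $3$ beats the at most one vertex from each of (c),(d); in both cases an admissible $\gamma$ exists.

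The main obstacle is the tight case $n_0=n_1$, where both slices carry $|A|=n_0+n-1$ vertices and the surplus after (a),(b) is only $2$. Here an encompassed vertex may already sit in the $1$-slice before anything is added; the parity computation above forces such a vertex to have parity $\chi(\alpha)$, hence (again by Observation~\ref{obser-encom}) there is at most one, say $\eta_0$, and covering it forces $\gamma$ to be the $0$-slice copy of $\eta_0$. The crux is to verify that this forced $\gamma$ is nonetheless a legitimate member of $Y$ avoiding (a),(b), which I would extract from the fact that the $1$-slice neighbourhood of $\eta_0$ consists exactly of the parity-$(-\chi(\alpha))$ vertices there, among them $\beta,\beta'$; when no such $\eta_0$ exists, the two surviving vertices together with the at-most-one forbidden by each of (c),(d) still leave room once one splits on whether $\dist(\alpha,\alpha')\ge4$ (which raises $|Y|$ to $2n-2$). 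Finally, part~\eqref{lemma-diminishableII:2} follows the same scheme with three attachment points $\alpha,\alpha',\alpha''$: the larger neighbourhood supplies enough candidates to pick two \emph{distinct} vertices $\gamma,\gamma'$, attached to two different endpoints, whose columns are simultaneously safe; equivalently, one may apply~\eqref{lemma-diminishableII:1} once to produce $\gamma$ and then again to the augmented vertex set, whose $0$-slice is encompass-free by the conclusion just obtained, the two remaining aligned crossing pairs guaranteeing that the hypotheses persist.
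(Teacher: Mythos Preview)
Your counting framework matches the paper's, but you miss the one observation that makes the paper's argument short. The paper first proves the bound
\[
|\{\eta\in\textstyle\bigcup A:\chi(\eta)\ne\chi(\alpha),\ \eta(i)=0\}|\le n_0+(c-2)=|A|-n_1-2=n-3,
\]
and the symmetric bound $\le n-3$ for slice-$1$ vertices of parity $\chi(\alpha)$ (this is exactly your computation $n_0+x\le n-3$, resp.\ $n_1+x\le n-3$). Since encompassment in $Q_{n-1}$ needs $n-1$ neighbours of a fixed parity, adjoining one vertex $\gamma$ with $\gamma(i)=0$, $\chi(\gamma)=-\chi(\alpha)$ raises the relevant count to at most $n-2<n-1$; together with the hypothesis $\enc(\rho_{i=0}(\bigcup A))=\emptyset$ this shows that \emph{every} admissible $\gamma$ keeps both slices encompass-free. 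In your language, reasons (c) and (d) discard \emph{zero} candidates, not one each. So the $\ge 2$ survivors of (a),(b) already finish part~\eqref{lemma-diminishableII:1}, and the whole case split on $n_0$ versus $n_1$ is unnecessary.

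Your tight case $n_0=n_1$ is genuinely incomplete. You claim that if a vertex $\eta_0$ is pre-encompassed in the $1$-slice then the forced $\gamma=\eta_0\oplus e_i$ lies in $Y$, arguing that $\beta,\beta'$ are among the $1$-slice neighbours of $\eta_0$. But that only yields $\{\gamma,\beta\oplus e_i\},\{\gamma,\beta'\oplus e_i\}\in E(Q_n)$, not $\{\gamma,\alpha\}$ or $\{\gamma,\alpha'\}$: the pairs $\{\alpha,\beta\},\{\alpha',\beta'\}$ are odd crossing pairs, not edges in direction $i$, so $\beta\oplus e_i\ne\alpha$ in general. The subsequent ``split on $\dist(\alpha,\alpha')\ge 4$'' is also only sketched. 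Once you have the $n-3$ bound above, this entire case disappears.

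For part~\eqref{lemma-diminishableII:2} the paper does a direct count with three attachment points ($|Y|\ge 3n-8$, at most $2|A|-n_0-n_1-6$ forbidden by (a),(b), leaving $\ge n$), then discards at most four more candidates because adding \emph{two} vertices can reach the threshold $n-3+2=n-1$; since $n\ge 6$ enough remain. Your iterative approach (apply~\eqref{lemma-diminishableII:1} twice) is a legitimate alternative, and with the $n-3$ bound the persistence of the hypotheses after the first step is immediate.
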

\begin{proof}
Note that our assumptions imply that $\enc(\rho_{i=\ell}(\bigcup A))=\emptyset$ for both $\ell\in[2]$ (the case $\ell=1$ follows from the fact that $n_1\le4<n-1$).
First assume that $A\in\oddp_n$, $A$ satisfies the assumptions of the lemma, and $\{\alpha,\beta\},\{\alpha',\beta'\}\in A$ are pairs such that $\chi(\alpha)=\chi(\alpha')$ and $\alpha(i)=\alpha'(i)=0\ne\beta(i)=\beta'(i)$.
Then
\begin{align}\label{lemma-diminishableII:a}\tag{$*$}
\begin{split}
|\{\eta\in\bigcup A\mid\chi(\alpha)\ne\chi(\eta),\,\eta(i)=0\}|&\le n_0+|A|-n_0-n_1-2=|A|-n_1-2=n-3\\
|\{\eta\in\bigcup A\mid\chi(\alpha)=\chi(\eta),\,\eta(i)=1\}|&\le n_1+|A|-n_0-n_1-2=|A|-n_0-2\le n-3
\end{split}
\end{align}
because every pair $\{\kappa,\kappa'\}\in A$ with $\kappa(i)=\kappa'(i)$ contributes by $1$ to the size of one of these sets (if $\kappa(i)=0$ then it contributes to the first set, otherwise it contributes to the second set) and a pair $\{\kappa,\kappa'\}\in A$ with $\kappa(i)=0\ne\kappa'(i)$ and $\chi(\kappa)\ne\chi(\alpha)$ contributes to both sets by at most one. Therefore for every $\gamma\in V_n$ such that $\gamma(i)=0$, $\chi(\gamma)\ne\chi(\alpha)$ and
$\gamma\notin\bigcup A$ we have that $\enc(\rho_{i=0}(\bigcup A\cup\{\gamma\}))=\emptyset$
because of \eqref{lemma-diminishableII:a}, $n-3+1<n-1$ and $\enc(\rho_{i=0}(\bigcup A))=\emptyset$. Analogously for every $\gamma\in V_n$ such that $\nu(i)=1$, $\chi(\gamma)=\chi(\alpha)$ and
$\gamma\notin\bigcup A$ we have that $\enc(\rho_{i=1}(\bigcup A\cup\{\gamma\}))=\emptyset$. Thus if there are two distinct pairs
$\{\alpha ,\beta \},\{\alpha',\beta'\}\in A'$ such that $0=\alpha
(i)=\alpha'(i)\ne\beta (i)=\beta'(i)=1$ and
$\chi (\alpha )=\chi (\alpha')$ and there exists $\gamma\in V_n$ with $\gamma(i)=0$,
$\gamma,\gamma\oplus e_i\notin\bigcup A$, and $\{\alpha,\gamma\}\in E(Q_n)$, then
$\enc(\rho_{i=0}(\bigcup A\cup\{\gamma,\gamma\oplus e_i\}))=\enc(\rho_{i=1}(\bigcup A\cup\{\gamma,\gamma\oplus e_i\}))=\emptyset$.

If there exist two distinct pairs $\{\alpha ,\beta \},\{\alpha',\beta'\}\in A$ such that $0=\alpha
(i)=\alpha'(i)\ne\beta (i)=\beta'(i)=1$ and
$\chi (\alpha )=\chi (\alpha')$, then there exist at least $2n-4$ vertices $
\gamma$ such
that $\gamma (i)=0$ and $\{\alpha ,\gamma \}\in E(Q_n)$ or $\{\alpha'
,\gamma \}\in E(Q_n)$.
We claim that the number of vertices $\gamma\in V_n$ such that $\gamma(i)=0$ and $\{\alpha,\gamma\}\in E(Q_n)$ or $\{\alpha',\gamma\}\in E(Q_n)$ and $\gamma\in\bigcup A$ or $\gamma\oplus e_i\in\bigcup A$ is at most $2|A|-n_0-n_1-4$. To verify the claim, set $X=\{\gamma\in V_n\mid \gamma(i)=0,\{\{\alpha ,\gamma \},\{\alpha',\gamma \}\}\cap E(Q_n)\ne\emptyset\}$ and note that the cardinalities of the sets
\begin{gather*}
\{\gamma\in X\mid\text{$\gamma$ belongs to some $\{\kappa,\kappa'\}\in A$ with $\kappa(i)=\kappa'(i)=0$}\},\\
\{\gamma\in X\mid\text{$\gamma\oplus e_i$ belongs to some $\{\kappa,\kappa'\}\in A$ with $\kappa(i)=\kappa'(i)=1$}\},\\
\{\gamma\in X\mid\text{$\gamma$ or $\gamma\oplus e_i$ belongs to some $\{\kappa,\kappa'\}\in A$ with $\kappa(i)\ne\kappa'(i)$}\},
\end{gather*}
do not excedd $n_0$, $n_1$ and $2(|A|-n_0-n_1-2)$, respectively, which sums up to $2|A|-n_0-n_1-4$. Since
$$
2n-4-(2|A|-n_0-n_1-4)=(n-|A|+n_0)+(n-|A|+n_1)\ge2,
$$
there are at least two vertices $\gamma$ satisfying the conclusion of part \eqref{lemma-diminishableII:1}.

 If there exist three pairwise distinct pairs $\{\alpha ,\beta \},\{\alpha',\beta'\},\{\alpha'',\beta''\}\in A$ such that $0=\alpha
(i)=\alpha'(i)=\alpha''(i)\ne\beta (i)=\beta'(i)=\beta''(i)=1$, then there exist at least $3n-8$ vertices $
\gamma\in V_n$ such
that $\gamma (i)=0$ and $\{\alpha ,\gamma \}$ or $\{\alpha'
,\gamma \}$ or $\{\alpha'',\gamma\}$ belongs to $E(Q_n)$. Analogously, there exist at most $2|A|-n_0-n_1-6$ vertices $
\gamma\in V_n$ such
that $\gamma (i)=0$, $\{\gamma,\gamma\oplus e_i\}\cap \bigcup A \neq\emptyset$ and $\{\alpha ,\gamma \}$ or $\{\alpha'
,\gamma \}$ or $\{\alpha'',\gamma\}$ belongs to $E(Q_n)$. Thus there exist at least
$$
3n-8-2|A|+n_0+n_1+6=3n-2|A|+n_0+n_1-2\ge n
$$
vertices $\gamma\in V_n$ such that 
\begin{align}\label{lemma-diminishableII:b}\tag{$**$}
\text{$\gamma (i)=0$, $\gamma,\gamma\oplus e_i\not\in\bigcup A$  and 
$\{\gamma ,\delta \}\in E(Q_n)$ for some $\delta\in \{\alpha ,\alpha',\alpha^{\prime\prime}\}$.}
\end{align}
Note that such vertices $\gamma$ share the same parity, different from $\chi(\alpha)$. Using \eqref{lemma-diminishableII:a} and the fact that two vertices of the same parity have at most $2<n-3$ neighbors in common, we infer that for each $\ell\in[2]$ there are at most two distinct vertices $\gamma_\ell$ and $\gamma_\ell'$ such that $\enc(\rho_{l=0}(\bigcup A\cup \{\gamma_\ell,\gamma_\ell'\}))\ne\emptyset$.
Since $n\ge 6$, there are distinct $\gamma ,\gamma'$ 
satisfying \eqref{lemma-diminishableII:b} such that $\enc(\rho_{i=\ell}(\bigcup A\cup\{\gamma,\gamma\oplus e_i,\gamma',\gamma'\oplus e_i\}))=\emptyset$ for both $\ell\in[2]$.
This completes the proof of \eqref{lemma-diminishableII:2}.
\end{proof}
We say that $i\in[n]$ is \emph{bad} for a diminishable pair-set $A\in\dimp_n$
if $|A|=n$, $i$ is separating for $A$ with
$\sigma_i(A)=(n_0,n_1)$ and there exist $k\in[2]$ such that $n_k=n-3$,
$n_{1-k}=1$ and distinct pairs $\{\alpha_0,\beta_0\},\{\alpha_1,\beta_1\},\{\alpha_2,\beta_2\}\in A$
with $\alpha_0(i)=\alpha_1(i)=k\ne\beta_0(i)=\beta_1(i)=\alpha_2(i)=\beta_2(i)$,
$\chi(\alpha_0)\ne\chi(\alpha_1)$, and for every $\kappa\in V_n$ with $\kappa(i)=k$ and $\{\kappa,\alpha_j\}\in E(Q_n)$ for some $j\in\{0,1\}$ we have
$\kappa\in \bigcup (A\setminus \{\{\alpha_l,\beta_l\}\mid l\in[3]\})$ or
$\kappa =\alpha_{1-j}$ or $\kappa\oplus e_i\in \{\alpha_2,\beta_0,\beta_1,\beta_2\}$.
Observe that for every $\{\alpha,\beta\}\in(A\setminus\{\{\alpha_j,\beta_j\}\mid j=0,1,2\})$
we have $\alpha(i)=\beta(i)=k$.
\begin{proposition}
\label{prop-diminishable}
Let $n\ge 6$ and $i\in[n]$.  Let $A\in\dimp_n$ such that $\sigma_i(A)=(n_0,n_1)$ and
\begin{itemize}
\item
either $|A|\le n-1$, $n_0,n_1\le |A|-3$ and if $n=6$ and $|A|=5$ then $\sigma_i(A)\ne(0,0)$;
\item
or $|A|=n$, $1\le n_0,n_1\le n-3$, $i$ is separating for $A$ and $i$ is not bad for $A$.
\end{itemize}
Then there exists a simple $i$-completion $B\in \mathcal C_i(A)$ with
$\rho_{i=0}(B),\rho_{i=1}(B)\in\dimp_{n-1}$. If $|A|=5$ and $n=6$ then for some $i\in[6]$
either $\max\{n_0,n_1\}\ge3$ or there exists a simple $i$-completion $B$ of $A$ with $\rho_{i=0}(B),\rho_{i=1}(B)\in\dimp_5$.
\end{proposition}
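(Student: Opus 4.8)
The plan is to produce the desired $B$ by running Construction~\ref{construction-simple} with a good choice of split vertices. Since $A$ is odd, every $(A,i)$-matching is empty, so only steps \eqref{construction-simple:i} and \eqref{construction-simple:ii} ever fire: a pair $\{\alpha,\beta\}$ with $\alpha(i)=\beta(i)$ is kept, while a pair with $\alpha(i)\neq\beta(i)$ is split through a freshly chosen $\gamma$ into $\{\alpha,\gamma\}$ and $\{\gamma\oplus e_i,\beta\}$. Writing $\sigma_i(A)=(n_0,n_1)$ and $c=|A|-n_0-n_1$ for the number of split pairs, Lemma~\ref{lemma-on-construction-simple-properties}\,\eqref{lemma-on-construction-simple-properties:5} gives $|\rho_{i=0}(B)|=|A|-n_1$ and $|\rho_{i=1}(B)|=|A|-n_0$, and both images are odd. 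Because $n-1\ge5$, any odd pair-set of size at most $n-2$ lies in $\dimp_{n-1}$ automatically; thus the whole task reduces to the following: whenever a reduced set $\rho_{i=\ell}(B)$ attains the full size $n-1$, we must guarantee that it contains at least two edge-pairs and that $\enc(\rho_{i=\ell}(B))=\emptyset$.

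First I would dispose of the easy regime. Existence of a simple $i$-completion follows from Lemma~\ref{cor-on-construction-simple}\,\eqref{cor-on-construction-simple:2} when $|A|\le n-1$, and from Lemma~\ref{cor-on-construction-simple}\,\eqref{cor-on-construction-simple:3} together with the separating hypothesis when $|A|=n$. The two size formulas show that a reduced set reaches $n-1$ only when the \emph{opposite} coordinate count is extreme: in the case $|A|\le n-1$ this forces $|A|=n-1$ with $n_0=0$ or $n_1=0$, and in the case $|A|=n$ it forces $n_0=1$ or $n_1=1$. In every remaining configuration both reduced sizes are at most $n-2$, and any simple $i$-completion already works.

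Next I would treat the critical size-$(n-1)$ images. The guiding observation is that edge-pairs can be manufactured during the split: an edge-pair of $A$ in direction $i$ splits (taking $\gamma=\alpha\oplus e_j$) into edge-pairs in \emph{both} reduced cubes, a separating edge-pair survives as a kept edge-pair in its own cube, and a general split $\{\alpha,\beta\}$ yields an edge-pair $\{\alpha,\gamma\}$ in $\rho_{i=\alpha(i)}(B)$ as soon as $\gamma$ is chosen adjacent to $\alpha$. To realise such adjacent choices while keeping the reduced cubes free of encompassed vertices, I would feed two or three split pairs whose coord-$\ell$ endpoints share a parity into Lemma~\ref{lemma-diminishableII}\,\eqref{lemma-diminishableII:1} or \eqref{lemma-diminishableII:2}, producing the required edge-pairs together with an empty $\enc$ in both reduced cubes; the remaining splits are then completed by the choice rule in the proof of Lemma~\ref{cor-on-construction-simple}\,\eqref{cor-on-construction-simple:7}, which preserves emptiness of $\enc$. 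The inputs $\enc(\rho_{i=\ell}(\bigcup A))=\emptyset$ needed to start this are supplied by Lemma~\ref{lemma-diminishable}\,\eqref{lemma-diminishable:2}--\eqref{lemma-diminishable:3}, which bound the sub-cube-encompassed vertices of $A$ and pin down the only sizes at which they appear. The hypotheses $n_0,n_1\le|A|-3$ (respectively $n_0,n_1\le n-3$) are exactly what forces $c\ge3$ (respectively $c\ge2$) split pairs, so by pigeonhole a parity class of coord-$\ell$ endpoints is large enough to invoke Lemma~\ref{lemma-diminishableII} and obtain the second edge-pair.

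The genuinely delicate point, and the place where the ``not bad'' hypothesis is indispensable, is the tight configuration of the case $|A|=n$ with $\{n_0,n_1\}=\{1,n-3\}$. Here $c=2$, the big reduced cube has size $n-1$ with only one separating edge-pair guaranteed, and if the two split pairs have endpoints of opposite parity then Lemma~\ref{lemma-diminishableII} cannot supply the second edge-pair. One must instead locate a single vertex $\gamma$ adjacent to one of the two split endpoints $\alpha_0,\alpha_1$ that is admissible for step \eqref{construction-simple:ii}; the blocking alternatives in the definition of a bad index --- $\gamma\in\bigcup A$, $\gamma$ equal to the other split endpoint, or $\gamma\oplus e_i$ already occupied --- are precisely the ways such a $\gamma$ could fail to exist, so ``$i$ not bad'' delivers the missing edge-pair, after which Observation~\ref{obser-encom} and Lemma~\ref{lemma-diminishable} keep $\enc$ empty. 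The boundary $n=6$, $|A|=5$ needs the reduced cubes to land in the more restrictive $\dimp_5$; since $\sigma_i(A)=(0,0)$ is excluded by hypothesis, I would settle the final clause by observing that either some $i$ has $\max\{n_0,n_1\}\ge3$, making one reduced set small enough to be diminishable outright, or a direct simple $i$-completion places both reduced sets in $\dimp_5$. I expect the main obstacle throughout to be the simultaneous control of the two requirements --- at least two edge-pairs together with an empty $\enc$ --- in the sub-cases where \emph{both} reduced cubes attain size $n-1$, which is exactly where the interplay of Lemmas~\ref{lemma-diminishable}, \ref{lemma-diminishableII} and \ref{cor-on-construction-simple}\,\eqref{cor-on-construction-simple:7} with the combinatorial ``not bad'' condition must be orchestrated with care.
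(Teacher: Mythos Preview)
Your overall architecture matches the paper's proof closely: reduce to the critical configurations where some $\rho_{i=\ell}(B)$ reaches size $n-1$, then build a partial $i$-completion with two edge-pairs on the heavy side(s) and empty $\enc$, and finish via Lemma~\ref{cor-on-construction-simple}\,\eqref{cor-on-construction-simple:7}. The use of Lemma~\ref{lemma-diminishableII} to manufacture adjacent split-vertices, and the identification of the ``not bad'' hypothesis as exactly what is needed in the tight $\{n_0,n_1\}=\{1,n-3\}$ case, are both right.

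There is one genuine gap. You write that the hypothesis $\enc(\rho_{i=\ell}(\bigcup A))=\emptyset$ required by Lemma~\ref{lemma-diminishableII} ``is supplied by Lemma~\ref{lemma-diminishable}\,\eqref{lemma-diminishable:2}--\eqref{lemma-diminishable:3}''. It is not: Lemma~\ref{lemma-diminishable} only tells you there are at most two such encompassed vertices, of opposite parity, and constrains the $(n_0,n_1)$ values at which they can occur. It does \emph{not} rule them out, and in the critical regimes $(|A|,n_\ell)\in\{(n-1,0),(n,1)\}$ they can and do appear. The paper handles this by an explicit preprocessing step (its \textsc{Step~1}): for each $\gamma$ with $\rho_{i=k}(\gamma)\in\enc(\rho_{i=k}(\bigcup A))$ one has $\gamma\oplus e_i\notin\bigcup A$ (since $A\in\dimp_n$ forbids $\gamma\in\enc(A)$), so one may route some split pair through $\gamma$, which simultaneously kills the encompassed vertex and creates an edge-pair $\{\gamma,\gamma\oplus e_i\}$. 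Only after this pass does one have a partial $i$-completion $A'$ with $\enc(\rho_{i=\ell}(\bigcup A'))=\emptyset$, and only then can Lemma~\ref{lemma-diminishableII} be invoked to add the remaining edge-pairs. Without this step your plan would stall precisely in the cases Lemma~\ref{lemma-diminishable}\,\eqref{lemma-diminishable:3} isolates; once you insert it, the rest of your outline goes through essentially as the paper's \textsc{Step~2}.
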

\begin{proof}
Assume that $A\in\Upsilon_n$ is a diminishable pair-set satisfying the assumptions of
the proposition. By Lemma~\ref{cor-on-construction-simple}\;\eqref{cor-on-construction-simple:1},
there exists a simple
$i$-completion $B$ of $A$. If $|A|<n-1$, then by Lemma~\ref{lemma-on-construction-simple},
$|\rho_{i=k}(B)|\le|A|<n-1$ for every simple $i$-completion $B$ of $A$ and both
$k\in[2]$. Since $B$ is odd, we conclude that $\rho_{i=k}(B)$ is diminishable for both
$k\in[2]$. Therefore we can assume that $|A|\ge n-1$.

Let $\sigma_i(A)=(n_0,n_1)$. If $|A|=n-1$ and $n_0,n_1\ne 0$ or $|A|=n$ and $n_0,n_1>1$ then, by
Corollary~\ref{cor-simple} and by Lemma~\ref{lemma-on-construction-simple-properties}\,\eqref{lemma-on-construction-simple-properties:5},
for every simple $i$-completion $B$ of $A$ we have that $\rho_{i=\ell}(B)$ are
diminishable for both $\ell\in[2]$. Therefore we can assume that
\begin{enumerate}[(i)]
\item \label{prop-diminishable:i} 
either $|A|=n-1$, $0\le n_0\le n-4$, $n_1=0$, and if $n=6$, $|A|=5$ then $n_0>0$;
\item \label{prop-diminishable:ii}
or  $|A|=n$, $1\le n_0\le n-3$, $n_1=1$ and there are edge-pairs $\{\alpha_0,\beta_0\},\{\alpha_1,\beta_1\}\in A$ with $\alpha_0(i)=\beta_0(i)=0$ and $\alpha_1(i)=\beta_1(i)=1$.
\end{enumerate}
Assume that we have constructed a partial $i$-completion $B$ of $A$ satisfying
\begin{enumerate}[(a)]
\item
\label{prop-diminishable:a}
there exist two edge pairs $\{\alpha ,\beta\}\in B$ with $\alpha (
i)=\beta (i)=0$;
\item
\label{prop-diminishable:b}
if $\gamma\in V_{n-1}$ is encompassed by $\bigcup\rho_{i=k}(B)$ for some $k\in[2]$, then $
\gamma\in\bigcup\rho_{i=k}(B)$;
\item
\label{prop-diminishable:c}
if either $|A|=n-1$ and $n_0=0$, or $|A|=n$ and $n_0=1$, then there exist two edge pairs $\{\alpha ,\beta\}\in B$ with $\alpha(i)=\beta (i)=1$;
\item
\label{prop-diminishable:d}
$|B|\le n+3$ and $\sigma_i(B)=(n_0+|B|-|A|,n_1+|B|-|A|)$.
\setcounter{saveenumi}{3}
\end{enumerate}
Since $2|B|-n_0-|B|+|A|-n_1-|B|+|A|=2|A|-n_0-n_1$, $A$ is diminishable and $n\ge 6$,
we conclude that $2|A|-n_0-n_1\le 2n-2<2^{n-2}$. For $j\in[2]$, $|B|-n_j-|B|+|A|=|A|-n_j\le n-1$.
Thus we can apply Lemma~\ref{cor-on-construction-simple}\;\eqref{cor-on-construction-simple:6}
and \eqref{cor-on-construction-simple:7} and we obtain the existence of simple $i$-completion $B'$
of $B$ such that both $\rho_{i=\ell}(B')$ are diminishable for
both $\ell\in[2]$ (by Lemma~\ref{lemma-on-construction-simple}, $|\rho_{i=\ell}(B')|\le n-1$ for both $\ell\in[2]$
and, by Lemma~\ref{cor-on-construction-simple}\;\eqref{cor-on-construction-simple:6}
and \eqref{cor-on-construction-simple:7}, $B'$ satisfies the conditions \eqref{prop-diminishable:a},
\eqref{prop-diminishable:b} and \eqref{prop-diminishable:c}). Hence the proof will be complete.

We will construct a partial $i$-completion $B$ satisfying \eqref{prop-diminishable:a},
\eqref{prop-diminishable:b}, \eqref{prop-diminishable:c}, and \eqref{prop-diminishable:d} in two steps.  In {\sc Step~1},
we construct a partial $i$-completion $A'$ of $A$ such that $\enc(\rho_{i=\ell}(\bigcup A'))=\emptyset$ for both $\ell\in[2]$
which settles condition \eqref{prop-diminishable:b}. In {\sc Step~2}, we add suitable
edge pairs to construct the required partial $i$-completion $B$, satisfying also
\eqref{prop-diminishable:a} and \eqref{prop-diminishable:c}. The verification that $B$ satisfies
the condition \eqref{prop-diminishable:d} will be straightforward.
\par
{\sc Step~1}. If $\enc(\rho_{i=\ell}(A))=\emptyset$ for both $\ell\in[2]$
then $A'=A$ satisfies \eqref{prop-diminishable:b}.  Thus we can assume that there is
$\gamma\in V_n$ such that
$\rho_{i=k}(\gamma )\in\enc(\rho_{i=k}(\bigcup A))$ for some $k\in[2]$. Then $k=\gamma(i)$.

Since $A\in\Upsilon_n$, we conclude that $\gamma\notin\enc(A)$ and
therefore $\gamma\oplus e_i\notin\bigcup A$ and, by Lemma~\ref{lemma-diminishable}\;\eqref{lemma-diminishable:2},
if $\gamma'\in\enc(\rho_{i=\gamma'(i)}(A))$ then either $\gamma=\gamma'$ or
$\gamma(i)\ne\gamma'(i)$ and $\chi(\gamma)\ne\chi(\gamma')$. Thus if
$\rho_{i=\gamma(i)}(\nu)\in\enc(\rho_{i=\gamma(i)}(\bigcup A\cup\{\gamma\}))$
then $\{\nu,\gamma\}\in E(Q_n)$ and $\nu(i)=\gamma(i)$. This is a contradiction with
$\rho_{i=\gamma(i)}(\gamma)\in\enc(\rho_{i=\gamma(i)}(A))$. Therefore
$\enc(\rho_{i=\gamma(i)}(\bigcup A\cup\{\gamma\}))=\emptyset$.

Since $n_0,n_1\le n-3$, there is a pair $\{\beta ,\beta'\}\in A$ with
$\beta (i)\ne\beta'(i)$, and without loss of generality we can assume that
$\beta (i)=\gamma (i)$.
Put
$$A_1=(A\setminus \{\{\beta ,\beta'\}\})\cup \{\{\beta ,\gamma \},\{\gamma\oplus e_i,\beta'\}\},$$
then $A_1$ is a partial $i$-completion of $A$, $|A_1|=|A|+1$, $\sigma_i(A_1)=(n_0+1,n_1+1)$,
and $$\enc(\rho_{i=\gamma(i)}(A_1))=\emptyset.$$

If $\enc(\rho_{i=1-\gamma(i)}(A_1)=\emptyset$, then $A'=A_1$ satisfies \eqref{prop-diminishable:b} and we are done in this step.
Otherwise there exists  $\gamma'\in V_n\setminus\bigcup A\cup\{\gamma\}$ with $\rho_{i=l}(\gamma')\in\enc(\rho_{i=l}(\bigcup A_1))$ for some $l\in[2]$. 
Then, by Lemma~\ref{lemma-diminishable}\;\eqref{lemma-diminishable:2},
we have $\gamma(i)\ne\gamma'(i)$ and
$\chi (\gamma )\ne\chi (\gamma')$, while either $|A|=n-1$ and $n_0=0$, or $|A|=n$ and
$n_0=1$. Applying the same construction as above, put
$$A_2=(A_1\setminus \{\{\beta_1 ,\beta_1'\}\})\cup \{\{\beta_1,\gamma'\},\{\gamma'\oplus e_i,\beta_1'\}\},$$
for some $\{\beta_1,\beta_1'\}\in A$ with $\beta_1(i)\ne\beta_1'(i)$. By
Lemma~\ref{lemma-diminishable}\;\eqref{lemma-diminishable:2}, $A'=A_2$ satisfies \eqref{prop-diminishable:b}.

{\sc Step~2}.
First observe that if $|A|=n-1$ and $n_0>0$ or $|A|=n$ and $n_0>1$ then,
by Lemma~\ref{lemma-on-construction-simple-properties}\;\eqref{lemma-on-construction-simple-properties:6}, $\rho_{i=1}(B)$ is diminishable for every simple $i$-completion $B$ of $A$ and therefore the
condition \eqref{prop-diminishable:c} is satisfied for them. 

Further observe that the application of {\sc Step 1} creates one edge pair instead of an encompassed vertex and deletes no edge pair of $A$. More precisely, if there exists a vertex encompassed by $\rho_{i=\ell}(\bigcup A)$, then the added edge pair $\{\alpha,\beta\}$ satisfies  $\alpha(i)=\beta(i)=\ell$ for $\ell\in[2]$. Note that if $n_0>1$, then such an encompassed $\gamma$ satisfies $\gamma(i)=0$. Moreover, by Lemma~\ref{lemma-diminishable}\,\eqref{lemma-diminishable:3}, if there are two such encompassed vertices $\gamma,\gamma'$, then $\gamma(i)\ne\gamma(i)$. Hence if
$|A'|=|A|+2$ and $|A|=n$ or $|A'|=|A|+1$, $|A|=n$ and $n_0>1$, then $B=A'$ satisfies
the conditions \eqref{prop-diminishable:a}--\eqref{prop-diminishable:d}. We need  to deal with the rest of the cases.

If $|A|=n-1$ and $|A'|>|A|$, then there exists $\gamma\in V_n$ and we can assume that  $\gamma(i)=0$ and $\rho_{i=0}(\gamma)\in\enc(\rho_{i=0}(\bigcup A))$. Therefore
$A'$ contains an edge pair $\{\alpha',\beta'\}$ with $\alpha'(i)=\beta'(i)=0$. 
Since $n_0\le n-4$ we conclude that $A'$ contains at least two pairs $\{\alpha,\beta\}$ satisfying the assumptions of Lemma~\ref{lemma-diminishableII}\;\eqref{lemma-diminishableII:1}
and if $|A'|=|A|+2$ then $n_0=0$ by Lemma 3.8 (3) and $|A'|$ contains at least three pairs $\{\alpha,\beta\}$
satisfying the assumptions of Lemma~\ref{lemma-diminishableII}\;\eqref{lemma-diminishableII:2}.
Apply Lemma~\ref{lemma-diminishableII}\;\eqref{lemma-diminishableII:1}
to obtain a pair $\{\alpha,\beta\}\in A'$ with $\alpha(i)=0\ne\beta(i)$ and $\nu\in V_n$
such that $\{\nu,\alpha\}\in E(Q_n)$, $\nu(i)=0$ and $\nu,\nu\oplus e_i\notin\bigcup A'$.
If $|A'|=|A|+1$, set
$$B=(A'\setminus\{\{\alpha,\beta\}\})\cup\{\{\alpha,\nu\},\{\nu\oplus e_i,\beta\}\}.$$
If $|A'|=|A|+2$ then apply Lemma~\ref{lemma-diminishableII}\;\eqref{lemma-diminishableII:1}
to obtain $\{\alpha',\beta'\}\in A'\setminus\{\{\alpha,\beta\}\}$ such that $\alpha'(i)=1\ne\beta'(i)$
and $\nu'\in V_n$ such that $\nu'\ne\nu\oplus e_i$, $\{\nu',\alpha'\}\in E(Q_n)$, $\nu'(i)=1$,
and $\nu',\nu'\oplus e_i\notin\bigcup A'$. Then set
$$B=(A'\setminus\{\{\alpha,\beta\},\{\alpha',\beta'\}\})\cup\{\{\alpha,\nu\},\{\nu\oplus e_i,\beta\},\{\alpha',\nu'\},\{\nu'\oplus e_i,\beta'\}\}.$$
In both cases $B$ satisfies the conditions \eqref{prop-diminishable:a}--\eqref{prop-diminishable:d}
(Lemma~\ref{lemma-diminishable}\;\eqref{lemma-diminishable:1} implies that $B$
satisfies \eqref{prop-diminishable:b}).

If $|A|=n$, $n_0=1$ and $|A'|=|A|+1$, apply
Lemma~\ref{lemma-diminishableII}\;\eqref{lemma-diminishableII:1} to obtain
$\{\alpha,\beta\}\in A'$ such that $\alpha(i)=1\ne\beta(i)$ and $\nu'\in V_n$ such
that $\{\nu',\alpha'\}\in E(Q_n)$, $\nu'(i)=1$ and $\nu',\nu'\oplus e_i\notin\bigcup A'$.
Then set
$$B=(A'\setminus\{\{\alpha,\beta\}\})\cup\{\{\alpha,\nu\},\{\nu\oplus e_i,\beta\}\}$$
to obtain $B$ satisfying the conditions \eqref{prop-diminishable:a}--\eqref{prop-diminishable:d}
(Lemma~\ref{lemma-diminishable}\;\eqref{lemma-diminishable:1} implies that $B$
satisfies \eqref{prop-diminishable:b}).

It remains to settle the case that $A=A'$. First assume that $|A|=n$. If $n_0=1$ then there exist
four pairwise distinct $\{\alpha_j,\beta_j\}\in A$ for $j=0,1,2,3$ with
$\alpha_j(i)\ne\beta_j(i)$.  Thus we can assume $\alpha_0(i)=\alpha_
1(i)=0\ne\beta_2(i)=\beta_3(i)$
and $\chi (\alpha_0)=\chi (\alpha_1)$.  By Lemma~\ref{lemma-diminishableII}\,\eqref{lemma-diminishableII:1}, we can assume there
exists $\gamma\in V_n$ such that $\{\alpha_0,\gamma \}\in E(Q_n)$ and $
\gamma ,\gamma\oplus e_i\notin\bigcup A$ and $\enc(\rho_{i=\ell}(\bigcup A\cup\{\gamma,\gamma\oplus e_i\}))=\emptyset$ for both $\ell\in[2]$.  Set
$$A_3=(A\setminus \{\{\alpha_0,\beta_0\}\})\cup \{\{\alpha_0,\gamma
\},\{\gamma\oplus e_i,\beta_0\}\}.$$
Since $\beta_1(i)=\beta_2(i)=\beta_3(i)$ we can assume that $\chi
(\beta_1)=\chi (\beta_2)$.
By Lemma~\ref{lemma-diminishableII}\,\eqref{lemma-diminishableII:1}, there exists $\gamma'\in V_n$ such
that $\{\beta_2,\gamma'\}\in E(Q_n)$ and $\gamma',\gamma'\oplus e_
i\notin\bigcup A_3$ and $\enc(\rho_{i=\ell}(\bigcup A_3\cup\{\gamma',\gamma'\oplus e_i\}))=\emptyset$ for both $\ell\in[2]$.  Set
$$B=(A_3\setminus \{\{\alpha_2,\beta_2\}\})\cup \{\{\beta_2,\gamma'
\},\{\alpha_2,\gamma'\oplus e_i\}\}$$
then $B$ satisfies \eqref{prop-diminishable:a}--\eqref{prop-diminishable:d}.
If $|A|=n$, $n_0>1$ and if there exist two distinct $\{\alpha,\beta\},\{\alpha',\beta'\}\in A$
with $\alpha(i)=\alpha'(i)=0\ne\beta(i)=\beta'(i)$ and $\chi(\alpha)=\chi(\alpha')$,
then, by Lemma~\ref{lemma-diminishableII}\,\eqref{lemma-diminishableII:1},
we can assume that there exists $\gamma\in V_n$ such that $\gamma,\gamma\oplus e_i\notin\bigcup A$,
$\{\alpha,\gamma\}\in E(Q_n)$ and $\enc(\rho_{i=\ell}(A\cup\{\gamma,\gamma\oplus e_i\}))=\emptyset$ for both $\ell\in[2]$.
Set $$B=(A\setminus\{\{\alpha,\beta\}\})\cup\{\{\alpha,\gamma\},\{\gamma\oplus e_i,\beta\}\}$$
then $B$ satisfies \eqref{prop-diminishable:a}--\eqref{prop-diminishable:d}.
If $n_0<n-3$ then there exist at least
three pairwise distinct pairs $\{\alpha,\beta\}\in A$ with $\alpha(i)\ne\beta(i)$. Hence
there exist two distinct $\{\alpha,\beta\},\{\alpha',\beta'\}\in A$
with $\alpha(i)=\alpha'(i)=0\ne\beta(i)=\beta'(i)$ and $\chi(\alpha)=\chi(\alpha')$ and the previous case applies.
Thus we can assume that $n_0=n-3$ and for $\{\alpha,\beta\},\{\alpha',\beta'\}\in A$
with $\alpha(i)=\alpha'(i)=0\ne\beta(i)=\beta'(i)$ we have $\chi(\alpha)\ne\chi(\alpha')$.
Then $|\rho_{i=1}(\bigcup A)|=4$ and for both $\ell\in\{-1,1\}$
we have $|\{\eta\in\bigcup \rho_{i=0}(A)\mid\chi(\eta)=\ell\}|=n-3$ while
$|(\eta\in\bigcup \rho_{i=1}(A)\mid\chi(\eta)=\ell\})|=2$. Since $i$ is not bad,
there exist $\{\alpha,\beta\}\in A$ with $\alpha(i)=0\ne\beta(i)$ and
$\gamma\in V_n$ such that $\{\gamma,\alpha\}\in E(Q_n)$, $\gamma(i)=0$ and
$\gamma,\gamma\oplus e_i\notin\bigcup A$. Then
$\enc(\bigcup \rho_{i=0}(A)\cup\rho_{i=0}(\gamma))=\emptyset$ and
$$B=(A\setminus\{\{\alpha,\beta\}\})\cup\{\{\alpha,\gamma\},\{\gamma\oplus e_i,\beta\}\}$$
satisfies \eqref{prop-diminishable:a}--\eqref{prop-diminishable:d}.

Now assume that $A=A'$ and $|A|=n-1$.  If
$n-4>n_0>0$ then there exist four pairwise distinct
$\{\alpha_j,\beta_j\}\in |A|$ for $j=0,1,2,3$ with $\alpha_j(i)=0\ne
\beta_j(i)$. Then we can assume that either $\alpha_j$ for $j\in[3]$
have the same parity or $\chi (\alpha_0)=\chi (\alpha_1)\ne\chi (\alpha_2)=
\chi (\alpha_3)$.  
In both cases, by
Lemma~\ref{lemma-diminishableII}\,\eqref{lemma-diminishableII:1} and \eqref{lemma-diminishableII:2},
we can assume that there exist distinct
$\gamma ,\gamma'\in V_n$ such that $\gamma(i)=\gamma'(i)=0$, $\gamma ,\gamma\oplus e_i,\gamma'
,\gamma'\oplus e_i\notin\bigcup A$ and
$\{\gamma ,\alpha_0\},\{\gamma',\alpha_2\}\in E(Q_n)$ and
$\enc(\rho_{i=0}(A\cup\{\gamma,\gamma'\}))=\emptyset$. 
Then
$$B=(A\setminus \{\{\alpha_0,\beta_0\},\{\alpha_2,\beta_2\}\})\cup
\{\{\alpha_0,\gamma \},\{\gamma\oplus e_i,\beta_0\},\{\alpha_2,\gamma'
\},\{\gamma'\oplus e_i,\beta_2\}\}$$
satisfies \eqref{prop-diminishable:a}--\eqref{prop-diminishable:d}.

If $|A|=n-1$ and $n_0=n-4$, then
there exist $\{\alpha ,\beta \},\{\alpha',\beta'\},\{\alpha^{\prime
\prime},\beta^{\prime\prime}\}\in A$ with
$\alpha (i)=\alpha'(i)=\alpha^{\prime\prime}(i)=0\ne\beta (i)=\beta'
(i)=\beta^{\prime\prime}(i)$.  If
$\chi (\alpha )=\chi (\alpha')=\chi (\alpha^{\prime\prime})$ then, by
Lemma~\ref{lemma-diminishableII}\,\eqref{lemma-diminishableII:2}, there are distinct $\gamma ,\gamma'\in V_n$ such that $\{\alpha
,\gamma \},\{\alpha',\gamma'\}\in E(Q_n)$,
$\gamma ,\gamma',\gamma\oplus e_i,\gamma'\oplus e_i\notin\bigcup
A$ and $\enc(\rho_{i=0}(A\cup\{\gamma,\gamma'\}))=\emptyset$.  Then
$$B=(A\setminus \{\{\alpha ,\beta \},\{\alpha',\beta'\}\})\cup \{
\{\alpha ,\gamma \},\{\gamma\oplus e_i\},\{\gamma',\alpha'\},\{\gamma'
\oplus e_i,\beta'\}\}$$
satisfies \eqref{prop-diminishable:a}--\eqref{prop-diminishable:d}.
If $\alpha$, $\alpha'$ and $\alpha^{\prime\prime}$ do not share the same
parity,  we can assume that $\chi (\alpha )=\chi (\alpha')\ne\chi(\alpha^{\prime\prime})$. 
Then there exist $n-4$ pairs $\{\kappa,\lambda\}\in A$ with $\kappa(i)=\lambda(i)=\alpha(i)$ and therefore there exist $\{\gamma,\gamma'\}$ such that $\gamma(i)=\gamma'(i)=\alpha(i)$, 
$\{\gamma,\alpha\},\{\gamma',\alpha'\}\in E(Q_n)$ and $\gamma,\gamma',\gamma\oplus e_i,\gamma'\oplus e_i\notin\bigcup A$.
Say, if there exist $n-3$ vertices $\kappa$ such that $\kappa(i)=\alpha(i)$, $\{\kappa,\alpha\}\in E(Q_n)$, $\{\kappa,\kappa\oplus e_i\}\cap\bigcup A\ne\emptyset$ then there exist $n-3$ vertices $\gamma'$ such that $\gamma'(i)=\alpha(i)$, $\{\gamma',\alpha'\}\in E(Q_n)$, $\gamma',\gamma'\oplus e_i\notin\bigcup A$.
 
Since $A$ is odd, it follows that there is exactly one $\nu\in\bigcup A$ with $\nu (i)=1$ and $\chi (\nu )=\chi (\alpha)$.
Furthermore, there are at least $2n-4$ vertices $\eta\in V_n$ such that $\eta(i)=0$ and
$\{\eta,\alpha\}\in E(Q_n)$ or $\{\eta,\alpha'\}\in E(Q_n)$, and  since
$|\{\nu\in\bigcup A\mid\chi (\nu )\ne\chi (\alpha ),\,\nu (i)=0\}|=n-3$,  there are
at most $n-3$ vertices $\eta\in\bigcup A$ with $\eta(i)=0$ and $\{\eta,\alpha\}\in E(Q_n)$
or $\{\eta,\alpha'\}\in E(Q_n)$. Consequently, we conclude that
there are at least $n-2$ distinct vertices $\eta\in V_n$ such that $\eta(i)=0$,
$\eta,\eta\oplus e_i\notin\bigcup A$ and $\{\eta,\alpha\}\in E(Q_n)$ or $\{\eta,\alpha'\}\in E(Q_n)$. Moreover,
if for some $\kappa\in\{\alpha,\alpha'\}$ we have $|\{\xi\in V_n\mid\{\xi,\kappa\}\in E(Q_{n}), \xi\in\bigcup A \text{ or } \xi\oplus e_i\in\bigcup A\}|=n-2$, then there exists $\gamma\in V_n$ such that $\gamma(i)=0$, $\{\gamma,\kappa\}\in E(Q_n)$, $\gamma,\gamma\oplus e_i\not\in\bigcup A$ and there exist at least $n-3$ vertices $\gamma'\in V_n$ such that 
$\gamma'(i)=0$, $\gamma',\gamma'\oplus e_i\not\in\bigcup A$ and $\{\gamma',\kappa'\}\in E(Q_n)$ where $\{\kappa,\kappa'\}=\{\alpha,\alpha'\}$. Hence there exist distinct $\gamma,\gamma'\in V_n$ 
such that $\gamma(i)=\gamma(i)=0$, $\{\alpha,\gamma\},\{\alpha',\gamma'\}\in E(Q_n)$,
$\gamma,\gamma\oplus e_i,\gamma',\gamma'\oplus e_i\notin\bigcup A$ and
$\enc(\bigcup \rho_{i=0}(A)\cup\rho_{i=0}(\{\gamma,\gamma'\}))=\emptyset$. Thus
$$B=(A\setminus \{\{\alpha ,\beta \},\{\alpha',\beta'\}\})\cup \{
\{\alpha ,\gamma \},\{\gamma\oplus e_i,\beta\},\{\gamma',\alpha'\},\{\gamma'
\oplus e_i,\beta'\}\}$$
satisfies \eqref{prop-diminishable:a}--\eqref{prop-diminishable:d}. 

If $|A|=n-1$ and $n_0=0$, then $n>6$ and for every $\{\alpha,\beta\}\in A$ we have $\alpha(i)\ne\beta(i)$.
In this case either 
\begin{itemize}
\item there are five pairs $\{\alpha_j,\beta_j\}\in A$ for $j\in[5]$ such that $\alpha_j(i)=0\ne\beta_j(i)$ for all $j\in[5]$ and $\chi(\alpha_j)=\chi(\alpha_k)$ for all $j,k\in[5]$, 
\end{itemize}
or there are six pairs $\{\alpha_j,\beta_j\}\in A$ for $j\in[6]$ such that $\alpha_j(i)=0\ne\beta_j(i)$
for all $j\in[6]$ and $\chi(\alpha_j)\ne\chi(\alpha_k)$ for all $j,k$
such that 
\begin{itemize}
\item either $j\in[3]$ and $k\in[6]\setminus[3]$, 
\item or $j\in[4]$ and $k\in[6]\setminus[4]$.
\end{itemize} 
By Lemma~\ref{lemma-diminishableII}\;\eqref{lemma-diminishableII:2} applied to 
$\{\alpha_j,\beta_j\}$ for $j\in[3]$, we can assume that there exist distinct
$\gamma,\gamma'\in V_n$ such that $\{\alpha_0,\gamma\},\{\alpha_1,\gamma'\}\in E(Q_n)$,
$\gamma(i)=\gamma'(i)=0$, $\gamma,\gamma\oplus e_i,\gamma',\gamma'\oplus e_i\notin\bigcup A$,
and  $\enc(\rho_{i=\ell}(\bigcup A\cup\{\gamma,\gamma',\gamma\oplus e_i,\gamma'\oplus e_i\}))=\emptyset$ for both $\ell\in[2]$. 
Set
$$B_1=(A\setminus\{\{\alpha_0,\beta_0\},\{\alpha_1,\beta_1\}\})\cup\{\{\alpha_0,\gamma\},\{\gamma\oplus e_i,\beta_0\},\{\alpha_1,\gamma'\},\{\gamma'\oplus e_i,\beta_1\}\}.$$
Then in the first case $\{\alpha_j,\beta_j\}$ for $j=2,3,4$  and in the second case  
$\{\alpha_j,\beta_j\}$ for $j=3,4,5$ satisfy the assumptions of
Lemma~\ref{lemma-diminishableII}\;\eqref{lemma-diminishableII:2}, while in the third
case $\{\alpha_j,\beta_j\}$ for $j=2,3$ as well as for $j=4,5$ satisfy the assumptions of
Lemma~\ref{lemma-diminishableII}\;\eqref{lemma-diminishableII:1}. 
Thus we can assume that
there exist distinct $\nu,\nu'\in V_n$ such that $\nu(i)=\nu'(i)=1$, $\{\beta_3,\nu\},\{\beta_4,\nu'\}\in E(Q_n)$,
$\nu,\nu\oplus e_i,\nu',\nu'\oplus e_i\notin\bigcup A$, and
$\enc(\rho_{i=\ell}(\bigcup B_1\cup\{\nu,\nu',\nu\oplus e_i,\nu'\oplus e_i\}))=\emptyset$ for both $\ell\in[2]$. Set
$$B=(B_1\setminus\{\{\alpha_3,\beta_3\},\{\alpha_4,\beta_4\}\})\cup\{\{\beta_3,\nu\},\{\nu\oplus e_i,\alpha_3\},\{\beta_4,\nu'\},\{\nu'\oplus e_i,\alpha_4\}\}.$$
Then $B$ satisfies the conditions \eqref{prop-diminishable:a}--\eqref{prop-diminishable:d}.

It remains to settle the case that $n=6$ and $|A|=5$.
Note that for every $\{\alpha,\beta\}\in A$ there exists $i\in[n]$ such that $\alpha(i)=\beta(i)$ (otherwise we have $\alpha(i)\ne\beta(i)$ for every $i\in[6]$ which means that $\{\alpha,\beta\}$ is even pair,  contrary to our assumption that $A$ is odd) and we can assume that $n_0>0$ and $n_1=0$ for $\sigma_i(A)=(n_0,n_1)$. Thus if $n_0\le3$ then $\sigma_i(A)\ne (0,0)$ and therefore --- by the previous part of the proof --- there exists a simple  $i$-completion $B$ of $A$ such that $\rho_{i=l}(B)\in\dimp_5$ for both $l\in[2]$. The proof is complete.
\end{proof}

As a consequence, we obtain the following corollary that forms a cornerstone of the induction step.
\begin{corollary}
\label{cor:main-section3}
Let $A\in\Upsilon_n$ for $n\ge 6$. Then there exists $i\in[n]$ such that one of the following conditions holds
\begin{enumerate}[\upshape(1)]
\item\label{cor:main-section3:1}
$|A|=n-1$, $\min\{n_0,n_1\}=0$, $\max\{n_0,n_1\}\ge|A|-2$ where $\sigma_i(A)=(n_0,n_1)$;
\item\label{cor:main-section3:2}
$|A|=n$, $i$ is separating for $A$, $\min\{n_0,n_1\}=1$ and $\max\{n_0,n_1\}\ge n-3$ where $\sigma_i (A)=(n_0,n_1)$;
\item\label{cor:main-section3:3}
$|A|=n$ and $i$ is bad for $A$;
\item\label{cor:main-section3:4}
there exists a simple $i$-completion $B$ of $A$ such that $\rho_{i=0}(B),\rho_{i=1}(B)\in\dimp_{n-1}$.
\end{enumerate}
\end{corollary}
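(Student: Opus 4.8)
The plan is to select a single coordinate $i$ and read off which of the four conclusions it produces, driven by one piece of size bookkeeping. Since $A\in\dimp_n$ is odd it contains no even pairs, so for any $i\in[n]$ with $\sigma_i(A)=(n_0,n_1)$ and any $B\in\mathcal C_i(A)$ the formula of Lemma~\ref{lemma-on-construction-simple-properties}\;\eqref{lemma-on-construction-simple-properties:5} collapses to
\[
|\rho_{i=0}(B)|=|A|-n_1,\qquad |\rho_{i=1}(B)|=|A|-n_0.
\]
As $\rho_{i=k}(B)\in\oddp_{n-1}$ and $n-1\ge5$, each $\rho_{i=k}(B)$ is diminishable the moment its size drops to at most $n-2$. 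I would therefore isolate the reduction criterion: \emph{if $\min\{n_0,n_1\}\ge|A|-n+2$ at some coordinate $i$ with $\mathcal C_i(A)\ne\emptyset$, then conclusion~\eqref{cor:main-section3:4} holds.} The relevant thresholds are $|A|-n+2\le0$ when $|A|\le n-2$, $=1$ when $|A|=n-1$, and $=2$ when $|A|=n$, and the whole proof is organised around these three values of $|A|$.

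For $|A|\le n-2$ the criterion is vacuous: every $i$ has $\mathcal C_i(A)\ne\emptyset$ by Lemma~\ref{cor-on-construction-simple}\;\eqref{cor-on-construction-simple:2}, and $|\rho_{i=k}(B)|\le|A|\le n-2$ by Lemma~\ref{lemma-on-construction-simple}, giving~\eqref{cor:main-section3:4}. For $|A|=n-1$ I would first ask whether some coordinate has $\min\{n_0,n_1\}\ge1$; if so the criterion gives~\eqref{cor:main-section3:4}. Otherwise every coordinate has $\min\{n_0,n_1\}=0$, hence $\max\{n_0,n_1\}=|A|-d_i$ with $d_i=|A|-n_0-n_1$ the number of pairs split by $i$. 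A coordinate with $d_i\le2$ then has $\max\{n_0,n_1\}\ge|A|-2$ and yields~\eqref{cor:main-section3:1}; if instead $d_i\ge3$ everywhere then $n_0,n_1\le n-4=|A|-3$ at every $i$ and Proposition~\ref{prop-diminishable} (first bullet) delivers~\eqref{cor:main-section3:4}. The one subtlety is the boundary instance $n=6,\ |A|=5$, where the proposition also requires $\sigma_i(A)\ne(0,0)$: I would rule out $\sigma_i(A)=(0,0)$ for \emph{all} $i$ by parity, since $\sigma_i(A)=(0,0)$ forces $d_i=5$, and $d_i=6$ for every coordinate would make every pair differ in all six coordinates, contradicting that an odd pair differs in an odd number of them.

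For $|A|=n$ I would start from a separating coordinate $i$, which exists by Lemma~\ref{lemma-simple}(6) and satisfies $\mathcal C_i(A)\ne\emptyset$ by Lemma~\ref{cor-on-construction-simple}\;\eqref{cor-on-construction-simple:3}; being separating it has $n_0,n_1\ge1$. If $\min\{n_0,n_1\}\ge2$ the criterion gives~\eqref{cor:main-section3:4}. If $\min\{n_0,n_1\}=1$ I would split on $\max\{n_0,n_1\}$: when $\max\{n_0,n_1\}\ge n-3$, conclusion~\eqref{cor:main-section3:2} holds unless $i$ is bad, in which case~\eqref{cor:main-section3:3} holds; when $\max\{n_0,n_1\}\le n-4$ the coordinate cannot be bad and meets $1\le n_0,n_1\le n-3$, so Proposition~\ref{prop-diminishable} (second bullet) gives~\eqref{cor:main-section3:4}.

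The main obstacle is not any individual case but arranging the thresholds of the four conclusions to dovetail exactly with the reduction criterion and with the hypotheses of Proposition~\ref{prop-diminishable}. Conclusions~\eqref{cor:main-section3:1}--\eqref{cor:main-section3:3} are precisely the ``concentrated'' configurations in which one side of every simple $i$-completion is forced to have the full size $n-1$, so the cheap size bound fails and one must either invoke the heavy construction behind Proposition~\ref{prop-diminishable} or hand the instance to the special-case analysis; pinning down that these are the only escapes, together with the small $n=6$ parity check, is where the care is needed.
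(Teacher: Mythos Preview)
Your argument is correct and tracks the paper's approach: the corollary is stated as an immediate consequence of Proposition~\ref{prop-diminishable} (together with the size identities of Lemma~\ref{lemma-on-construction-simple-properties} and Corollary~\ref{cor-simple}), and you have spelled out exactly that derivation. The case split on $|A|$ and the threshold $\min\{n_0,n_1\}\ge|A|-n+2$ is the right organising principle, and your handling of $|A|=n$ via a separating coordinate matches Lemma~\ref{lemma-simple}(6) and Lemma~\ref{cor-on-construction-simple}\,\eqref{cor-on-construction-simple:3} as intended.

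One slip of the pen: in the $n=6$, $|A|=5$ paragraph you write ``$d_i=6$ for every coordinate'', but $d_i\le|A|=5$; what you mean is that $\sigma_i(A)=(0,0)$ for all six coordinates would force each pair to differ in all six positions, hence have even Hamming distance. The parity contradiction is fine once this is stated cleanly, and it is the same observation the paper makes at the end of the proof of Proposition~\ref{prop-diminishable}.
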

\section{Special cases}
\label{section:special-cases}
The aim of this section is to deal with the cases that were not resolved by Proposition~\ref{prop-diminishable}, i.e. cases \eqref{cor:main-section3:1}-\eqref{cor:main-section3:3} of Corollary~\ref{cor:main-section3}. First we investigate the case that $A\in\oddp_n^{n-1}$ and there exists $i\in [n]$ such that $\sigma_i(A)=(n_0,n_1)$ where $n_0\ge |A|-2$ and $n_1=0$.

\begin{lemma}
\label{lem-4-1}
Let $n\ge 5$ and let $A\in\oddp_n$ such that $|A|\le n-1$ and
$\sigma_i(A)=(|A|,0)$ for some $i\in [n]$. If $\oddp_{n-1}^{n-2}\subseteq\Gamma_{n-1}$
then $A\in\Gamma_n$.
\end{lemma}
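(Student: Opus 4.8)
Write $H_k=\{v\in V_n\mid v(i)=k\}$ for $k\in[2]$; each $H_k$ is a copy of $Q_{n-1}$ (via $\rho_{i=k}$), and by the assumption $\sigma_i(A)=(|A|,0)$ all of $\bigcup A$ lies in $H_0$ while $H_1$ is untouched. The plan is to realise all but one pair of $A$ by a connector inside $H_0$ (which the hypothesis supplies), and then to route the remaining pair together with a complete sweep of the free subcube $H_1$. The case $|A|=1$ is immediate: a single odd pair $\{\alpha_1,\beta_1\}$ is connected by a Hamiltonian path of $Q_n$, which exists by Proposition~\ref{old-results-prop}\,\eqref{old-results-prop-part1} since $d(\alpha_1,\beta_1)$ is odd. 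So I would assume $2\le|A|\le n-1$ and fix a pair $\{\alpha_1,\beta_1\}\in A$.

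First I would connect $A'=A\setminus\{\{\alpha_1,\beta_1\}\}$ inside $H_0\cong Q_{n-1}$: since $\rho_{i=0}(A')\in\oddp_{n-1}^{n-2}$ has size $|A|-1\le n-2$, the hypothesis $\oddp_{n-1}^{n-2}\subseteq\Gamma_{n-1}$ yields a connector $\mathcal P$ of $A'$ covering every vertex of $H_0$. Because $A$ is a pair-set, $\alpha_1,\beta_1\notin\bigcup A'$, so each of them is an \emph{interior} vertex of some path of $\mathcal P$ and therefore has two neighbours on that path. I would cut $\mathcal P$ at $\alpha_1$ and at $\beta_1$, turning them into loose ends, and then reconnect the resulting fragments through $H_1$ so that every original pair of $A'$ is again joined by a single path, a new path joins $\alpha_1$ to $\beta_1$, and all of $H_1$ is covered. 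The parity bookkeeping that makes this work rests on $\chi(\alpha_1)\ne\chi(\beta_1)$.

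Two situations arise. If $\alpha_1,\beta_1$ lie on the same path $P=a,\dots,x,\alpha_1,\dots,\beta_1,y',\dots,b$ of $\mathcal P$, then the sub-arc from $\alpha_1$ to $\beta_1$ already serves as the $\{\alpha_1,\beta_1\}$-path, while the two outer arcs $a,\dots,x$ and $y',\dots,b$ are rejoined into the $\{a,b\}$-path by a Hamiltonian path of $H_1$ between $x\oplus e_i$ and $y'\oplus e_i$; this exists by Proposition~\ref{old-results-prop}\,\eqref{old-results-prop-part1} because $\chi(x)=-\chi(\alpha_1)\ne-\chi(\beta_1)=\chi(y')$ forces $d(x,y')$ to be odd. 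If instead $\alpha_1\in P$ and $\beta_1\in P'$ lie on distinct paths, with $x,x'$ the neighbours of $\alpha_1$ and $y,y'$ those of $\beta_1$, then $H_1$ must be partitioned into three paths realising the pair-set $\{\{x\oplus e_i,x'\oplus e_i\},\{y\oplus e_i,y'\oplus e_i\},\{\alpha_1\oplus e_i,\beta_1\oplus e_i\}\}$ in $H_1\cong Q_{n-1}$: the first two (even) pairs bridge the gaps left in $P$ and $P'$, and the third (odd) pair carries the $\{\alpha_1,\beta_1\}$-path. This pair-set has size $3$ and exactly one odd pair, and it is \emph{balanced}, since the two even pairs have endpoint-parities $\chi(\alpha_1)$ and $\chi(\beta_1)$, which are opposite, so their $\chi$-contributions cancel. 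Hence it is connectable by Proposition~\ref{old-results-prop}\,\eqref{old-results-prop-part7} (as $n-1\ge4$), and its connector sweeps $H_1$.

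The delicate point is this ``distinct paths'' sub-case: one must verify that the six lifted endpoints are pairwise distinct (which follows since $P$ and $P'$ are vertex-disjoint and $\alpha_1,\beta_1$ are strictly interior, so their four neighbours are distinct and lie on $P\cup P'$) and that the auxiliary triple genuinely satisfies the hypotheses of Proposition~\ref{old-results-prop}\,\eqref{old-results-prop-part7}, namely balance with a single odd pair. Granting this, stitching the fragments of $\mathcal P$ to the $H_1$-connector along the coordinate-$i$ edges produces vertex-disjoint paths joining exactly the pairs of $A$ and covering $H_0\cup H_1=V_n$, whence $A\in\Gamma_n$. I would stress that this construction never requires a coordinate occupied on both of its facets, so it dispenses entirely with any appeal to Corollary~\ref{cor-simple} and handles the whole range $|A|\le n-1$ uniformly — in particular the otherwise awkward configurations in which every coordinate $j\ne i$ is one-sided.
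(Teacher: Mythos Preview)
Your proof is correct and follows essentially the same strategy as the paper's: remove one pair, connect the remainder in the $i=0$ facet using the induction hypothesis, then use the free facet $H_1$ to route the last pair and repair the cuts. In the ``different paths'' case your construction is identical to the paper's cases~(A)/(B), invoking Proposition~\ref{old-results-prop}\,\eqref{old-results-prop-part7} for the balanced triple $\{\{x,x'\},\{\alpha,\beta\},\{y,y'\}\}$ in $Q_{n-1}$. In the ``same path'' case you take a slightly different (and marginally cleaner) route: rather than cutting both $\alpha_1$ and $\beta_1$ out as singletons and again appealing to the three-pair result, you keep the $\alpha_1$--$\beta_1$ arc intact in $H_0$ and bridge the two outer fragments with a single Hamiltonian path of $H_1$, needing only Proposition~\ref{old-results-prop}\,\eqref{old-results-prop-part1}. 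The paper instead splits this into two sub-cases (adjacent vs.\ non-adjacent on $P$) and uses the three-pair connector in the non-adjacent sub-case; your treatment unifies these at no cost.
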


\begin{proof}
Choose $\{\hat{\alpha},\hat{\beta}\}\in A$ and set  $\alpha=\rho_{i=0}(\hat{\alpha })$, $\beta =\rho_{i=0}(\hat{\beta })$,  and $A'=\rho_{i=0}(A\setminus \{\{\hat{\alpha},\hat{\beta}\}\})$.  By the assumption, $A'\in\Gamma_{n-1}$ and hence there exists a connector $\{P_{\gamma ,\gamma'}\mid \{\gamma,\gamma'\}\in A'\}$  of $A'$.  One of the following possibilities occurs:
\begin{enumerate}[\upshape(A)]
\item
\label{lem-4-1:A}
there exist distinct $\{\kappa ,\kappa'\},\{\xi ,\xi'\}\in A'$ such that $
\alpha$ belongs to $P_{\kappa ,\kappa'}$
and $\beta$ belongs to $P_{\xi ,\xi'}$;
\item
\label{lem-4-1:B}
there exists $\{\kappa ,\kappa'\}\in A'$ such that $\alpha$ and $
\beta$ belong to the path
$P_{\kappa ,\kappa'}$ and $\{\alpha ,\beta \}$ is not an edge of $P_{
\kappa ,\kappa'}$;
\item
\label{lem-4-1:C}
there exists $\{\kappa ,\kappa'\}\in A'$ such that $\{\alpha ,\beta
\}$ is an edge of the path
$P_{\kappa ,\kappa'}$.
\end{enumerate}
If \eqref{lem-4-1:A} occurs then there exist subpaths $(\zeta ,\alpha ,\zeta')$ and $(\nu ,\beta ,\nu')$ of $P_{\kappa ,\kappa'}$  and $P_{\xi ,\xi'}$, respectively such
that $\zeta$ is closer to $\kappa$ than $\zeta'$ in $P_{\kappa ,\kappa}$ and $\nu$ is closer to $\xi$ than $\nu'$ in $
P_{\xi ,\xi'}$. Set
$$A^{\prime\prime}=(A'\setminus \{\{\kappa ,\kappa'\},\{\xi ,\xi'
\}\})\cup \{\{\kappa ,\zeta \},\{\alpha ,\alpha \},\{\zeta',\kappa'
\},\{\xi ,\nu \},\{\beta ,\beta \},\{\nu',\xi'\}\}.$$
If \eqref{lem-4-1:B} occurs then there exist vertex-disjoint subpaths $(\zeta ,
\alpha ,\zeta')$
and $(\nu ,\beta ,\nu')$ of $P_{\kappa ,\kappa'}$ such that $\zeta$ is closer to $\kappa$ than $\zeta'$ and $\nu$ in $P_{\kappa ,\kappa'}$ and $\nu$ is closer to $\kappa$ than $\nu'$ in $P_{\kappa ,\kappa'}$. Set
$$A^{\prime\prime}=(A'\setminus \{\{\kappa ,\kappa'\}\})\cup \{\{
\kappa ,\zeta \},\{\alpha ,\alpha \},\{\zeta',\nu \},\{\beta ,\beta
\},\{\nu',\kappa'\}\}.$$
Since $\{P_{\gamma ,\gamma'}\mid \{\gamma,\gamma'\}\in A'\}$ is a connector of $A'$, we conclude that in both cases $A^{\prime\prime}\in\Gamma_{n-1}$. Set
$$A^{\prime\prime\prime}=\{\{\zeta ,\zeta'\},\{\alpha ,\beta \},\{
\nu ,\nu'\}\}.$$
Since $\chi (\zeta )=\chi (\zeta')=\chi (\beta )\ne\chi (\alpha )
=\chi (\nu )=\chi (\nu')$ we conclude that $A^{\prime\prime\prime}$
is balanced and, by Proposition~\ref{old-results-prop}\,\eqref{old-results-prop-part7}, $A^{\prime\prime\prime}$ is connectable.  Since
$\iota_{i,0}(A^{\prime\prime},A^{\prime\prime\prime})\overset{*}{\implies}A$, by Lemma~\ref{lemma-simple} we can conclude that $A\in\Gamma_n$.

If \eqref{lem-4-1:C} occurs then there exists a subpath $(\zeta ,\alpha ,\beta ,\zeta')$ of $P_{\kappa ,\kappa'}$ such that $\zeta$ is closer to $\kappa$ than $
\zeta'$ in $P_{\kappa ,\kappa'}$.  Set
$$A^{\prime\prime}=(A'\setminus \{\{\kappa ,\kappa'\}\})\cup \{\{
\kappa ,\zeta \},\{\alpha ,\beta \},\{\zeta',\kappa'\}\}.$$
Since $\{P_{\nu ,\nu'}\mid \{\nu ,\nu'\}\in A'\}$ is a connector of $A'$, we conclude that $A^{\prime\prime}\in\Gamma_{n-1}$. Since $\chi (\zeta )\ne\chi (\zeta')$, $A^{\prime\prime\prime}=\{\{\zeta ,\zeta'\}\}$ is connectable by Proposition~\ref{old-results-prop}\,\eqref{old-results-prop-part1}, and since $\iota_{i,0}(A^{\prime\prime},A^{\prime\prime\prime})\overset{*}{\implies}A$, we have $A\in\Gamma_n$ as required.
\end{proof}

\begin{lemma}
\label{lem-4-2}
Let $n\ge 5$ and let $A\in\Omega_n$ be such that $|A|\le n-1$ and
$\sigma_i(A)=(|A|-1,0)$ for some $i\in [n]$. If $\oddp_{n-1}^{n-2}\subseteq\Gamma_{n-1}$
then $A\in\Gamma_n$.
\end{lemma}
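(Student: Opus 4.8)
\textbf{The plan} is to reduce $A$ to an odd pair-set of size $|A|-1\le n-2$ on the face $i{=}0$, apply the hypothesis $\oddp_{n-1}^{n-2}\subseteq\conp_{n-1}$ there, and then repair the single crossing pair by routing through the opposite face $i{=}1$, where the required reconnection turns out to be a balanced pure pair-set of two \emph{even} pairs --- exactly the configuration handled by Proposition~\ref{old-results-prop}\,\eqref{old-results-prop-part5}.

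First I would dispose of the trivial case $|A|=1$ (a single odd pair, connectable by Proposition~\ref{old-results-prop}\,\eqref{old-results-prop-part1}) and assume $|A|\ge 2$. Since $\sigma_i(A)=(|A|-1,0)$, exactly one pair of $A$, say $\{\hat\alpha,\hat\beta\}$ with $\hat\alpha(i)=0\ne\hat\beta(i)=1$, crosses coordinate $i$, while the remaining $|A|-1$ pairs lie entirely in the face $i{=}0$. Put $A_0=\rho_{i=0}(A\setminus\{\{\hat\alpha,\hat\beta\}\})$. Then $A_0\in\oddp_{n-1}$ with $|A_0|=|A|-1\le n-2$, so by hypothesis $A_0\in\conp_{n-1}$; fix a connector $\{P_{\gamma,\gamma'}\}$ of $A_0$ in $Q_{n-1}$.

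Next I would locate $\alpha_0=\rho_{i=0}(\hat\alpha)$. As $\rho_{i=0}$ is injective on the face $i{=}0$ and $\hat\alpha$ is disjoint from the other pairs, $\alpha_0$ is not an endpoint of any pair of $A_0$, hence it is an interior vertex of some path $P_{\kappa,\kappa'}$ with two distinct neighbours $\zeta,\zeta'$ on that path. Splitting $P_{\kappa,\kappa'}$ at $\alpha_0$ and inserting a singleton there yields the connectable set $A_0''=(A_0\setminus\{\{\kappa,\kappa'\}\})\cup\{\{\kappa,\zeta\},\{\alpha_0,\alpha_0\},\{\zeta',\kappa'\}\}\in\conp_{n-1}$. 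On the opposite face I would use $A_1=\{\{\alpha_0,\rho_{i=1}(\hat\beta)\},\{\zeta,\zeta'\}\}$. The crux is the parity count: since $\zeta,\zeta'$ are neighbours of $\alpha_0$ we have $\chi(\zeta)=\chi(\zeta')=-\chi(\alpha_0)$, and since $\hat\beta(i)=1$, removing that coordinate flips parity, so $\chi(\rho_{i=1}(\hat\beta))=-\chi(\hat\beta)=\chi(\hat\alpha)=\chi(\alpha_0)$; thus both pairs of $A_1$ are even and $\chi(A_1)=0$. Hence $A_1\in\balp_{n-1}^2$ is pure with $\|A_1\|=0$, so $A_1\in\conp_{n-1}$ by Proposition~\ref{old-results-prop}\,\eqref{old-results-prop-part5} (using $n-1\ge4$). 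In the degenerate sub-case where $\{\hat\alpha,\hat\beta\}$ is an edge-pair we have $\rho_{i=1}(\hat\beta)=\alpha_0$, so $A_1=\{\{\alpha_0,\alpha_0\},\{\zeta,\zeta'\}\}$ is balanced with a degenerate pair and $A_1\in\conp_{n-1}$ by Proposition~\ref{old-results-prop}\,\eqref{old-results-prop-part2} instead.

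Finally I would lift and merge. By Lemma~\ref{lemma-simple}\,(2), $\iota_{i,0}(A_0'',A_1)\in\conp_n$. Tracing the definitions, the lifted singleton $\{\hat\alpha,\hat\alpha\}$ together with the lifted pair $\{\hat\alpha\oplus e_i,\hat\beta\}$ merges to the crossing pair $\{\hat\alpha,\hat\beta\}$, while the two $i{=}0$ pieces $\{\hat\kappa,\hat\zeta\},\{\hat\zeta',\hat\kappa'\}$ merge with $\{\hat\zeta\oplus e_i,\hat\zeta'\oplus e_i\}$ back into $\{\hat\kappa,\hat\kappa'\}$, and all remaining pairs are untouched; hence $\iota_{i,0}(A_0'',A_1)\overset{*}{\implies}A$, and Lemma~\ref{lemma-simple}\,(1) gives $A\in\conp_n$. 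I expect the main obstacle to be purely bookkeeping: verifying the parity identities that force $A_1$ to be balanced and even-only so the CK result applies, isolating the edge-pair degenerate sub-case, and checking that the chain of $\implies$-steps across coordinate $i$ reconstructs \emph{exactly} $A$ rather than some other pair-set.
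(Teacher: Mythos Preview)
Your proposal is correct and follows essentially the same route as the paper's proof: project the $|A|-1$ same-side pairs onto the face $i{=}0$, connect them there by the inductive hypothesis, split the path passing through $\rho_{i=0}(\hat\alpha)$, and reconnect on the face $i{=}1$ via a balanced two-pair set with $\|\cdot\|=0$ handled by Proposition~\ref{old-results-prop}\,\eqref{old-results-prop-part2} and~\eqref{old-results-prop-part5}. Your treatment is in fact slightly more careful than the paper's, since you explicitly isolate the edge-pair sub-case $\rho_{i=1}(\hat\beta)=\alpha_0$ (where part~\eqref{old-results-prop-part2} is needed rather than~\eqref{old-results-prop-part5}) and the trivial case $|A|=1$, whereas the paper folds these into a single citation of both parts.
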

\begin{proof}
Since $\sigma_i(A)=(|A|-1)$, there exists $\{\alpha',\beta'\}\in A$ with $\alpha'(i)\ne\beta'
(i)$ while $\alpha (i)=\beta (i)=0$ for all $\{\alpha ,\beta \}\in A\setminus \{\{\alpha',\beta'\}\}$. We can assume that $\alpha'(i)=0$. Set
$\hat{\alpha}=\rho_{i=0}(\alpha')$, $\hat{\beta}=\rho_{i=1}(\beta')$ and $A'=\rho_{i=0}(A)$.
Since $A'\in\conp_{n-1}$, there exists a connector $\{P_{\gamma,\gamma'}\mid \{\gamma ,\gamma'\}\in A'\}$ of $A'$.  Hence there exists $\{\kappa ,\kappa'\}\in A'$ such that $\hat{\alpha}$ belongs to the path $P_{\kappa ,\kappa'}$.  Let $(\zeta ,\hat{\alpha},\zeta')$ be a subpath of $P_{\kappa ,\kappa'}$.  Assuming without loss of generality that $\zeta$ is
closer to $\kappa$ than $\zeta'$, set
$$A^{\prime\prime}=(A'\setminus \{\{\kappa ,\kappa'\}\})\cup \{\{
\kappa ,\zeta \},\{\hat{\alpha},\hat{\alpha}\},\{\zeta'
,\kappa'\}\}$$
and note that $A^{\prime\prime}\in\conp_n$ because $\{P_{\gamma,\gamma'}\mid \{\gamma ,\gamma'\}\in A'\}$ is a connector
of $A'$. Let  $$
A^{\prime\prime\prime}=\{\{\zeta ,\zeta'\},\{\hat\alpha,\hat\beta\}\}.
$$
Since $\chi(\zeta)=\chi(\zeta')\ne\chi(\hat\alpha)=\chi(\hat\beta)$ and $\zeta\ne\zeta'$, we conclude that $A^{\prime\prime\prime}\in\balp^2_{n-1}$ with  $\|A^{\prime\prime\prime}\|=0$, and  therefore by parts \eqref{old-results-prop-part2} and \eqref{old-results-prop-part5} of Proposition~\ref{old-results-prop}, $A^{\prime\prime\prime}\in\conp_{n-1}$. Since $\iota_{i,0}(A^{\prime\prime},A^{\prime\prime\prime})\overset {*}{\implies}A$, it follows that $A\in\Gamma_n$.
\end{proof}

\begin{lemma}
\label{lem-4-3}
Let $n\ge 5$ and let $A\in\Omega_n^{n-1}$ be such that $\sigma_i(A)=(|A|-2,0)$ for some $i\in[n]$.
If $\Upsilon_{n-1}\subseteq\Gamma_{n-1}$, then $A\in\Gamma_n$.
\end{lemma}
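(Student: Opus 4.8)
The plan is to reduce the dimension by a single simple $i$-completion that pushes the two ``crossing'' pairs one layer up, so that the two resulting pair-sets in $Q_{n-1}$ are connectable — the lower one by the hypothesis $\Upsilon_{n-1}\subseteq\Gamma_{n-1}$ and the upper one by Proposition~\ref{old-results-prop}\,\eqref{old-results-prop-part3}. This is exactly where diminishability (rather than mere smallness) of $(n-1)$-dimensional pair-sets is needed, since the lower projection can reach size $n-1$. First dispose of the easy ranges. If $|A|\le 2$ then $A\in\oddp_n^2\subseteq\conp_n$ by Proposition~\ref{old-results-prop}\,\eqref{old-results-prop-part3}. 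If $n=5$ then $|A|\le 4=n-1$ and $n\ne 4$, so $A$ is itself diminishable, i.e. $A\in\Upsilon_5$, and $A\in\conp_5$ by Theorem~\ref{dimen-5}. Hence from now on assume $n\ge 6$ and $3\le|A|\le n-1$. Since $\sigma_i(A)=(|A|-2,0)$, exactly two pairs $\{\alpha_1,\beta_1\},\{\alpha_2,\beta_2\}\in A$ have $\alpha_j(i)\ne\beta_j(i)$; orient them so that $\alpha_j(i)=0\ne\beta_j(i)=1$. The remaining $|A|-2$ pairs lie in layer $0$, and $\beta_1,\beta_2$ are the only vertices of $\bigcup A$ with $i$-th coordinate $1$.

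Next I would run Construction~\ref{construction-simple} for this $i$ with the empty $(A,i)$-matching (there are no even pairs): apply step~\eqref{construction-simple:i} to each of the $|A|-2$ layer-$0$ odd pairs and step~\eqref{construction-simple:ii} to each crossing pair, choosing the auxiliary vertex $\gamma_j=\alpha_j\oplus e_{k_j}$ for a suitable coordinate $k_j\ne i$, so that $\{\alpha_j,\gamma_j\}$ becomes an \emph{edge}-pair. The completion succeeds by Lemma~\ref{lemma-on-construction-simple}, whose odd bound reads $2|A|-k_0-1=|A|+1\le n\le 2^{n-2}$ here (as $k_0=n_0+n_1=|A|-2$). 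The resulting simple $i$-completion $B$ then satisfies $|\rho_{i=1}(B)|=2$ (one upper pair $\{\gamma_j\oplus e_i,\beta_j\}$ per crossing pair) and $|\rho_{i=0}(B)|=(|A|-2)+2=|A|\le n-1$.

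It remains to connect the two projections. The upper one is odd of size $2$, so $\rho_{i=1}(B)\in\oddp_{n-1}^2\subseteq\conp_{n-1}$ by Proposition~\ref{old-results-prop}\,\eqref{old-results-prop-part3}. The lower one, $\rho_{i=0}(B)$, is odd of size at most $n-1$ in $Q_{n-1}$ and contains the two edge-pairs $\{\rho_{i=0}(\alpha_j),\rho_{i=0}(\gamma_j)\}$. If $|A|<n-1$ it is diminishable by size alone (here $n-1\ge 5$); if $|A|=n-1$ its size equals the dimension, and it is diminishable precisely because it carries at least two edge-pairs, because $n-1\ne 4$, and because — by the choice of $\gamma_1,\gamma_2$ — we have $\enc(\rho_{i=0}(B))=\emptyset$. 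In either case $\rho_{i=0}(B)\in\Upsilon_{n-1}\subseteq\conp_{n-1}$, and since $B$ is an $i$-completion of $A$, the final clause of Lemma~\ref{lemma-simple} yields $A\in\conp_n$.

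The main obstacle is arranging $\enc(\rho_{i=0}(B))=\emptyset$, i.e. choosing the neighbors $\gamma_1,\gamma_2$ of $\alpha_1,\alpha_2$ so that no vertex of $Q_{n-1}$ becomes encompassed by $\bigcup\rho_{i=0}(B)$. When $\enc(\rho_{i=0}(\bigcup A))=\emptyset$ this is exactly Lemma~\ref{cor-on-construction-simple}\,\eqref{cor-on-construction-simple:7}, whose hypotheses $2|A|-n_0-n_1=|A|+2<2^{n-2}$ and $|A|-n_j\le n-1$ hold here; note also that $\enc(\rho_{i=1}(\bigcup A))=\emptyset$ is automatic, as layer $1$ meets $\bigcup A$ in only the two vertices $\beta_1,\beta_2$. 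If instead $\enc(\rho_{i=0}(\bigcup A))\ne\emptyset$, I would first clear each encompassed vertex by routing an extra edge-pair through it, exactly as in {\sc Step~1} of the proof of Proposition~\ref{prop-diminishable}, using Observation~\ref{obser-encom} to bound the number of such vertices by one per parity, before committing the $\gamma_j$. The crux is thus the interaction between forcing the two edge-pairs needed for diminishability at the boundary size $|A|=n-1$ and simultaneously keeping every layer free of encompassed vertices while $|\rho_{i=0}(B)|\le n-1$; the counting is tight only in this boundary case, which is why the argument is confined to $n\ge 6$ and the dimension-$4$ situation is peeled off through Theorem~\ref{dimen-5}.
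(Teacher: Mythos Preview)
Your overall strategy—build a simple $i$-completion with two crossing edge-pairs so that $\rho_{i=0}(B)$ lands in $\Upsilon_{n-1}$ and $\rho_{i=1}(B)\in\oddp_{n-1}^2$—is different from the paper's, and it does work cleanly for $|A|\le n-2$ (there you don't even need the edge-pairs, since $|\rho_{i=0}(B)|=|A|\le n-2$ makes the lower projection diminishable by size alone). The genuine gap is in the boundary case $|A|=n-1$, which is precisely the case this lemma exists to cover.

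There you need $\rho_{i=0}(B)$ to satisfy the full diminishability condition at size $n-1$: two edge-pairs \emph{and} $\enc(\rho_{i=0}(B))=\emptyset$. Your appeal to Lemma~\ref{cor-on-construction-simple}\,\eqref{cor-on-construction-simple:7} does not deliver this: that lemma produces a completion with empty $\enc$, but it places no constraint on where the $\gamma_j$ land, so there is no reason $\{\alpha_j,\gamma_j\}$ should be edge-pairs. Conversely, insisting that $\gamma_j$ be a layer-$0$ neighbor of $\alpha_j$ can simply fail: when $\chi(\alpha_1)\ne\chi(\alpha_2)$, the layer-$0$ vertices of $\bigcup A$ of parity $-\chi(\alpha_1)$ number exactly $n-2$ (namely $\alpha_2$ together with one endpoint of each of the $n-3$ layer-$0$ pairs), and the single remaining neighbor may be $\beta_2\oplus e_i$, which is also forbidden. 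Lemma~\ref{lemma-diminishableII} does not rescue you either, since it requires two or three crossing pairs sharing the \emph{same} parity on the $0$-side. The hand-wave toward {\sc Step~1} of Proposition~\ref{prop-diminishable} does not help: that step routes $\gamma$ through an encompassed vertex, which in general is not adjacent to $\alpha_j$, so it does not create the edge-pairs you need; and Proposition~\ref{prop-diminishable} itself explicitly excludes the range $n_0\ge|A|-2$, which is exactly your situation.

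The paper avoids this obstruction by never pushing $|\rho_{i=0}|$ up to $n-1$. It splits on $\chi(\alpha_1)$ versus $\chi(\alpha_2)$: if equal, it drops both crossing pairs and reinserts a single edge-pair $\{\alpha_1,\gamma\}$, landing at size $|A|-1$ with one guaranteed edge-pair; if unequal, it merges the two crossing pairs into the single odd pair $\{\alpha_1,\alpha_2\}$, again landing at size $|A|-1$. In both branches it then works with an explicit connector, cutting it at $\rho_{i=0}(\alpha_0)$ or along $P_{\hat\alpha_0,\hat\alpha_1}$ and invoking parts \eqref{old-results-prop-part2}, \eqref{old-results-prop-part3}, \eqref{old-results-prop-part5}, \eqref{old-results-prop-part7}, \eqref{old-results-prop-part9} of Proposition~\ref{old-results-prop} on the small upper pair-set. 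So the paper trades your ``make the projection diminishable of maximal size'' for ``shrink the projection by one and then do surgery on its connector,'' which sidesteps the simultaneous edge-pair/encompassment problem entirely.
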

\begin{proof}
If the assumptions of the lemma are satisfied, then there exist distinct
$\{\alpha_0,\beta_0\},\{\alpha_1,\beta_1\}\in A$ with $\alpha_l(i)=0\ne\beta_l(i)$
for $l=0,1$ and $\alpha (i)=\beta (i)=0$ for all
$\{\alpha ,\beta \}\in A\setminus\{\{\alpha_0,\beta_0\},\{\alpha_1,\beta_1\}\}$.

(1) First assume that $\chi (\alpha_0)=\chi (\alpha_1)$.  Select a vertex
$\gamma\in V_n\setminus\bigcup A$ with $\{\alpha_1,\gamma \}\in E(Q_n)$ and
$\gamma (i)=0$. Since $n\ge5$ and $\chi (\alpha_0)=\chi (\alpha_1)$, such a~vertex $\gamma$ exists.  Let
$$B=(A\setminus \{\{\alpha_0,\beta_0\},\{\alpha_1,\beta_1\}\})\cup
\{\{\alpha_1,\gamma \}\}.$$
Since $A$ is an odd pair-set, we infer that $A'=\rho_{i=0}(B)$ is an odd pair-set
containing an edge pair and $|A'|=|B|=|A|-1$. Therefore
$A'\in\Upsilon_{n-1}\subseteq\Gamma_{n-1}$. Let $
\{P_{\nu,\nu'}\mid \{\nu,\nu'\}\in A'\}$ be a
connector of $A'$.  Then there exist $\{\kappa ,\kappa'\}\in A'$ and a subpath
$(\zeta ,\rho_{i=0}(\alpha_0),\zeta')$ of
$P_{\kappa ,\kappa'}$ such that $\zeta$ is closer to $\kappa$ than $
\zeta'$ in $P_{\kappa ,\kappa'}$.  Set
$$A^{\prime\prime}=(A'\setminus \{\{\kappa ,\kappa'\}\})\cup \{\{
\kappa ,\zeta \},\{\rho_{i=0}(\alpha_0),\rho_{i=0}(\alpha_0)\},\{
\kappa',\zeta'\}\}.$$
Then $A^{\prime\prime}$ is connectable because
$\{P_{\nu ,\nu'}\mid \{\nu ,\nu'\}\in A'\}$ is a connector of $A'$.
Observe that
\[
\begin{split}
\chi(\rho_{i=0}(\gamma))\ne&\chi(\rho_{i=0}(\alpha_1))=\chi (\rho_{i=1}(\beta_1))=\chi(\rho_{i=0}(\alpha_0))\\
\chi (\zeta )=&\chi (\zeta')\ne\chi (\rho_{i=0}(\alpha_0))=\chi (\rho_{i=1}(\beta_0)).
\end{split}
\]
If $\rho_{i=0}(\gamma)\notin\{\zeta,\zeta'\}$
we infer that $\{\rho_{i=0}(\alpha_0),\rho_{i=1}(\beta_0)\}$, $\{\zeta,\zeta'\}$ and $\{\rho_{i=0}(\gamma),\rho_{i=1}(\beta_1)\}$
are pairwise disjoint. Hence $$A^{\prime\prime\prime}=\{\{\rho_{i=0}(\alpha_0),\rho_{i=1}(\beta_
0)\},\{\zeta ,\zeta'\},\{\rho_{i=0}(\gamma),\rho_{i=1}(\beta_1)\}\}$$
is balanced. By Proposition~\ref{old-results-prop}\;\eqref{old-results-prop-part7},
$A^{\prime\prime\prime}\in\Gamma_{n-1}$ and since
$\iota_{i,0}(A^{\prime\prime},A^{\prime\prime\prime})\overset {*}{\implies}A$, we conclude that $A\in\Gamma_n$.
If $\rho_{i=0}(\gamma)\in\{\zeta,\zeta'\}$ then $\{\kappa,\kappa'\}=\{\rho_{i=0}(\alpha_1),\rho_{i=0}(\gamma)\}$
and we can assume that $\rho_{i=0}(\gamma)=\zeta'$. Then
$$A''=(A'\setminus\{\{\rho_{i=0}(\alpha_1),\rho_{i=0}(\gamma)\}\}\cup\{\{\rho_{i=0}(\alpha_1,\zeta)\},\{\rho_{i=0}(\alpha_0),\rho_{i=0}(\gamma)\}\}$$
is a connectable pair-set. Then
$$A'''=\{\{\rho_{i=0}(\gamma),\rho_{i=1}(\beta_0)\},\{\rho_{i=1}(\beta_1),\zeta\}\}$$
is an odd pair-set and, by Proposition~\ref{old-results-prop}\;\eqref{old-results-prop-part3},
$A'''$ is connectable and from $\iota_{i,0}(A'',A''')\overset{*}{\implies}A$ it
follows that $A$ is connectable.

(2) Therefore it remains to settle the case that $\chi (\alpha_0)\ne\chi (\alpha_1)$.
Then
$$B=\rho_{i=0}((A\setminus \{\{\alpha_0,\beta_0\},\{\alpha_1,\beta_1\}\})\cup \{\{\alpha_0,\alpha_1\}\})$$
is an odd pair-set in $Q_{n-1}$ with $|B|=|A|-1$. To simplify the notation, denote $\rho_{i=0}(\alpha_0)$, $\rho_{i=0}(\alpha_1)$,
$\rho_{i=1}(\beta_0)$, $\rho_{i=1}(\beta_1)$ by $\hat\alpha_0$, $\hat\alpha_1$,
$\hat\beta_0$, $\hat\beta_1$.

(2.1)  First assume that $B\in\Upsilon_{n-1}\subseteq\Gamma_{n-1}$.  Note that then there exists a
connector $\{P_{\nu ,\nu'}\mid \{\nu ,\nu'\}\in B\}$ of $B$ and consider three subcases.

(2.1.1) The length of the path $P_{\hat\alpha_0,\hat\alpha_1}$ is at least $5$. Then
there exists an edge $\{\gamma ,\gamma'\}$ of $P_{\hat\alpha_0,\hat\alpha_1}$ with
$\{\gamma ,\gamma'\}\cap \{\hat\beta_0,\hat\beta_1\}=\emptyset$.
Without loss of generality we assume that $\gamma$ is closer to $\hat\alpha_0$ than
$\gamma'$ in $P_{\hat\alpha_0,\hat\alpha_1}$.  Let
$$A^{\prime}=(B\setminus \{\{\hat\alpha_0,\hat\alpha_1\}\})\cup \{\{\hat\alpha_0,\gamma \},\{\gamma',\hat\alpha_1\}\}$$
Then $A^{\prime}$ is connectable because $\{P_{\nu ,\nu'}\mid
\{\nu ,\nu'\}\in B\}$ is a connector
of $B$.  Set
$$A''=\{\{\hat\beta_0,\gamma \},\{\hat\beta_1,\gamma'\}\}.$$
Then $A''$ is balanced because $\chi (\beta_0)\ne\chi (\beta_1)$,
$\beta_0(i)=\beta_1(i)$ and  $\chi (\gamma )\ne\chi (\gamma')$.  Since $n-1\ge4$, by
Proposition~\ref{old-results-prop}\;\eqref{old-results-prop-part3} and \eqref{old-results-prop-part5}
we conclude that $A''\in\Gamma_{n-1}$.  Since
$\iota_{i,0}(A^{\prime},A'')\overset {*}{\implies}A$, it follows that $A\in\Gamma_n$.

(2.1.2)  $P_{\hat\alpha_0,\hat\alpha_1}=(\hat\alpha_0,\kappa ,\kappa',\hat\alpha_1)$.
If there exists an edge $\{\gamma ,\gamma'\}$ on $P_{\hat\alpha_0,\hat\alpha_1}$ with
$$\{\gamma ,\gamma'\}\cap \{\hat\beta_0,\hat\beta_1\}=\emptyset,$$
then the construction of the previous case (2.1.1) applies.
As $\chi (\hat\beta_0)\ne\chi (\hat\beta_1)$, otherwise it must be the case that  $\{\hat\beta_0,\hat\beta_1\}=\{\kappa ,\kappa'\}$.
Hence  $P_{\hat\alpha_0,\hat\alpha_1}=(\hat\alpha_0,\hat\beta_1 ,\hat\beta_0,\hat\alpha_1)$
because $\chi (\hat\alpha_0)=\chi (\hat\beta_0)$. Then set
$$A^{\prime\prime}=(B\setminus \{\{\hat\alpha_0,\hat\alpha_1\}\})\cup \{\{\hat\alpha_0,\hat\beta_0\},\{\hat\alpha_1,\hat\alpha_1\}\}$$
Since $\{P_{\nu ,\nu'}\mid \{\nu ,\nu'\}\in B\}$ is a connector of $B$, it follows that $A^{\prime\prime}\in\Gamma_{n-1}$.  Since
$$A^{\prime\prime\prime}=\{\{\hat\alpha_1,\hat\beta_1\},\{\hat\beta_0,\hat\beta_0\}\}$$
is balanced, by Proposition~\ref{old-results-prop}\,\eqref{old-results-prop-part2} we obtain that $A^{\prime\prime\prime}\in\Gamma_{n-1}$.
Since $\iota_{i,0}(A^{\prime\prime},A^{\prime\prime\prime})\overset {
*}{\implies}A$, we conclude that $A\in\Gamma_n$.

(2.1.3) $P_{\hat\alpha_0,\hat\alpha_1}=(\hat\alpha_0,\hat\alpha_1)$.
Then we can assume that $\{\hat\alpha_0,\hat\alpha_1\}\cap \{\hat\beta_0,\hat\beta_1\}\ne\emptyset$,
for otherwise the construction (2.1.1) applies.  If
$\{\hat\alpha_0,\hat\alpha_1\}\ne \{\hat\beta_0,\hat\beta_1\}$, then set
$$A^{\prime\prime}=(B\setminus \{\{\hat\alpha_0,\hat\alpha_1\}\})\cup \{\{\hat\alpha_0,\hat\alpha_0
\},\{\hat\alpha_1,\hat\alpha_1\}\}.$$
Clearly $A^{\prime\prime}$ is connectable because
$\{P_{\nu ,\nu'}\mid \{\nu ,\nu'\}\in B\}$ is a connector of $B$. Set
$$A^{\prime\prime\prime}=\{\{\hat\alpha_0,\hat\beta_0\},\{\hat\alpha_1,\hat\beta_1\}\}.$$
Since $\chi (\alpha_0)=\chi (\beta_1)\ne\chi (\alpha_1)=\chi (\beta_0)$ and
$\alpha_0(i)=\alpha_1(i)\ne\beta_0(i)=\beta_1(i)$, it follows that
$A^{\prime\prime\prime}$ is balanced and therefore, by
Proposition~\ref{old-results-prop}\,\eqref{old-results-prop-part2}, $A'''$ is connectable.
Since $\iota_{i,0}(A^{\prime\prime},A^{\prime\prime\prime})\overset {*}{\implies}A$,
we conclude that $A\in\Gamma_n$.

If, on the other hand, $\{\hat\alpha_0,\hat\alpha_1\}=\{\hat\beta_0,\hat\beta_1\}$,
select $\{\alpha',\beta'\}\in B\setminus\{\{\hat\alpha_0,
\hat\alpha_1\}\}$ and  choose an edge $\{\gamma ,\gamma'\}$ on the path $P_{\alpha',\beta'}$ such that $
\gamma$ is closer to $\alpha'$ than
$\gamma'$.  Set
$$A^{\prime\prime}=(B\setminus \{\{\hat\alpha_0,\hat\alpha_1\},\{\alpha',\beta'\}\})\cup
\{\{\hat\alpha_0,\hat\alpha_0\},\{\hat\alpha_1,\hat\alpha_1\},\{\alpha',\gamma \},\{\gamma',\beta'\}\}
,$$
then $A^{\prime\prime}$ is connectable because $\{P_{\nu ,\nu'}\mid
\{\nu ,\nu'\}\in B\}$ is a connector
of $B$.  Let
$$A^{\prime\prime\prime}=\{\{\hat\beta_0,\hat\beta_0\},\{\hat\beta_1,\hat\beta_1\},\{\gamma ,\gamma'
\}\}.$$
Since $\chi (\beta_0)\ne\chi (\beta_1)$ and $\chi (\gamma)\ne\chi (\gamma')$, we infer
that $A^{\prime\prime\prime}$ is balanced and therefore connectable by Proposition~\ref{old-results-prop}\,\eqref{old-results-prop-part7}.
Since $\{\alpha_0,\beta_0\}$ and $\{\alpha_1,\beta_1\}$ are edges of $Q_n$, we infer
that $\iota_{i,0}(A^{\prime\prime},A''')\overset{*}{\implies}A$ and thus $A\in\Gamma_n$.

(2.2) It remains to settle the case that $\chi (\alpha_0)\ne\chi (\alpha_1)$ and  $B\notin\Upsilon_{n-1}$. Then $n=5$ and $B=\{\{\kappa_j,\nu_j\}\mid j\in[3]\}$ such that $\{\kappa_j,\nu_j\}\notin E(Q_4)$ for all $j\in[3]$ and there exists a set $X\subseteq V_4$ such that $\kappa_j,\nu_j
\in X$ for all $j\in[3]$ and the subgraph of $Q_4$ induced by the set $X$ is isomorphic to $Q_3$. Note that then $d(\kappa_j,\nu_j)=3$ for all $i\in[3]$. Assume that
$$A\setminus \{\{\alpha_0,\beta_0\},\{\alpha_1,\beta_1\}\})\cup \{\{\alpha_0,\alpha_1\}\}=\{\{\kappa_j',\nu'_j\}\mid j\in[3]\}$$
where $\rho_{i=0}(\{\kappa'_j,\nu'_j\})=\{\kappa_j,\nu_j\}$ and that $\{\alpha_0,\alpha_1\}=\{\kappa'_2,\nu'_2\}$.
By Proposition~\ref{old-results-prop}\,\eqref{old-results-prop-part9}, we obtain that
$$A'=\{\{\kappa_0,\nu_0\},\{\kappa_1,\nu_1\},\{\kappa_2,\kappa_2\}
,\{\nu_2,\nu_2\}\}$$
is connectable. From $\chi (\alpha_0)\ne\chi (\alpha_1)$ it follows  that if $\gamma\in \{\kappa_2,\nu_2\}\cap \{\hat\beta_0,\hat\beta_1\}$, then either $\gamma =\hat\alpha_0=\hat\beta_0$, or $\gamma =\hat\alpha_1=\hat\beta_1$. If  $\{\hat\beta_0,\hat\beta_1\}\ne \{\kappa_2,\nu_2\}$, then
$$A^{\prime\prime}=\{\{\hat\alpha_0,\hat\beta_0\},\{\hat\alpha_1,\hat\beta_1\}$$
is a balanced pair-set and, by Proposition~\ref{old-results-prop}\,\eqref{old-results-prop-part2} and \eqref{old-results-prop-part5}, $A^{\prime\prime}\in\Gamma_4$. Then $\iota_{i,0}(A',A^{\prime\prime})\overset {*}{\implies}A$ completes the proof that $A\in\Gamma_
5$.
Assume that $\{\hat\beta_0,\hat\beta_1\}=\{\kappa_2
,\nu_2\}$. Consider a connector
$\{P_{\zeta ,\zeta'}\mid \{\zeta ,\zeta'\}\in A'\}$, then there exist $
\{\zeta ,\zeta'\}\in A'$ and an edge $\{\gamma ,\gamma'\}$ of
$P_{\zeta ,\zeta^{\prime\prime}}$ such that $\gamma$ is closer to $
\zeta$ than $\gamma'$ in $P_{\zeta ,\zeta'}$. Set
$$A^{\prime\prime}=(A'\setminus \{\{\zeta ,\zeta'\}\})\cup \{\{\zeta
,\gamma \},\{\gamma',\zeta'\}\}.$$
Since $\{P_{\zeta ,\zeta'}\mid \{\zeta ,\zeta'\}\in A'\}$ is a connector of $A'$, we conclude that $A^{\prime\prime}\in\Gamma_4$. Set
$$A^{\prime\prime\prime}=\{\{\gamma ,\gamma'\},\{\hat\alpha_0,\hat\beta_0\},\{\hat\alpha_1,\hat\beta_1\}\}.$$
From $\hat\alpha_0=\hat\beta_0$ and $\hat\alpha_1=\hat\beta_1$ it follows that $A^{\prime\prime\prime}$ is balanced and, by Proposition~\ref{old-results-prop}\,\eqref{old-results-prop-part7}, $A^{\prime\prime\prime}\in\Gamma_4$. Then $\iota_{i,0}(A',A''')\overset{*}{\implies}A$ completes the proof that $A\in\Gamma_5$.
\end{proof}
\begin{lemma}
\label{2-even2-edgepairs}
Let $n\ge 5$ and $A=\{\{\alpha_i,\beta_i\}\mid i\in[4]\}
\in \bold B^4_n$ such that $\{\alpha_0,\beta_0\}$ is an edge-pair, $\{\alpha_1,\beta_1\}$ is an odd pair, $\{\alpha_2,\beta_2\}$ and 
$\{\alpha_3,\beta_3\}$ are even pairs and there is $i\in [n]$ such that one of the following conditions holds:
\begin{enumerate}[\upshape(1)]
\item \label{2-even2-edgepairs:1}
$\alpha_0(i)=\beta_0(i)\ne\alpha_1(i)=\beta_1(i)$,
\item \label{2-even2-edgepairs:2} 
$\alpha_0(i)=\beta_0(i)=\alpha_1(i)=\alpha_2(i)\ne\beta_1(i)=\beta_2(i)=\alpha_3(i)=\beta_3(i)$,
\item \label{2-even2-edgepairs:3}
$\alpha_0(i)=\alpha_1(i)=\alpha_2(i)=\beta_2(i)\ne\beta_1(i)=\alpha_3(i)=\beta_3(i)$ and $\chi(\alpha_1)\ne\chi(\alpha_2)$.
\end{enumerate}
Then $A\in\Gamma_n$.
\end{lemma}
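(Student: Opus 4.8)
The plan is to reduce the dimension along the coordinate $i$ supplied by the hypothesis. In each case I normalise so that the common value of the $i$-th coordinate on the ``left'' side equals $0$, and I record that, since $A$ is balanced and its only even pairs are $\{\alpha_2,\beta_2\}$ and $\{\alpha_3,\beta_3\}$, these two pairs must have opposite parities, i.e.\ $\chi(\alpha_2)\ne\chi(\alpha_3)$. The common engine is the following: form a simple $i$-completion $B\in\mathcal C_i(A)$ by Construction~\ref{construction-simple}, with the forced $(A,i)$-matching $R=\{\{\{\alpha_2,\beta_2\},\{\alpha_3,\beta_3\}\}\}$; such a $B$ exists because $2|A|\le 8\le 2^{\,n-2}$, so Lemma~\ref{cor-on-construction-simple}\,\eqref{cor-on-construction-simple:1} applies. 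If both projections $\rho_{i=0}(B)$ and $\rho_{i=1}(B)$ can be shown to lie in $\conp_{n-1}$, then part~(7) of Lemma~\ref{lemma-simple} gives $A\in\conp_n$. By Lemma~\ref{lemma-on-construction-simple} each projection has at most $|A|=4$ pairs, and by Lemma~\ref{lemma-on-construction-simple-properties} their parities are determined by the coordinate pattern and by $R$.

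Case~\eqref{2-even2-edgepairs:2} is the clean one. The edge pair and the odd pair are split by $i$ exactly as prescribed, while the two even pairs have three of their four endpoints on the side $i=1$, so Construction~\ref{construction-simple} treats them by step~(iv). Bookkeeping of the six resulting pairs shows that $\rho_{i=0}(B)$ consists of three odd pairs, one of which is the (still present) edge pair, and that $\rho_{i=1}(B)$ is a balanced three-pair set with exactly one odd pair. The former is connectable by Proposition~\ref{old-results-prop}\,\eqref{old-results-prop-part6} when $n-1\ge5$ and by Corollary~\ref{cor-dim-4} when $n-1=4$ (it contains an edge pair); the latter is connectable by Proposition~\ref{old-results-prop}\,\eqref{old-results-prop-part7}, valid already for $n-1\ge4$. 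This settles Case~\eqref{2-even2-edgepairs:2}.

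Cases~\eqref{2-even2-edgepairs:1} and~\eqref{2-even2-edgepairs:3} are where the real work lies. In Case~\eqref{2-even2-edgepairs:3} the two even pairs lie wholly on opposite sides of $i$, so Construction~\ref{construction-simple} must invoke step~(v); this produces seven pairs and forces one projection to be a \emph{pure} four-pair set with $\|\cdot\|=2$, a configuration covered by no part of Proposition~\ref{old-results-prop}. The remedy is to abandon the symmetric completion on that side and argue as in Lemmas~\ref{lem-4-1}--\ref{lem-4-3}: build a connector of a connectable small balanced set on one side, split its paths at the two crossing vertices so as to expose them as degenerate pairs, and read off the residual small balanced set $A'''$ on the far side. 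Re-assembling via $\iota_{i,0}$ and $\overset{*}{\implies}$, together with parts~(1) and~(2) of Lemma~\ref{lemma-simple}, yields $A\in\conp_n$. The point of the surgery is that the degenerate pairs it creates turn the obstructing four-pair side into a balanced four-pair set that is either not pure, hence connectable by Proposition~\ref{old-results-prop}\,\eqref{old-results-prop-part8} for $n\ge6$, or carries two degenerate pairs with $\|\cdot\|=2$, hence connectable by Proposition~\ref{old-results-prop}\,\eqref{old-results-prop-part9} even when $n=5$ and the projection lives in $Q_4$; the companion set $A'''$ is then dispatched by parts~\eqref{old-results-prop-part2}, \eqref{old-results-prop-part3} or~\eqref{old-results-prop-part7}. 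The extra hypothesis $\chi(\alpha_1)\ne\chi(\alpha_2)$ is exactly what guarantees that the endpoints exposed by the surgery carry the parities needed to assemble these small balanced sets. Case~\eqref{2-even2-edgepairs:1}, in which only the edge and odd pairs are located by $i$, is treated by the same device after distributing the two otherwise uncontrolled even pairs through $R$.

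The main obstacle is therefore Case~\eqref{2-even2-edgepairs:3}, together with the base dimension $n=5$: the two even pairs flanking the hyperplane $i$ unavoidably generate a pure four-pair projection, and the only way to certify it is to manufacture degenerate pairs by surgery and fall back on Proposition~\ref{old-results-prop}\,\eqref{old-results-prop-part8}--\eqref{old-results-prop-part9}. Verifying that the surgery respects all parity constraints—precisely where $\chi(\alpha_1)\ne\chi(\alpha_2)$ is consumed—and that every helper vertex chosen along the way avoids $\bigcup A$ (which is possible because $2|A|\le 2^{\,n-2}$) is the delicate part of the argument.
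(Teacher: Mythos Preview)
Your treatment of Case~\eqref{2-even2-edgepairs:2} is correct and matches the paper. For Cases~\eqref{2-even2-edgepairs:1} and~\eqref{2-even2-edgepairs:3}, however, your surgery plan is both more complicated than necessary and not carried out in enough detail to be a proof.

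For Case~\eqref{2-even2-edgepairs:1} the paper uses no surgery at all. The nondeterminism in steps~(v)/(vi) of Construction~\ref{construction-simple} (recorded in Lemma~\ref{lemma-on-construction-simple-properties}\,\eqref{lemma-on-construction-simple-properties:6}) lets one steer \emph{both} even pairs, together with any auxiliary vertices, to a single side. One obtains a simple $i$-completion $B$ with some $\ell\in[2]$ such that $\rho_{i=\ell}(B)$ consists of one, two or three odd pairs (one of which can be taken to be the edge pair $\{\alpha_0,\beta_0\}$), while $\rho_{i=1-\ell}(B)$ consists of one odd pair and two even pairs of opposite parities. These are dispatched by Corollary~\ref{cor-dim-4} or Proposition~\ref{old-results-prop}\,\eqref{old-results-prop-part6}, and by Proposition~\ref{old-results-prop}\,\eqref{old-results-prop-part7}, respectively. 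You overestimate the difficulty of this case.

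For Case~\eqref{2-even2-edgepairs:3} you correctly observe that step~(v) applied to the $R$-matched even pairs would create a pure four-pair projection with $\|\cdot\|=2$, not covered by Proposition~\ref{old-results-prop}. The paper's remedy is much simpler than surgery: abandon Construction~\ref{construction-simple} and build a hand-made $i$-completion in which the crossing odd pair $\{\alpha_1,\beta_1\}$ is split using a vertex $\gamma'$ with $\chi(\gamma')=\chi(\alpha_1)$ (the \emph{opposite} choice to step~(ii)). This makes $\{\alpha_1,\gamma'\}$ and $\{\gamma'\oplus e_i,\beta_1\}$ \emph{even} pairs, and the hypothesis $\chi(\alpha_1)\ne\chi(\alpha_2)$ is consumed exactly here: it forces these new even pairs to have parity opposite to $\{\alpha_2,\beta_2\}$ and $\{\alpha_3,\beta_3\}$ respectively. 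Treating the two sub-cases $\beta_0(i)=0$ and $\beta_0(i)=1$ separately (condition~\eqref{2-even2-edgepairs:3} does not determine $\beta_0(i)$), each projection becomes a balanced set of size at most three with $\|\cdot\|\le 1$, handled directly by Proposition~\ref{old-results-prop}\,\eqref{old-results-prop-part5} or~\eqref{old-results-prop-part7}. No connector is ever built and no path is ever cut.

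Your surgery sketch also has an internal inconsistency. You refer to ``the two crossing vertices'', but in Case~\eqref{2-even2-edgepairs:3} only $\{\alpha_1,\beta_1\}$ is guaranteed to cross the hyperplane $i$; the edge pair may or may not cross, and the two even pairs certainly do not. It is thus unclear which small set you intend to take a connector of, at which vertices you cut, and why the residual $A'''$ lands in a class covered by Proposition~\ref{old-results-prop}. Without these specifics the argument does not go through as written.
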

\begin{proof}
Without loss of generality we can assume that $\alpha_0(i)=0$. First consider the case \eqref{2-even2-edgepairs:1}.
By Lemma~\ref{cor-on-construction-simple}\,\eqref{cor-on-construction-simple:2}, there exists a simple  $i$-completion $B$ of $A$ and, by Lemma~\ref{lemma-on-construction-simple-properties}, there exists $\ell\in [2]$ such that $|\rho_{i=\ell}(B)|=\|\rho_{i=\ell}(B)\|\in\{1,2,3\}$ while $\rho_{i=1-\ell}(B)$ consists of one odd and two even pairs of opposite parities. Note that we can assume that if $\rho_{i=\ell}(B)$ consists of three odd pairs then $\{\alpha_0,\beta_0\}$ is one of them. Thus, by  Corollary~\ref{cor-dim-4} and parts \eqref{old-results-prop-part6} and \eqref{old-results-prop-part7} of Proposition~\ref{old-results-prop}, $\rho_{i=\ell}(B)\in\Gamma_{n-1}$ for both $\ell\in [2]$ and therefore $A\in\Gamma_
n$.
\par
If condition~\eqref{2-even2-edgepairs:2} holds then, by Lemma~\ref{cor-on-construction-simple}\,\eqref{cor-on-construction-simple:2}, there exists a simple $i$-completion $B$ of $A$, and by Lemma~\ref{lemma-on-construction-simple-properties} we have $|\rho_{i=1}(B)|=3$, $\|\rho_{i=1}(B)\|=1$, $|\rho_{i=0}(B)|
=\|\rho_{i=0}(B)\|=3$ and 
$\rho_{i=0}(B)$ contains an edge-pair. Thus $\rho_{i=0}(B)$ and $\rho_{i=1}(B)$ are connectable by Corollary~\ref{cor-dim-4} (if $n=5$) or Proposition~\ref{old-results-prop}\,\eqref{old-results-prop-part6} (if $n>5$), and by Proposition~\ref{old-results-prop}\,\eqref{old-results-prop-part7}, respectively, and therefore $A\in\Gamma_n$. 

It remains to settle the case \eqref{2-even2-edgepairs:3}.
Since $V_{n-1}$ contains at least eight but $\rho_{i=0}(\bigcup A)\cup\rho_{i=1}(\bigcup A)$ contains at most six vertices of parity $p$ for both $p\in\{-1,1\}$, there exist $\gamma,\gamma'\in V_n$ such that $\gamma (i)=\gamma'(i)=0$, $\chi(\gamma)\ne\chi(\alpha_0)$, $\chi(\gamma')=\chi(\alpha_1)$ and $\gamma ,\gamma\oplus e_i, \gamma' ,\gamma'\oplus e_i\notin\bigcup A$.  If $\beta_0(i)=1$, set 
$$
B=(A\setminus\{\{\alpha_0,\beta_0\},\{\alpha_1,\beta_1\}\})\cup\{\{\alpha_0,\gamma \},\{\gamma\oplus e_i,\beta_0\},\{\alpha_1,\gamma' \},\{\gamma'\oplus e_i,\beta_1\}\}.
$$
Then for both $\ell\in[2]$, $\rho_{i=\ell}(B)$ consists of one odd pair and two even pairs of opposite parities and therefore is connectable by Proposition~\ref{old-results-prop}\,\eqref{old-results-prop-part7}. If $\beta_0(i)=0$, set
$$
B=(A\setminus\{\{\alpha_1,\beta_1\}\})\cup\{\{\alpha_1,\gamma' \},\{\gamma'\oplus e_i,\beta_1\}\}.
$$
Then $\rho_{i=0}(B)$ consists of one edge-pair and two even pairs of opposite parities while $\rho_{i=1}(B)$ consists of two pure even pairs of opposite parities. Hence $\rho_{i=\ell}(B)\in\Gamma_{n-1}$ for both $\ell\in[2]$ by parts \eqref{old-results-prop-part7} and \eqref{old-results-prop-part5} of Proposition~\ref{old-results-prop}, respectively. In both cases we have $B\implies A$ and therefore $A\in\Gamma_n$.
The proof is complete.
\end{proof}
The second part of this section investigates diminishable pair-sets $A\in\Upsilon_n$ such that $|A|=n$ and $\sigma_i(A)=(n_0,1)$ where $n_0>n-3$ for a separating $i\in[n]$.
\begin{lemma}
\label{lemma-jeden-vpravo}
Let $A\in\Upsilon_n$ for $n\ge 6$ such that
$|A|=n$ and $\sigma_i(A)=(n-1,1)$ for some separating $i\in[n]$.
If $\Upsilon_{n-1}\subseteq\Gamma_{n-1}$ then $A\in\Gamma_n$.
\end{lemma}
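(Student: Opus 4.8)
The plan is to collapse the problem onto the sub-cube determined by coordinate $i$ and then feed the $0$-side to the induction hypothesis $\Upsilon_{n-1}\subseteq\Gamma_{n-1}$. Since $\sigma_i(A)=(n-1,1)$ and $|A|=n$, every pair of $A$ is non-crossing, i.e.\ $\alpha(i)=\beta(i)$ for all $\{\alpha,\beta\}\in A$; hence $A=\iota_{i,0}(A_0,\{\{\hat\alpha_1,\hat\beta_1\}\})$, where $A_0=\rho_{i=0}(A)$ is odd with $|A_0|=n-1$, and $\{\hat\alpha_1,\hat\beta_1\}=\rho_{i=1}(\{\alpha_1,\beta_1\})$ is the projection of the unique $1$-side pair. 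Because $i$ is separating, that $1$-side pair is an edge-pair, hence odd, so $\rho_{i=1}(A)\in\Gamma_{n-1}$ by Proposition~\ref{old-results-prop}\,\eqref{old-results-prop-part1}. As $A$ is its own $i$-completion, Lemma~\ref{lemma-simple}\,(7) then reduces everything to proving $A_0\in\Gamma_{n-1}$.

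First I would dispose of the generic case. The set $A_0$ is odd with $|A_0|=n-1$ and, since $n-1\ge 5\ne 4$, it is diminishable as soon as it carries at least two edge-pairs and $\enc(A_0)=\emptyset$; in that event $A_0\in\Upsilon_{n-1}\subseteq\Gamma_{n-1}$ and we are finished. Two structural facts guide the rest. Since $i$ is separating and the only edge-pair on the $1$-side is $\{\alpha_1,\beta_1\}$, the $0$-side carries at least one edge-pair, and it carries two unless $A$ has exactly two edge-pairs in total. As for encompassed vertices, $A_0$ is balanced with $|A_0|=n-1\le 2(n-1)-3$, so Observation~\ref{obser-encom} gives $|\enco(A_0)|\le 2$ with the two possible vertices of opposite parity; lifting neighbours through $e_i$ and using $\enc(A)=\emptyset$ shows, moreover, that any $\gamma\in\enc(A_0)$ satisfies $\gamma\notin\{\hat\alpha_1,\hat\beta_1\}$.

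The real work is repairing the two ways $A_0$ can fail to be diminishable. If $\enc(A_0)\ne\emptyset$, I would invoke the second part of Observation~\ref{obser-encom} (applicable since $|A_0|=n-1\le 2(n-1)-4$ for $n\ge 6$) to move the endpoint of a pair adjacent to the encompassed vertex to a free vertex of the same parity, killing that vertex; if $A_0$ has only one edge-pair, I would similarly convert a non-edge odd pair into an edge-pair. Each correction is realised not by altering $A$ directly but by splitting the affected $0$-side pair across the hyperplane, replacing $\{\alpha,\beta\}$ by $\{\alpha,\delta\},\{\delta\oplus e_i,\beta\}$ with $\delta$ on the $0$-side chosen so that $\rho_{i=0}$ of the new $0$-side is the repaired, diminishable set. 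This produces a pair-set $C$ with $C\overset{*}{\implies}A$ (the split merges back across $e_i$) whose $0$-side is now diminishable, at the cost of a single compensating crossing pair on the boundary. I would then resolve that crossing by passing to a simple $i$-completion $B$ of $C$ via Construction~\ref{construction-simple} and Lemma~\ref{lemma-on-construction-simple}, arranging that both $\rho_{i=0}(B)$ and $\rho_{i=1}(B)$ are connectable --- the $0$-projection through $\Upsilon_{n-1}\subseteq\Gamma_{n-1}$ and the small $1$-projection through the relevant parts of Proposition~\ref{old-results-prop} --- and conclude by Lemma~\ref{lemma-simple}\,(1),(2),(7).

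The main obstacle is precisely this repair: the $1$-side supplies only one genuine edge-pair, so all flexibility for fixing the $0$-side has to come from a single crossing excursion together with the freedom in choosing $\delta$ and $\delta\oplus e_i$. The delicate points are to guarantee that such auxiliary vertices exist unoccupied (which follows from $2(n-1)\ll 2^{n-1}$ together with $|\enco(A_0)|\le 2$), that moving or creating an endpoint neither destroys the second edge-pair nor reintroduces an encompassed vertex on either projection, and that the two resulting $(n-1)$-dimensional pieces are genuinely connectable and not merely balanced. Discharging these bookkeeping conditions, rather than any single clever device, is where the effort concentrates.
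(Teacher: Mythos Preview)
Your generic case is correct: when $A_0=\rho_{i=0}(A)$ already lies in $\Upsilon_{n-1}$, both projections are connectable and Lemma~\ref{lemma-simple}\,(2) finishes. The gap is in the repair. After your split, $C$ has $n+1$ pairs, one of them the crossing pair $\{\delta\oplus e_i,\beta\}$. Any simple $i$-completion $B$ of $C$ must treat that crossing by step~(ii) of Construction~\ref{construction-simple}, which deposits one new pair on \emph{each} side; by Lemma~\ref{lemma-on-construction-simple-properties}\,\eqref{lemma-on-construction-simple-properties:5} one gets $|\rho_{i=0}(B)|=|C|-m_{o,1}=(n+1)-1=n$ in $Q_{n-1}$, strictly beyond the scope of $\Upsilon_{n-1}$. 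The induction hypothesis therefore does not apply to $\rho_{i=0}(B)$, and the phrase ``the $0$-projection through $\Upsilon_{n-1}\subseteq\Gamma_{n-1}$'' has no force. The same size blow-up occurs for either of your two repairs: every correction costs a crossing, and every crossing costs a pair on the $0$-side after completion.

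The paper implements the same underlying idea---modify a $0$-side pair and defer an obligation---but discharges the obligation by \emph{path surgery on a connector} rather than by $i$-completion. In Claim~B and its variants one first applies the induction hypothesis to the repaired $0$-side $A'$ (still of size $n-1$, now genuinely in $\Upsilon_{n-1}$), obtains a connector, locates the orphaned endpoint $\rho_{i=0}(\beta)$ on some path, and splits that path there; the $1$-side then receives only a bounded pair-set of three or four pairs, dispatched by Proposition~\ref{old-results-prop} or Lemma~\ref{2-even2-edgepairs}. Alternatively (Claim~A), one simply drops a $0$-side pair outright, so that $\rho_{i=0}$ has only $n-2\le(n-1)-1$ pairs and is trivially diminishable with no edge-pair or encompassment condition to check. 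The four claims and the auxiliary set $X$ are exactly the bookkeeping you anticipated: they ensure the surgery neighbours avoid $\{\rho_{i=1}(\alpha_1),\rho_{i=1}(\beta_1)\}$ and that a suitable $\kappa$ exists without reintroducing an encompassed vertex.
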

\begin{proof}Assume that $A\in\Upsilon_n$ satisfies the assumption of the lemma. Let $\{\alpha_0,\beta_0\},\{\alpha_1,\beta_1\}\in A$ be edge pairs
such that $\alpha_0(i)=\beta_0(i)=0$ and $\alpha_1(i)=\beta_1(i)=1$.  Then $\alpha (i)=\beta (i)=0$
for all $\{\alpha ,\beta \}\in A_1=A\setminus \{\{\alpha_0,\beta_0\},\{\alpha_1,\beta_1\}\}$.
Let us denote
$$X=\{\gamma\in V_n\mid\gamma(i)=0,\,\{\{\gamma\oplus e_i,\alpha_1\},\{\gamma\oplus e_i,\beta_1\}\}\cap E(Q_n)\ne\emptyset\}.$$
Recall that by Observation~\ref{obser-encom} there exist no two distinct
$\gamma ,\gamma'\in V_{n-1}$ such that $\chi (\gamma )=\chi (\gamma')$ and
$\gamma,\gamma'\in\enco(\rho_{i=0}(A))$.
\setcounter{claim}{0}\begin{claim}
\label{lemma-jeden-vpravo:claim-A}
Assume that there 
\begin{enumerate}[\upshape(i)]
\item\label{lemma-jeden-vpravo:claim-A-(1)}
exists a pair $\{\alpha ,\beta \}\in A_1$ such that $\alpha,\beta\notin X$,
\item
\label{lemma-jeden-vpravo:claim-A-(2)}
or $\{\alpha_0,\beta_0\}=\{\alpha_1\oplus e_i,\beta_1\oplus e_i\}$ and $X=\bigcup A\setminus\{\alpha_1,\beta_1\}$.
\end{enumerate}
Then $A\in\Gamma_n$.
\end{claim}
\begin{proof}[Proof of Claim~\ref{lemma-jeden-vpravo:claim-A}]
Let $A'=\rho_{i=0}(A\setminus \{\{\alpha ,\beta \}\})$ 
where $\{\alpha ,\beta \}$ either satisfies \eqref{lemma-jeden-vpravo:claim-A-(1)}, or in case \eqref{lemma-jeden-vpravo:claim-A-(2)} is chosen arbitrarily from $A_1$.  Then $A'\in\oddp_{n-1}$ and $|A'|=n-2$, thus $A'$ is diminishable and, by the assumption, $A'\in\Gamma_{n-1}$.  The proof that $A\in
\Gamma_n$ is
similar to the proof of Lemma~\ref{lem-4-1}.  Since $A'\in\Gamma_{n-1}$, there exists
a connector $\{P_{\kappa ,\kappa'}\mid \{\kappa ,\kappa'\}\in A'\}$ of $
A'$.  Note that in case \eqref{lemma-jeden-vpravo:claim-A-(2)} we have
\begin{align}\tag{ii'}
\label{lemma-jeden-vpravo:claim-A-(2')}
\text{if $\{\kappa,\kappa'\}\ne\{\alpha_0,\beta_0\}$ then each vertex $\gamma$ of $P_{\kappa ,\kappa'}$ satisfies $\gamma\oplus e_i\not\in\bigcup A$.}
\end{align}
Then one of the following possibilities occurs:
\begin{enumerate}[\upshape(A)]
\item
\label{lemma-jeden-vpravo:claim-A:1}
there exist distinct $\{\kappa ,\kappa'\},\{\xi ,\xi'\}\in A'$ such that $
\rho_{i=0}(\alpha )$ belongs to $P_{\kappa ,\kappa'}$
and $\rho_{i=0}(\beta )$ belongs to $P_{\xi ,\xi'}$;
\item
\label{lemma-jeden-vpravo:claim-A:2}
there exists $\{\kappa ,\kappa'\}\in A'$ such that $\{\rho_{i=0}(\alpha ),\rho_{i
=0}(\beta )\}$ is not an edge of $P_{\kappa ,\kappa'}$ and $\rho_{i=0}(\alpha
)$ and $\rho_{i=0}(\beta )$ belong to
the path $P_{\kappa ,\kappa'}$;
\item
\label{lemma-jeden-vpravo:claim-A:3}
there exists $\{\kappa ,\kappa'\}\in A'$ such that $\{\rho_{i=0}(
\alpha ),\rho_{i=0}(\beta )\}$ is an edge of
the path $P_{\kappa ,\kappa'}$.
\end{enumerate}
If \eqref{lemma-jeden-vpravo:claim-A:1} occurs then there exist a subpath $(\zeta ,\rho_{i=0}(\alpha),\zeta')$ of $P_{\kappa ,\kappa'}$
such that $\zeta$ is closer to $\kappa$ than $\zeta'$ in $P_{\kappa,\kappa'}$ and a subpath
$(\nu ,\rho_{i=0}(\beta ),\nu')$ of $P_{\xi ,\xi'}$ such that $\nu$ is closer to $\xi$ than
$\nu'$ in $P_{\xi,\xi'}$. Using \eqref{lemma-jeden-vpravo:claim-A-(1)} or \eqref{lemma-jeden-vpravo:claim-A-(2')},
we conclude that
$$\{\zeta ,\rho_{i=0}(\alpha ),\zeta',\nu ,\rho_{i=0}(\beta ),\nu'
\}\cap \{\rho_{i=1}(\alpha_1),\rho_{i=1}(\beta_1)\}=\emptyset.$$
Then
\[
\begin{split}
A^{\prime\prime}=&(A'\setminus \{\{\kappa ,\kappa'\},\{\xi
,\xi'\}\})\; \cup \\&\quad\cup \{\{\kappa ,\zeta \},\{\rho_{i=0}(\alpha ),\rho_{i=0}(\alpha
)\},\{\zeta',\kappa'\},\{\xi ,\nu \},\{\rho_{i=0}(\beta ),\rho_{i
=0}(\beta )\},\{\nu',\xi'\}\}
\end{split}
\]
is a connectable pair-set.
If \eqref{lemma-jeden-vpravo:claim-A:2} occurs then there exist vertex-disjoint subpaths $(\zeta ,\rho_{i=0}(\alpha ),\zeta')$ and
$(\nu ,\rho_{i=0}(\beta),\nu')$ of $P_{\kappa ,\kappa'}$ such that
$\zeta$ is closer to $\kappa$ than $\zeta'$ and $\nu$ in $P_{\kappa ,\kappa'}$ and $\nu$ is
closer to $\kappa$ than $\nu'$ in $P_{\kappa ,\kappa'}$. From \eqref{lemma-jeden-vpravo:claim-A-(1)} or \eqref{lemma-jeden-vpravo:claim-A-(2')} it follows that
$\{\zeta ,\rho_{i=0}(\alpha ),\zeta',\nu ,\rho_{i=0}(\beta ),\nu'
\}\cap \{\rho_{i=1}(\alpha_1),\rho_{i=1}(\beta_1)\}=\emptyset$ and
$$A^{\prime\prime}=(A'\setminus \{\{\kappa ,\kappa'\}\})\cup \{\{
\kappa ,\zeta \},\{\rho_{i=0}(\alpha ),\rho_{i=0}(\alpha )\},\{\zeta'
,\nu \},\{\rho_{i=0}(\beta ),\rho_{i=0}(\beta )\},\{\nu',\kappa'\}
\}$$
is a connectable pair-set. 
In both cases set
$$A^{\prime\prime\prime}=\{\{\rho_{i=1}(\alpha_1),\rho_{i=1}(\beta_
1)\},\{\rho_{i=0}(\alpha ),\rho_{i=0}(\beta )\},\{\zeta ,\zeta'\}
,\{\nu ,\nu'\}\}.$$
Our goal is to prove that $A^{\prime\prime\prime}$ satisfies the assumptions of 
Lemma~\ref{2-even2-edgepairs}.  Let $i\in [n]$ such that 
$\rho_{i=1}(\alpha_1)(i)=\rho_{i=1}(\beta_1)(i)\ne\rho_{i=0}(\alpha 
)(i)=\rho_{i=0}(\beta )(i)$.  
Then the assumptions of Lemma~\ref{2-even2-edgepairs}\,\eqref{2-even2-edgepairs:1} are satisfied and thus $A^{\prime\prime\prime}\in\Gamma_n$.  Thus we can assume that if 
$\rho_{i=1}(\alpha_1)(i)=\rho_{i=1}(\beta_1)(i)\ne\rho_{i=0}(\alpha 
)(i)$ then 
$\rho_{i=0}(\beta )(i)=\rho_{i=1}(\alpha_1)(i)$ and by the symmetry, if 
$\rho_{i=1}(\alpha_1)(i)=\rho_{i=1}(\beta_1)(i)\ne\rho_{i=0}(\beta 
)(i)$ then 
$\rho_{i=0}(\alpha )(i)=\rho_{i=1}(\alpha_1)(i)$.  Let us denote $
i_1=\zeta\Delta\rho_{i=0}(\alpha )$, $i_2=\zeta'\Delta\rho_{i=0}(
\alpha )$, 
$i_3=\nu\Delta\rho_{i=0}(\beta )$, $i_4=\nu'\Delta\rho_{i=0}(\beta 
)$.  If 
$\rho_{i=1}(\alpha_1)(i)=\rho_{i=1}(\beta_1)(i)\ne\rho_{i=0}(\alpha)
(i)$ and  
$i\ne i_1,i_2$, then the assumptions of Lemma~\ref{2-even2-edgepairs}\,\eqref{2-even2-edgepairs:2} or \eqref{2-even2-edgepairs:3} are satisfied and thus $A^{\prime\prime\prime}\in\Gamma_n$.  Thus we can assume if 
$\rho_{i=1}(\alpha_1)(i)=\rho_{i=1}(\beta_1)(i)\ne\rho_{i=0}(\alpha 
)(i)$ then 
$i\in \{i_1,i_2\}$, and, by symmetry, if $\rho_{i=1}(\alpha_1)(i)=
\rho_{i=1}(\beta_1)(i)\ne\rho_{i=0}(\beta )(i)$ then 
$i\in \{i_3,i_4\}$.  Thus 
$\rho_{i=0}(\alpha )\Delta\rho_{i=0}(\beta )\subseteq \{i_1,i_2,i_
3,i_4,\rho_{i=1}(\alpha_1)\Delta\rho_{i=1}(\beta_1)\}$ and 
$$\rho_{i=1}(\alpha_1)\Delta\zeta ,\rho_{i=1}(\beta_1)\Delta\zeta\subseteq 
\{i_2,\rho_{i=1}(\alpha_1)\Delta\rho_{i=1}(\beta_1)\}.$$ Without loss of 
generality we assume that $\rho_{i=1}(\alpha_1)\Delta\zeta =\{
i_2\}$ and 
$\rho_{i=1}(\alpha_1)\Delta\zeta'=\{i_1\}$. By symmetry, 
$$\rho_{i=1}(\alpha_1)\Delta\nu ,\rho_{i=1}(\beta_1)\Delta\nu\subseteq 
\{i_4,\rho_{i=1}(\alpha_1)\Delta\rho_{i=1}(\beta_1)\}$$ and 
$\rho_{i=1}(\alpha_1)\Delta\nu',\rho_{i=1}(\beta_1)\Delta\nu'\subseteq 
\{i_3,\rho_{i=1}(\alpha_1)\Delta\rho_{i=1}(\beta_1)\}$. If $i_j=i_
k$ for some 
$j=1,2$ and $k=3,4$, then for $i=i_j$ we have 
$\alpha_1(i)=\beta_1(i)\ne\alpha (i)=\beta (i)$ and $A$ therefore satisfies the assumptions of Lemma~\ref{2-even2-edgepairs}\,\eqref{2-even2-edgepairs:1}, which means that $A^{\prime\prime\prime}\in\Gamma_n$. Thus we 
can assume that $i_1,i_2,i_3,i_4$ are pairwise distinct and since 
$\{\alpha ,\beta \}$ is an odd pair, we conclude that 
$$\rho_{i=0}(\alpha )\Delta\rho_{i=0}(\beta )=\{i_1,i_2,i_3,i_4,\rho_{i=1}(\alpha_1)\Delta\rho_{i=1}(\beta_1)\}.$$ 
But then $i=\rho_{i=1}(\alpha_1)\Delta\rho_{i=1}(\beta_1)$ satisfies the assumptions of Lemma~\ref{2-even2-edgepairs}\,\eqref{2-even2-edgepairs:3}. In all cases, $A^{\prime\prime\prime}\in\Gamma_{n-1}$ by Lemma~\ref{2-even2-edgepairs}.
Since $\iota_{i,0}(A^{\prime\prime},A^{\prime\prime\prime})\overset{*}\implies A$, we conclude that $A\in\Gamma_n.$ 

If \eqref{lemma-jeden-vpravo:claim-A:3} occurs then there exists a subpath
$(\zeta ,\rho_{i=0}(\alpha ),\rho_{i=0}(\beta ),\zeta')$
of $P_{\kappa ,\kappa'}$ such that $\zeta$ is closer to $\kappa$ than $
\zeta'$ in $P_{\kappa ,\kappa'}$.  Set
$$A^{\prime\prime}=(A'\setminus \{\{\kappa ,\kappa'\}\})\cup \{\{
\kappa ,\zeta \},\{\rho_{i=0}(\alpha ),\rho_{i=0}(\beta )\},\{\zeta'
,\kappa'\}\}.$$
Since $\{P_{\nu ,\nu'}\mid \{\nu ,\nu'\}\in A'\}$ is a connector of $A'$, we conclude that
$A^{\prime\prime}\in\Gamma_{n-1}$. Since $\chi (\zeta )\ne\chi (\zeta')$ and \eqref{lemma-jeden-vpravo:claim-A-(1)} or \eqref{lemma-jeden-vpravo:claim-A-(2')} holds, by Proposition~\ref{old-results-prop}\;\eqref{old-results-prop-part3}, $A^{\prime\prime\prime}=\{\{\zeta ,\zeta'\},\{\rho_{i=1}(\alpha_1),\rho_{i=1}(\beta_1)\}\}$ is a connectable pair-set. Since $\iota_{i,0}(A^{\prime\prime},A^{\prime\prime\prime})\overset {*}{\implies}A$, we obtain that $A\in\Gamma_n$ and the proof of Claim~\ref{lemma-jeden-vpravo:claim-A} is complete.
\end{proof}
Therefore we can assume that $\{\alpha,\beta\}\cap X\ne\emptyset$ for every pair $\{\alpha ,\beta \}\in A\setminus \{\{\alpha_1,\beta_1\}\}$. 
\begin{claim}
\label{lemma-jeden-vpravo:claim-B}
Let $\{\alpha ,\beta \}\in A_1$ such that $\beta\notin X$. If there exists $\{\alpha,\kappa\}\in E(Q_n)$
such that $\kappa(i)=0$, $\kappa,\kappa\oplus e_i\notin\bigcup A$ and
$A'=\rho_{i=0}((A\setminus \{\{\alpha ,\beta \}\})\cup \{\{\alpha ,\kappa \}\})\in\dimp_{n-1}$,
then $A\in\Gamma_n$.
\end{claim}
\begin{proof}[Proof of Claim~\ref{lemma-jeden-vpravo:claim-B}]
Since $A_1\in\dimp_{n-1}\subseteq\conp_{n-1}$ by our assumptions, there exists a connector $\{P_{\zeta ,\zeta'}\mid \{\zeta ,\zeta'\}\in A'\}$ of $A'$. Hence there are $\{\zeta,\zeta'\}\in A^{\prime\prime}$ and a subpath $(\nu ,\rho_{i=0}(\beta ),\nu')$ of $P_{\zeta ,\zeta'}$ such that $\nu$ is
closer to $\zeta$ than $\nu'$ in $P_{\zeta ,\zeta'}$.  Then
$$A^{\prime\prime}=(A'\setminus \{\{\zeta ,\zeta'\}\})\cup \{\{\zeta
,\nu \},\{\rho_{i=0}(\beta ),\rho_{i=0}(\beta )\},\{\zeta',\nu'\}
\}$$
is connectable because $\{P_{\xi ,\xi'}\mid \{\xi ,\xi'\}\in A'\}$ is a connector of $
A'$.
Let
$$A^{\prime\prime\prime}=\{\{\rho_{i=1}(\alpha_1),\rho_{i=1}(\beta_
1)\},\{\rho_{i=0}(\beta ),\rho_{i=0}(\kappa )\},\{\nu ,\nu'\}\}.$$
Since $\beta\notin X$, $A'''$ is a pair-set and from $\chi (\nu )=\chi (\nu')\ne\chi (\rho_{i=0}(\beta ))=\chi (\rho_{
i=0}(\kappa ))$, $\chi (\alpha_1)\ne\chi (\beta_1)$  it follows, by Proposition~\ref{old-results-prop}\;\eqref{old-results-prop-part7}, that $A^{\prime\prime\prime}\in\Gamma_{n-1}$, we can conclude that $\iota_{i,0}(A^{\prime\prime},A^{\prime\prime\prime})\overset {*}{\implies}A$. By Lemma~\ref{lemma-simple} we  conclude that $A\in\Gamma_n$ and the proof of  Claim~\ref{lemma-jeden-vpravo:claim-B} is complete.
\end{proof}
\begin{claim}
\label{lemma-jeden-vpravo:claim-C}
If there exists $\gamma\in\enc(\rho_{i=0}(A))$ then $A\in\Gamma_n$.
\end{claim}
\begin{proof}[Proof of Claim~\ref{lemma-jeden-vpravo:claim-C}]
First observe that if $\eta\in\enc(\rho_{i=0}(A))$ then for $\nu=\iota_{i=0}(\eta)$ we have
$\nu\notin\bigcup A$, for otherwise $\eta\in\bigcup(\rho_{i=0}(A))$. Since $A\in\dimp_n$ implies $\enc(A)=\emptyset$, we conclude that $\nu\oplus e_i\notin\{\alpha_1,\beta_1\}$. Hence
$\rho_{i=1}(\alpha_1),\rho_{i=1}(\beta_1)\notin\enc(\rho_{i=0}(A))$. Since
$\enco(\rho_{i=0}(X))=\{\rho_{i=1}(\{\alpha_1,\beta_1\})$, it follows that 
\begin{enumerate}[\upshape(a)]
\item 
\label{lemma-jeden-vpravo:claim-C:1}
for $\eta=\rho_{i=0}(\nu)\in\enc(\rho_{i=0}(A))$ there are at most two $\kappa\in X$ with $\{\nu,\kappa\}\in E(Q_n)$.
\end{enumerate}
By Observation~\ref{obser-encom}, if $\eta=\rho_{i=0}(\nu),\eta'=\rho_{i=0}(\nu')\in\enc(\rho_{i=0}(A))$ are distinct, then $\chi(\eta)\ne\chi(\eta')$.
Since $|\rho_{i=0}(A)|=n-1$, we conclude that 
\begin{enumerate}[\upshape(a)]
\setcounter{enumi}{1}
\item 
\label{lemma-jeden-vpravo:claim-C:2}
for such $\nu,\nu'$ and every $\{\alpha,\beta\}\in A$ with $\alpha(i)=\beta(i)=0$ we have either $\{\alpha,\nu\},\{\beta,\nu'\}\in E(Q_n)$ or $\{\alpha,\nu'\},\{\beta,\nu\}\in E(Q_n)$.
\end{enumerate}
Assume that there exist two distinct $\gamma,\gamma'\in\enc(\rho_{i=0}(A))$. If
$\eta\in\enco(\rho_{i=0}(\bigcup A)\cup\{\gamma\})$
then either $\{\eta,\gamma\}\notin E(Q_n)$ and $\eta\in\enc(\rho_{i=0}(A))$ or
$\{\eta,\gamma\}\in E(Q_n)$ and $\eta\in\rho_{i=0}(\bigcup A)$. Thus
$\enc(\rho_{i=0}(\bigcup A)\cup\{\gamma\})\subseteq\enc(\rho_{i=0}(A))$. Therefore there exists
$\{\alpha,\beta\}\in A_1$ such that $\{\alpha,\gamma\},\{\beta,\gamma'\}\in E(Q_n)$ (by \eqref{lemma-jeden-vpravo:claim-C:2}) and $\beta\notin X$ (by \eqref{lemma-jeden-vpravo:claim-C:1}).
Then $$A'=\rho_{i=0}(A\setminus\{\{\alpha,\beta\}\})\cup\{\{\rho_{i=0}(\alpha),\gamma\}\}$$
is an odd pair-set such that $|A'|=n-1$, $A'$ contains two edge pairs, namely
$\{\rho_{i=0}(\alpha_0),\rho_{i=0}(\beta_0)\}$ and $\{\rho_{i=0}(\alpha),\gamma\}$. Moreover,
$\enc(A')\subseteq\enc(\rho_{i=0}(A)\cup\{\gamma\})=\{\gamma'\}$,  but $\gamma'$ is not encompassed by $A'$.
Consequently, $A$ is connectable by Claim~\ref{lemma-jeden-vpravo:claim-B}. 

Thus we can assume that
$\{\gamma\}=\enc(\rho_{i=0}(A))$. If there exists $\{\alpha,\beta\}\in A_1$ such that
$\{\alpha,\gamma\}\in E(Q_n)$ and $\beta\notin X$, then analogously as in the foregoing case we obtain
that $A'$ is diminishable and, by Claim~\ref{lemma-jeden-vpravo:claim-B}, $A$ is connectable.  Observe
that for every $\{\alpha,\beta\}\in A_1$ we have
$\{\{\alpha,\gamma\},\{\beta,\gamma\}\}\cap E(Q_n)\ne\emptyset$. Thus we can assume that for every
$\{\alpha,\beta\}\in A_1$ we have $\{\alpha,\gamma\}\in E(Q_n)$ and $\beta\in X$. Moreover, by \eqref{lemma-jeden-vpravo:claim-C:1}, at least $n-4>1$ of these pairs satisfy $\alpha\notin X$.
Since for every $\eta\in\bigcup A$ with $\eta(i)=0$ and $\chi(\eta)\ne\chi(\beta)$ we have
$\{\eta,\gamma\}\in E(Q_n)$ and since there are at most two vertices $\kappa\in V_n$ such that
$\{\kappa,\gamma\},\{\kappa,\beta\}\in E(Q_n)$, we conclude that there exists $\kappa\in V_n$ such that
$\{\beta,\kappa\}\in E(Q_n)$, $\kappa(i)=0$ and $\kappa,\kappa\oplus e_i\notin\bigcup A$. Set
$$A'=\rho_{i=0}(A\setminus\{\{\alpha,\beta\}\})\cup\{\{\rho_{i=0}(\beta),\rho_{i=0}(\kappa)\}\}.$$
Then $\enc(A')\subseteq\enc(\rho_{i=0}(\bigcup A\cup\{\kappa\}))=\{\gamma\}$, but $\gamma$ is not
encompassed by $A'$ and thus $A'$ is diminishable. Hence $A$ is connectable by Claim~\ref{lemma-jeden-vpravo:claim-B} and the proof of Claim~\ref{lemma-jeden-vpravo:claim-C} is complete.
\end{proof}
\begin{claim}
\label{lemma-jeden-vpravo:claim-D}
If there exists $\{\alpha ,\beta \}\in A$ with $
\beta\notin X$ and 
$\alpha\in X\setminus \{\alpha_1\oplus e_i,\beta_1\oplus e_i\}$, then $A\in\Gamma_n$. 
\end{claim}
\begin{proof}
Without loss of generality we can assume that 
$\{\alpha ,\alpha_1\oplus e_i\}\in E(Q_n)$. Then for every $\nu\in 
V_n$ with 
\begin{align}\label{lemma-jeden-vpravo:claim-D:*}\tag{a}
\text{$\nu (i)=0$,  $\{\alpha ,\nu \}\in E(Q_n)$ and $\nu\ne\alpha_1\oplus e_i$}
\end{align}
we have $
\nu\oplus e_i\notin\bigcup A$. If some $\nu\notin\bigcup A$ and $\enc(\rho_{i=0}((A\setminus \{\alpha 
,\beta \})\cup \{\{\alpha ,\nu \}\}))=\emptyset$ then, by Claim~\ref{lemma-jeden-vpravo:claim-B}, 
$A\in\Gamma_n$. Thus we can restrict to the case that 
\begin{align}\label{lemma-jeden-vpravo:claim-D:**}
\tag{b}
\text{every $\nu\in V_n\setminus\bigcup A$ with \eqref{lemma-jeden-vpravo:claim-D:*} satisfies $\enc(\rho_{i=0}((A \setminus \{\alpha ,\beta \})\cup \{\{\alpha ,\nu \}\}))=\emptyset$.} 
\end{align}
Assume that there are two distinct $\kappa ,\kappa'$ such 
that $\kappa\in\enc(\rho_{i=0}((A\setminus \{\alpha ,\beta \})\cup 
\{\{\alpha ,\nu \}\}))$, 
$\kappa'\in\enc(\rho_{i=0}((A\setminus \{\alpha ,\beta \})\cup \{
\{\alpha ,\nu'\}\}))$ for some $\nu ,\nu'$ satisfying \eqref{lemma-jeden-vpravo:claim-D:*}
$\nu (i)=\nu'(i)=0$ and $\{\alpha ,\nu \},\{\alpha ,\nu'\}\in E(Q_n)$. Recall that by Claim~\ref{lemma-jeden-vpravo:claim-C} we can assume that $\enc(\rho_{i=0}(A))=\emptyset$. It follows that $\{\kappa,\rho_{i=0}(\nu)\},\{\kappa',\rho_{i=0}(\nu')\}\in E(Q_{n-1})$ and therefore $\chi(\kappa)=\chi(\kappa')$.
Then $n-2$ neighbors of both $\kappa$ and $\kappa'$ belong to $\rho_{i=0}(\bigcup A)$, vertices in these neighborhoods 
have the same parity and the intersection of these neighboorhoods is at most doubleton. 
It follows that $\rho_{i=0}(\bigcup A)$ contains at least $2(n-2)-2= 2n-6$ vertices of the same parity. Since $2n-6>n-1$ for $n\ge6$, this is a contradiction with our assumption that $\rho_{i=0}(A)$ consists of vertices of $n-1$ odd pairs.
Thus there is at most one $\nu\in V_n\setminus\bigcup A$ with 
$\nu (i)=0$, $\{\alpha ,\nu \}\in E(Q_n)$, \eqref{lemma-jeden-vpravo:claim-D:*} such that $\enc(\rho_{i=0}((A\setminus \{\alpha ,\beta \})\cup \{\{\alpha ,\nu \}\}))$ is nonepty and actually contains only a single vertex $\kappa$. 
Then \eqref{lemma-jeden-vpravo:claim-D:**} implies that at least $n-3$ neighbors of $\rho_{i=0}(\alpha)$ belong  to $\rho_{i=0}(\bigcup A)$. As $n-2$ neighbors of $\kappa$ also belong  to $\rho_{i=0}(\bigcup A)$, vertices in both neighborhoods share the same parity and their intersection is a doubleton including $\rho_{i=0}(\nu)\not\in\rho_{i=0}(\bigcup A)$, it follows that  $\rho_{i=0}(\bigcup A)$ contains at least $(n-3)+(n-2)-1= 2n-6$ vertices of the same parity. Since $2n-6>n-1$ for $n\ge6$, this leads to a contradiction as above.

It remains to deal with the case that 
\begin{align}\tag{c}
\label{lemma-jeden-vpravo:claim-D:***}
\text{each $\nu\in V_n$ with \eqref{lemma-jeden-vpravo:claim-D:*} falls into $\bigcup A$. }
\end{align}
Then every neighbor $\beta'$ of $\alpha$ with $\beta'(i)=0$ and $\beta'\ne\alpha_1\oplus e_i$ falls into $\bigcup A$. 
Hence there are at least $n-2$ pairs $\{\alpha',\beta'\}\in A$ such that $\{\alpha,\beta'\}\in E(Q_n)$, and $n-3$ of them satisfy $\beta'\not\in X$. Select one such pair $\{\alpha',\beta'\}$ and note that it satisfies the assumptions of this claim. If \eqref{lemma-jeden-vpravo:claim-D:***} holds for $\alpha'$, then --- similarly as above --- the neighborhood of $\rho_{i=0}(\alpha)$ and $\rho_{i=0}(\alpha')$ includes at least $2n-5$ vertices of $\rho_{i=0}(\bigcup A)$, which leads to a contradiction as $2n-5>n-1$ for $n\ge5$. Hence condition \eqref{lemma-jeden-vpravo:claim-D:***} and --- by the previous part of the proof --- therefore also \eqref{lemma-jeden-vpravo:claim-D:**} fails for $\{\alpha',\beta'\}$. It follows that $\{\alpha',\beta'\}$ satisfies the assumptions of Claim~\ref{lemma-jeden-vpravo:claim-B} and therefore $A\in\conp_n$.
The proof of Claim~\ref{lemma-jeden-vpravo:claim-D} is complete. 
\end{proof}
By Claims~\ref{lemma-jeden-vpravo:claim-A} and \ref{lemma-jeden-vpravo:claim-D} we can assume that each pair $\{\alpha,\beta\}\in A\setminus\{\{\alpha_1,\beta_1\}\}$ satisfies
\begin{enumerate}[\upshape(1)]
\item\label{lemma-jeden-vpravo:type-1} either $\alpha,\beta\in X$,
\item\label{lemma-jeden-vpravo:type-2} or $\alpha\in\{\alpha_1\oplus e_i,\beta_1\oplus e_i\}$ while $\beta\not\in X$.
\end{enumerate}
Note that there are at most two pairs of type \eqref{lemma-jeden-vpravo:type-2}.  First assume that there exists such a pair $\{\alpha_1\oplus e_i,\beta\}$ of type \eqref{lemma-jeden-vpravo:type-2}. Since $\beta\not\in X$ implies that $\{\alpha_1\oplus e_i,\beta\}\not\in E(Q_n)$, the remaining vertices of $\rho_{i=0}(\bigcup A)$ may occupy at most $n-2$ out of $n-1$ neighbors of $\rho_{i=0}(\alpha_1\oplus e_i)$. Hence there  is $\kappa\in V_n\setminus\bigcup A$ such that $\kappa(i)=0$ and $\{\alpha_1\oplus e_i,\kappa\}\in E(Q_n)$. Note that then  $\kappa\oplus e_i\notin\bigcup A$ and
$A'=\rho_{i=0}((A\setminus \{\{\alpha_1\oplus e_i ,\beta \}\})\cup \{\{\alpha_1\oplus e_i ,\kappa \}\})$ contains two edge-pairs.  Moreover, as $\bigcup A'\subseteq X$ or $\bigcup A'\subseteq X\cup\{\beta'\}$ --- the latter occurring if there is another pair  $\{\beta_1\oplus e_i,\beta'\}$ of type \eqref{lemma-jeden-vpravo:type-2} --- it follows that $\enc(A')=\emptyset$. Indeed, the only vertex that could be possibly encompassed by $A'$ is $\rho_{i=0}(\beta_1\oplus e_i)$, but as $\chi(\beta_1\oplus e_i)=\chi(\kappa)$ and $(\beta_1\oplus e_i)\oplus e_i=\beta_1\in\bigcup A$, this would imply that $\beta_1\oplus e_i\in\enc(A)$, contrary to our assumption that $A\in\dimp_n$. Hence $A'\in\dimp_{n-1}$ which means that the assumptions of Claim~\ref{lemma-jeden-vpravo:claim-B} are satisfied and therefore $A\in\conp_n$.

It remains to deal with the case that no pairs of type \eqref{lemma-jeden-vpravo:type-2} exist. Then $X=\bigcup A\setminus\{\alpha_1,\beta_1\}$. If $\{\alpha_0,\beta_0\}=\{\alpha_1\oplus e_i,\beta_1\oplus e_i\}$ then  $A\in\conp_n$ by Claim~\ref{lemma-jeden-vpravo:claim-A}, so we can without loss of generality assume that $\{\alpha_0,\beta_0\}\cap\{\alpha_1\oplus e_i,\beta_1\oplus e_i\}=\alpha_1\oplus e_i$. But then $\{\beta_1\oplus e_i,\beta\}$ is another edge-pair in $A$.  Moreover, as $\rho_{i=0}(\bigcup A)=\rho_{i=0}(X)$ implies that $\enc(\rho_{i=0}(\bigcup A))=\emptyset$, we can conclude that
$\rho_{i=0}(A)\in\dimp_{n-1}\subseteq\Gamma_{n-1}$. By Proposition~\ref{old-results-prop}\,\eqref{old-results-prop-part1}, 
$\rho_{i=1}(A)\in\Gamma_{n-1}$ and therefore $A\in\Gamma_n$.  This completes the proof of 
Lemma~\ref{lemma-jeden-vpravo}.
\end{proof}
\begin{lemma}
\label{lemma46}
Let $A\in\dimp_n$ for $n\ge 6$ such that $|A|=n$ and $\sigma_i(A)=(n-2,1)$
for some separating $i\in[n]$.  If $\Upsilon_{n-1}\subseteq\Gamma_{n-1}$ then
$A\in\Gamma_n$.
\end{lemma}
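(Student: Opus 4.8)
The plan is to reduce to dimension $n-1$ along the separating coordinate $i$, exactly in the spirit of Lemma~\ref{lemma-jeden-vpravo}, but now exploiting the single crossing pair forced by $\sigma_i(A)=(n-2,1)$. Since $A\in\dimp_n\subseteq\oddp_n$ and $n_0+n_1=n-1<n=|A|$, there is exactly one pair $\{\alpha',\beta'\}\in A$ with $\alpha'(i)\ne\beta'(i)$; assume without loss of generality $\alpha'(i)=0$ and $\beta'(i)=1$. As $i$ is separating there are edge-pairs $\{\alpha_0,\beta_0\}$ with $\alpha_0(i)=\beta_0(i)=0$ and $\{\alpha_1,\beta_1\}$ with $\alpha_1(i)=\beta_1(i)=1$; because $n_1=1$ the latter is the only pair on side~$1$, and the remaining $n-2$ pairs all lie on side~$0$. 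I would run Construction~\ref{construction-simple} so that the crossing pair is split by step~\eqref{construction-simple:ii} into $\{\alpha',\gamma\}$ and $\{\gamma\oplus e_i,\beta'\}$ with $\gamma(i)=0$, while every aligned odd pair is moved unchanged by step~\eqref{construction-simple:i}. Since $2|A|=2n\le 2^{n-2}$ for $n\ge6$, Lemma~\ref{cor-on-construction-simple}\,\eqref{cor-on-construction-simple:1} guarantees a resulting simple $i$-completion $B\in\mathcal C_i(A)$ with $|\rho_{i=0}(B)|=n-1$ and $|\rho_{i=1}(B)|=2$. The side-$1$ part $\rho_{i=1}(B)=\{\rho_{i=1}(\{\alpha_1,\beta_1\}),\rho_{i=1}(\{\gamma\oplus e_i,\beta'\})\}$ is an odd pair-set of size~$2$, hence connectable by Proposition~\ref{old-results-prop}\,\eqref{old-results-prop-part3}. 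Thus, by Proposition~\ref{observation-on-construction-simple}\,\eqref{observation-on-construction-simple:5} and part~(7) of Lemma~\ref{lemma-simple}, everything reduces to choosing $\gamma$ so that $\rho_{i=0}(B)\in\dimp_{n-1}$.

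Since $n-1\ge5$ and $|\rho_{i=0}(B)|=n-1$, membership in $\dimp_{n-1}$ means precisely that $\rho_{i=0}(B)$ contains at least two edge-pairs and satisfies $\enc(\rho_{i=0}(B))=\emptyset$; one edge-pair, $\rho_{i=0}(\{\alpha_0,\beta_0\})$, is always present. The encompassed vertices are controlled by Lemma~\ref{lemma-diminishable}: with $\sigma_i(A)=(n-2,1)$ and $n_0=n-2\ge4$, part~\eqref{lemma-diminishable:1} excludes any vertex encompassed by $\rho_{i=1}(\bigcup A)$, and part~\eqref{lemma-diminishable:3} forbids two distinct vertices encompassed by $\rho_{i=0}(\bigcup A)$, so at most one such $\delta'$ exists; moreover $\enc(A)=\emptyset$ makes both $\delta'$ and $\delta'\oplus e_i$ fresh. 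In the clean case $\enc(\rho_{i=0}(\bigcup A))=\emptyset$ the hypotheses of Lemma~\ref{cor-on-construction-simple}\,\eqref{cor-on-construction-simple:7} hold (indeed $2|A|-n_0-n_1=n+1<2^{n-2}$ and $|A|-n_j\le n-1$), so it yields a simple $i$-completion with $\enc(\rho_{i=0}(B))=\emptyset$; if side~$0$ already carries a second edge-pair this finishes the case, while otherwise I would additionally force $\gamma$ to be a side-$0$ neighbor of $\alpha'$ (all such neighbors automatically satisfy $\chi(\gamma)\ne\chi(\alpha')$), making $\{\alpha',\gamma\}$ the second edge-pair. A direct count in the spirit of the proof of Lemma~\ref{lemma-diminishableII} — now with the single available crossing pair, using Observation~\ref{obser-encom} to avoid the at most two vertices whose encompassment adding $\gamma$ could create — shows that fewer than the $n-1$ candidate neighbors are forbidden, so an admissible $\gamma$ exists and $\rho_{i=0}(B)\in\dimp_{n-1}$.

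The main obstacle is the remaining configuration, in which a vertex $\delta'$ is encompassed by $\rho_{i=0}(\bigcup A)$. Here adding any vertex other than $\delta'$ leaves $\delta'$ encompassed, so the only remedy inside this construction is to set $\gamma:=\delta'$, which removes $\delta'$ from the encompassed set; when $\delta'$ is adjacent to $\alpha'$ the pair $\{\alpha',\delta'\}$ simultaneously supplies the second edge-pair and the argument of Claim~\ref{lemma-jeden-vpravo:claim-C} of Lemma~\ref{lemma-jeden-vpravo}, via Lemma~\ref{lemma-diminishable}\,\eqref{lemma-diminishable:1}, shows that no new vertex becomes encompassed. The genuinely hard subcase is when $\delta'$ is not adjacent to $\alpha'$ and $\{\alpha_0,\beta_0\}$ is the only edge-pair on side~$0$, for then the single crossing pair cannot both absorb $\delta'$ and create a second edge-pair. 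I would resolve this not by forcing $\rho_{i=0}(B)$ to be diminishable of size $n-1$, but by a path-surgery reduction as in Lemmas~\ref{lem-4-1}--\ref{lem-4-3} and Claims~\ref{lemma-jeden-vpravo:claim-A}--\ref{lemma-jeden-vpravo:claim-B} of Lemma~\ref{lemma-jeden-vpravo}: delete one suitable side-$0$ pair to obtain a diminishable pair-set of size $n-2$ in $Q_{n-1}$, connectable by the hypothesis $\dimp_{n-1}\subseteq\conp_{n-1}$, and then graft the deleted pair together with the contributions of the crossing pair and the side-$1$ edge-pair back along an edge of the resulting connector by means of a small balanced pair-set connectable through Proposition~\ref{old-results-prop}. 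In every branch one produces $A'',A'''$ with $\iota_{i,0}(A'',A''')\overset{*}{\implies}A$ and both components connectable, whence $A\in\conp_n$ by Lemma~\ref{lemma-simple}. The decisive ingredient throughout is the encompassed-vertex bound of Lemma~\ref{lemma-diminishable}, which collapses the analysis to a single troublesome $\delta'$ and thereby keeps this last case tractable.
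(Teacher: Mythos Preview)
Your reduction via a simple $i$-completion with $|\rho_{i=0}(B)|=n-1$ and $|\rho_{i=1}(B)|=2$ is a natural plan, and your use of Lemma~\ref{cor-on-construction-simple}\,\eqref{cor-on-construction-simple:7} in the clean subcase with a pre-existing second edge-pair on side~$0$ is fine. The gap is in the subcase where $\{\alpha_0,\beta_0\}$ is the only side-$0$ edge-pair and you try to \emph{force} $\gamma$ to be a side-$0$ neighbor of the crossing endpoint $\alpha'$. Your assertion that ``fewer than the $n-1$ candidate neighbors are forbidden'' is not justified and in fact can fail. The $n-1$ side-$0$ neighbors of $\alpha'$ all have parity $-\chi(\alpha')$; but $\rho_{i=0}(\bigcup A)$ already contains $n-2$ vertices of that parity, and exactly one of $\alpha_1\oplus e_i,\beta_1\oplus e_i$ also has that parity, so the membership constraints alone may forbid all $n-1$ candidates---before you even subtract the up to two extra vertices ruled out by Observation~\ref{obser-encom}. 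This is precisely the configuration $\rho_{i=0}(\alpha')\in\enco\bigl(\bigcup\rho_{i=0}(A)\cup\rho_{i=1}(\{\alpha_1,\beta_1\})\bigr)$, and it is not excluded by your ``clean'' hypothesis $\enc(\rho_{i=0}(\bigcup A))=\emptyset$, since $\alpha'\in\bigcup A$. Lemma~\ref{lemma-diminishableII} does not help here: its count relies on having at least \emph{two} crossing pairs of the same parity to get $2n-4$ candidates, whereas you have only one, and the analogous bound collapses to zero free vertices.

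The paper avoids this trap by \emph{not} insisting that $\rho_{i=0}(B)$ be diminishable of size $n-1$. Instead it first disposes of the case where $\alpha'$ is far from $\{\alpha_1\oplus e_i,\beta_1\oplus e_i\}$ by using $\rho_{i=0}(A)$ itself (size $n-2$, hence automatically in $\dimp_{n-1}$ with no edge-pair or encompassment conditions to check) together with path surgery and a small balanced pair-set on side~$1$ handled by Proposition~\ref{old-results-prop}\,\eqref{old-results-prop-part7}. Only when $\alpha'$ is adjacent to $\alpha_1\oplus e_i$ or $\beta_1\oplus e_i$ does it attempt the size-$(n-1)$ construction, and then it splits again according to whether $\rho_{i=0}(\alpha')$ is encompassed, with explicit ad hoc constructions in the surrounded case. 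Your sketch of the ``main obstacle'' (the encompassed $\delta'$) gestures at path surgery but does not cover this other obstruction, which is independent of whether any $\delta'\in\enc(\rho_{i=0}(\bigcup A))$ exists; as written, the proposal leaves that case unresolved.
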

\begin{proof}Assume that $A\in\Upsilon_n$ satisfies the assumptions of the lemma. Then
there exist edge pairs $\{\alpha_0,\beta_0\},$ $\{\alpha_1,\beta_1\}\in A$ with $\alpha_0(i)=\beta_0(i)=0$, $\alpha_1(i)=\beta_1(i)=1$ and a pair $\{\alpha_2,\beta_2\}\in A$ with $
\alpha_2(i)\ne\beta_2(i)$, while
$\alpha (i)=\beta (i)=0$ for every $\{\alpha ,\beta \}\in A\setminus
\{\{\alpha_1,\beta_1\},\{\alpha_2,\beta_2\}\}$.  Assume
that $\alpha_2(i)=0$ and $\beta_2(i)=1$. The proof is divided into several cases.
\setcounter{claim}{0}
\begin{claim}
\label{lemma46:claim-A}
If $\{\alpha_2,\alpha_1\oplus e_i\},\{\alpha_2,\beta_1\oplus e_i\}
\notin E(Q_n)$ then $A\in\Gamma_n$.
\end{claim}
\begin{proof}[Proof of Claim~\ref{lemma46:claim-A}]
From
$\sigma_i(A)=(n-2,1)$ it follows that $|\rho_{i=0}(A)|=n-2$ and hence $
\rho_{i=0}(A)$ is
connectable because $\rho_{i=0}(A)$ is odd. Let $\{P_{\kappa ,\kappa'}
\mid \{\kappa ,\kappa'\}\in\rho_{i=0}(A)\}$
be a connector of $\rho_{i=0}(A)$.  Then there is a~$\{\eta,\eta'\}\in\rho_{i=0}(A)$ and
a subpath $(\zeta ,\rho_{i=0}(\alpha_2),\zeta')$ of the path $P_{
\eta ,\eta'}$ such that $\zeta$ is
closer to $\eta$ than $\zeta'$ in the path $P_{\eta,\eta'}$. Set
$$A^{\prime\prime}=(\rho_{i=0}(A)\setminus \{\{\eta ,\eta'\}\}
)\cup \{\{\eta ,\zeta \},\{\rho_{i=0}(\alpha_2),\rho_{i=0}(\alpha_
2)\},\{\zeta',\eta \}\}$$
then $A^{\prime\prime}$ is connectable because $\{P_{\kappa ,\kappa'}\mid
\{\kappa ,\kappa'\}\in\rho_{i=0}(A)\}$ is a
connector of $\rho_{i=0}(A)$. From $\{\alpha_2,\alpha_1\oplus e_i\},\{\alpha_2,\beta_1\oplus e_i\}\notin E(Q_n)$ it
follows that $\{\zeta ,\zeta',\rho_{i=0}(\alpha_2\}\cap \{\rho_{i=1}(\alpha_1),\rho_{
i=1}(\beta_1)\}=\emptyset$. Then
$$A^{\prime\prime\prime}=\{\{\rho_{i=1}(\alpha_1),\rho_{i=1}(\beta_
1)\},\{\zeta ,\zeta'\},\{\rho_{i=0}(\alpha_2),\rho_{i=1}(\beta_2)
\}\}$$
is a pair-set.  Since $\chi (\zeta )=\chi (\zeta')\ne\chi (\rho_{
i=0}(\alpha_2))=\chi (\rho_{i=1}(\beta_2))$
we infer, by Proposition~\ref{old-results-prop}\;\eqref{old-results-prop-part7},
that $A^{\prime\prime\prime}$ is connectable and from
$\iota_{i,0}(A^{\prime\prime},A^{\prime\prime\prime})\overset {*}{\implies}$A
it follows that $A$ is connectable.
\end{proof}
Thus we can assume that either $\{\alpha_2,\alpha_1\oplus e_i\}\in
E(Q_n)$ or
$\{\alpha_2,\beta_1\oplus e_i\}\in E(Q_n)$.
\begin{claim}
\label{lemma46:claim-B}
If $\rho_{i=0}(\alpha_2)\notin\enco(\bigcup\rho_{i=0}(A)\cup \{\rho_{i=1}(\{\alpha_1,\beta_1\})\})$ then $A$ is connectable.
\end{claim}
\begin{proof}[Proof of Claim~\ref{lemma46:claim-B}]
If there exists a vertex $\gamma\in V_n$ such that 
\begin{enumerate}[\upshape(a)]
\item\label{lemma46:claim-B:a} 
$\{\gamma,\alpha_2\}\in E(Q_n)$, $\gamma(i)=0$, $\gamma,\gamma\oplus e_i\notin\bigcup A$, and 
\item\label{lemma46:claim-B:b} 
$\enc(\rho_{i=0}(\bigcup A\cup\{\gamma\}))=\emptyset$,
\end{enumerate} 
we set
$A'=\rho_{i=0}(A)\cup \{\{\rho_{i=0}(\alpha_2),\gamma \}\}$. Then $A'$ is an odd pair set with at least two edge-pairs, $\enc(A')=\emptyset$ and $|A'|=n-1$, which means that $A'\in\dimp_{n-1}\subseteq\conp_{n-1}$.
Since $\gamma\oplus e_i\notin \{\alpha_1,\beta_1\}$ and $
\chi (\rho_{i=0}(\gamma) )\ne\chi (\rho_{i=0}(\alpha_2))=\chi (\rho_{i=1}(\beta_
2))$, we conclude, by Proposition~\ref{old-results-prop}\,\eqref{old-results-prop-part3}, that
$$A^{\prime\prime\prime}=\{\{\rho_{i=1}(\alpha_1),\rho_{i=1}(\beta_
1)\},\{\gamma ,\rho_{i=1}(\beta_2)\}\}$$
is connectable and $\iota_{i,0}(A^{\prime\prime},A^{\prime\prime\prime}
)\overset {*}{\implies}A$ implies that $A$ is connectable.

Thus our first aim is to find a $\gamma$ satisfying both \eqref{lemma46:claim-B:a} and \eqref{lemma46:claim-B:b}. Since $$\rho_{i=0}(\alpha_2)\notin\enco(\bigcup\rho_{i=0}(A)\cup \{\rho_{i=1}(\{\alpha_1,\beta_1\})\}$$
there exists $\gamma\in V_n$ satisfying \eqref{lemma46:claim-B:a}. 
First assume that there are two distinct vertices $\gamma,\gamma'\in V_n$ satisfying \eqref{lemma46:claim-B:a}.
Since $\chi(\gamma)=\chi(\gamma')$, we have
$$\enc(\rho_{i=0}(\bigcup A\cup\{\gamma\}))\cap\enc(\rho_{i=0}(\bigcup A\cup\{\gamma'\}))=\enc(\rho_{i=0}(\bigcup A)).$$
Thus if $\enc(\rho_{i=0}(A))=\emptyset$ then, by Observation~\ref{obser-encom}, at least one of
$\gamma,\gamma'$ satisfies both \eqref{lemma46:claim-B:a} and \eqref{lemma46:claim-B:b} which means that $A$ is connectable.

Next assume that there is a vertex $\gamma'\in\enc(\rho_{i=0}(\bigcup A))$. Then
$\gamma'\notin\rho_{i=1}\{\alpha_1,\beta_1\}$, for otherwise $\gamma=\iota_{i=0}(\gamma')\in\enc(A)$,
contrary to our assumption that $A$ is diminishable. Since  $|\rho_{i=0}(A)|=n-2$, we conclude that
$\{\alpha_2,\gamma\}\in E(Q_n)$. Then $\gamma,\gamma\oplus e_i\notin\bigcup A$ which means that $\gamma$ satisfies \eqref{lemma46:claim-B:a}. To prove that $\gamma$ satisfies \eqref{lemma46:claim-B:b}, consider $\gamma''\in V_{n-1}$ such that $\gamma''\in\enco(\rho_{i=0}(\bigcup A)\cup\{\gamma'\})$.
Then, by Observation~\ref{obser-encom}, $\chi(\gamma'')\ne\chi(\gamma')$ which means that 
either $\{\gamma'',\gamma'\}\in E(Q_{n-1})$ or $\gamma''\in\enco(\rho_{i=0}(A))$
because $\{\gamma'',\rho_{i=0}(\alpha_2)\}\notin E(Q_{n-1})$. In the first case
$\gamma''\in\rho_{i=0}(\bigcup A)$ because $\gamma'\in\enc(\rho_{i=0}(A))$ while the second case is impossible because
$|\rho_{i=1}(A)|=n-2$. Thus $\gamma$ satisfies \eqref{lemma46:claim-B:b} and therefore $A$ is connectable.

The last case remaining to settle Claim~\ref{lemma46:claim-B} is that $\enc(\rho_{i=0}(A))=\emptyset$ and there exists exactly one $\gamma\in V_n$ satisfying \eqref{lemma46:claim-B:a}. 
Since $\{\eta\in\bigcup A\mid\eta(i)=1\}=\{\alpha_1,\beta_1,\beta_2\}$ and exactly one of these three vertices shares its parity with $\chi(\gamma\oplus e_i)$, we conclude that $|\{\eta\in\bigcup A\mid\eta(i)=0,\,\{\eta,\alpha_2\}\in E(Q_n)\}|\in\{n-2,n-3\}$. Hence for every
$\gamma'\in V_n\setminus\{\alpha_2\}$ 
we have
\[|
\{\eta\in\bigcup A\,\cup\,\{\gamma\}\mid\eta(i)=0,\,\{\eta,\gamma'\}\in E(Q_n)\}|\le
\begin{cases}
4 & \text{if $\chi(\gamma')=\chi(\alpha_2)$}\\
n-2 & \text{if $\chi(\gamma')\ne\chi(\alpha_2)$}.
\end{cases}
\] 
Since $\max\{4,n-2\}<n-1$, it follows that $\enc(\rho_{i=0}(\bigcup A\cup\{\gamma\}))=\emptyset$. Therefore $\gamma$ satisfies both \eqref{lemma46:claim-B:a} and \eqref{lemma46:claim-B:b} 
and hence $A$ is connectable. The proof of Claim~\ref{lemma46:claim-B} is complete.
\end{proof}
It remains to settle the case that 
$\rho_{i=0}(\alpha_2)\in\enc(\bigcup\rho_{i=0}(A)\cup\rho_{i=1}(\{
\alpha_1,\beta_1\})$ and 
$\{\{\alpha_2,\alpha_1\oplus e_i\},\{\alpha_2,\beta_1\oplus e_i\}
\}\cap E(Q_n)\ne\emptyset$. We can without loss of 
generality assume that $\{\alpha_2,\alpha_1\oplus e_i\}\in E(Q_n)$. Since $
|\rho_{i=0}(A)|=n-2$, we conclude that every $\nu\in V_n\setminus \{\alpha_1\oplus e_i\}$ with $
\{\alpha_2,\nu \}\in E(Q_n)$ 
belongs to $\bigcup A$ and $\alpha_1\oplus e_i\notin\bigcup A$. Since $A\in\dimp_n$, it follows that $\alpha_1\oplus e_i$ is not encompassed 
by $\bigcup A$. 
First consider that $\alpha_2=\beta_2\oplus e_i$. Then 
$A'=\rho_{i=0}(A)\cup \{\{\rho_{i=0}(\alpha_2),\rho_{i=1}(\alpha_1)\}\}$ is an odd pair-set with two 
edge-pairs. Since $\rho_{i=0}(\alpha_2)\in\enc(\bigcup\rho_{i=0}(A)\cup\rho_{i=1}(\{\alpha_1,\beta_1\})$, we conclude that $\enc(A')=\enc(\rho_{i=0}(A))$ which is empty, as $n-2$ odd pairs of $\rho_{i=0}(A)$ is not enough to encompass a vertex of $V_{n-1}$. It follows that $A'$ is 
diminishable and hence connectable. Let $\{P_{\kappa ,\tau}\mid \{\kappa ,\tau \}\in A'\}$ be a 
connector of $A'$. Then $P_{\rho_{i=0}(\alpha_2),\rho_{i=1}(\alpha_1)}=(\rho_{i=0}(\alpha_2),\rho_{i=1}(\alpha_1))$ and there exists $\{\nu ,\eta \}\in A'$ such that $(\zeta ,\rho_{i=1}(\beta_1),\xi )$ is a subpath of $P_{\nu ,\eta}$ 
such that $\zeta$ is closer to $\nu$ than $\xi$ in $P_{\nu ,\eta}$. Then 
\begin{align*} A^{\prime\prime}=(\rho_{i=0}(A)\setminus \{\{\nu ,\eta \}\})\cup& 
\{\{\rho_{i=1}(\alpha_1),\rho_{i=1}(\alpha_1)\},\{\rho_{i=1}(\beta_
1),\rho_{i=1}(\beta_1)\},\\&\{\rho_{i=0}(\alpha_2),\rho_{i=0}(\alpha_
2)\},\{\nu ,\zeta \},\{\xi ,\eta \}\}
\end{align*}
is connectable. Let 
$$A^{\prime\prime\prime}=\{\{\rho_{i=1}(\alpha_1),\rho_{i=1}(\alpha_
1)\},\{\rho_{i=1}(\beta_1),\rho_{i=1}(\beta_1)\},\{\rho_{i=0}(\alpha_
2),\rho_{i=0}(\alpha_2)\},\{\zeta ,\xi \}\}.$$
Since $\chi (\rho_{i=0}(\alpha_2))\ne\chi (\rho_{i=1}(\alpha_1))$ and $
\chi (\zeta )=\chi (\xi )\ne\chi (\rho_{i=1}(\beta_1))$, we 
conclude that $A^{\prime\prime\prime}$ is even and balanced. 
Since $A^{\prime\prime\prime}$ is connectable by Proposition~\ref{old-results-prop}\,\eqref{old-results-prop-part8}, $\iota_{i,0}(A^{\prime
\prime},A^{\prime\prime\prime})\overset{*}{\implies}A$ implies that $A$ is connectable as well. 

Secondly assume that $\alpha_2\ne\beta_2\oplus e_i$. Then the intersection 
of the neighborhood of $\beta_2\oplus e_i$ with $\bigcup A_1$ where $A_1=A\setminus\{\alpha_1,\beta_1,\beta_2\}$ contains at most two 
vertices because $\chi (\alpha_2)=\chi (\beta_2\oplus e_i)$ and all vertices of $\bigcup A_1$ of parity $-\chi(\alpha_2)$ are neighbors of $\alpha_2$. Thus there exists $
\gamma\in V_n$ 
such that $\gamma (i)=0$, $\{\gamma ,\beta_2\oplus e_i\}\in E(Q_n
)$, $\gamma ,\gamma\oplus e_i\notin\bigcup A$ and 
$\enc(\rho_{i=0}(\bigcup A\cup \{\gamma \}))=\emptyset$. Choose $
\{\alpha ,\beta \}\in A_1\setminus\{\{\alpha_0,\beta_0\}\}$ such that $\{\alpha ,\alpha_2\}\in E(Q_n
)$ and there is $\nu\in V_n$ such that 
$\{\alpha ,\nu \}\in E(Q_n)$, $\nu ,\nu\oplus e_i\notin\bigcup A$ and $\enc
(\rho_{i=0}(\bigcup A\cup \{\nu \}))=\emptyset$. Let 
$$A'=(\rho_{i=0}(A\setminus \{\{\alpha ,\beta \}\}))\cup \{\{\alpha 
,\nu \},\{\alpha_2,\gamma \}\}.$$
Then $A'$ is diminishable and hence it is connectable. Let 
$\{P_{\eta ,\tau}\mid \{\eta ,\tau \}\in A'\}$ be a connector of $
A'$. Then there exists 
$\{\lambda ,\omega \}\in A'$ such that $(\zeta ,\rho_{i=0}(\beta 
),\xi )$ is a subpath of $P_{\lambda ,\omega}$ and $\zeta$ is 
closer to $\lambda$ than $\xi$ in $P_{\lambda ,\omega}$. Then 
$$A^{\prime\prime}=(A'\setminus \{\{\lambda ,\omega \}\})\cup \{\{
\lambda ,\zeta \},\{\xi ,\omega \},\{\rho_{i=0}(\beta ),\rho_{i=0}
(\beta )\}\}$$
is connectable. Let 
$$A^{\prime\prime\prime}=\{\{\rho_{i=1}(\alpha_1),\rho_{i=1}(\beta_
1)\},\{\rho_{i=0}(\gamma ),\rho_{i=1}(\beta_2)\},\{\rho_{i=0}(\beta 
),\rho_{i=0}(\nu )\},\{\zeta ,\xi \}\}.$$
Then $$\chi (\zeta )=\chi (\xi )\ne\chi (\rho_{i=0}(\beta ))=\chi (\rho_{
i=0}(\nu ))$$ 
and $\{\rho_{i=1}(\alpha_1),\rho_{i=1}(\beta_1)\}$ and $\{\rho_{
i=0}(\gamma ),\rho_{i=1}(\beta_2)\}$ are edge-pairs. 
Note that this means that there is $i'\in[n-1]$ such that  $\rho_{i=1}(\alpha_1)(i')=\rho_{i=1}(\beta_1)(i')\ne\rho_{i=0}(\gamma)(i')=\rho_{i=1}(\beta_2)(i')$.
Hence, by Lemma~\ref{2-even2-edgepairs}\,\eqref{2-even2-edgepairs:1}, $A^{\prime\prime\prime}$ is 
connectable and from $\iota_{i,0}(A^{\prime\prime},A^{\prime\prime
\prime})\overset{*}\implies A$ it follows that $A\in\Gamma_
n$.
\end{proof}
The last lemma tackles diminishable pair-sets with a bad coordinate. 
\begin{lemma}
\label{lemma47}
Let $A\in\Upsilon_n$ for $n\ge 6$
such that  $i\in [n]$ is bad for $A$. If $\Upsilon_{n-1}\subseteq\Gamma_{n-1}$ then $A\in\Gamma_n$.
\end{lemma}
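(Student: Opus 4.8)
The plan is to circumvent the very obstruction that makes $i$ bad. The reason Proposition~\ref{prop-diminishable} fails here is that neither $\alpha_0$ nor $\alpha_1$ has a fresh neighbour on side $k$ through which to cross coordinate $i$. I would therefore perform the crossing on the \emph{opposite} side $1-k$, which carries only the four vertices $\beta_0,\beta_1,\alpha_2,\beta_2$ and is almost empty, so free crossing points abound. Throughout I assume without loss of generality that $k=0$, so $\sigma_i(A)=(n-3,1)$; I keep the three distinguished pairs $\{\alpha_0,\beta_0\},\{\alpha_1,\beta_1\},\{\alpha_2,\beta_2\}\in A$ with $\alpha_0(i)=\alpha_1(i)=0\ne\beta_0(i)=\beta_1(i)=\alpha_2(i)=\beta_2(i)=1$ and $\chi(\alpha_0)\ne\chi(\alpha_1)$, recalling that every remaining pair of $A$ lies entirely on side $0$.

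First I would pass to side $0$ by pairing the two ``stuck'' vertices together. Set
$$A'=\rho_{i=0}\bigl((A\setminus\{\{\alpha_0,\beta_0\},\{\alpha_1,\beta_1\}\})\cup\{\{\alpha_0,\alpha_1\}\}\bigr).$$
Since $\chi(\alpha_0)\ne\chi(\alpha_1)$ the pair $\{\alpha_0,\alpha_1\}$ is odd, and $\rho_{i=0}$ preserves parity, so $A'\in\Omega_{n-1}$ with $|A'|=n-2=(n-1)-1$. As $n-1\ge5$, the set $A'$ is diminishable, hence $A'\in\Upsilon_{n-1}\subseteq\Gamma_{n-1}$ by hypothesis. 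I fix a connector of $A'$ and let $P$ be its path joining $a_0:=\rho_{i=0}(\alpha_0)$ to $a_1:=\rho_{i=0}(\alpha_1)$; since $a_0$ and $a_1$ have opposite parity they are distinct, so $P$ has positive length.

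The heart of the argument is a single cut of $P$ lifted into the nearly empty side $1$. Writing $S=\{\rho_{i=1}(\beta_0),\rho_{i=1}(\beta_1),\rho_{i=1}(\alpha_2),\rho_{i=1}(\beta_2)\}$, a set of four distinct vertices of $Q_{n-1}$, I choose an edge $\{g,g'\}$ of $P$ with $g$ closer to $a_0$ and $g,g'\notin S$; then $g\oplus e_i,g'\oplus e_i\notin\bigcup A$, because the only occupied vertices of side $1$ are $\beta_0,\beta_1,\alpha_2,\beta_2$. Put
$$A''=(A'\setminus\{\{a_0,a_1\}\})\cup\{\{a_0,g\},\{g',a_1\}\},\qquad A'''=\{\{g,\rho_{i=1}(\beta_0)\},\{g',\rho_{i=1}(\beta_1)\},\{\rho_{i=1}(\alpha_2),\rho_{i=1}(\beta_2)\}\}.$$
Cutting $P$ at $\{g,g'\}$ converts the connector of $A'$ into one of $A''$, so $A''\in\Gamma_{n-1}$. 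For $A'''$ note $\chi(g)\ne\chi(g')$ (adjacent vertices), while $\chi(\rho_{i=1}(\beta_0))=-\chi(\beta_0)$ and $\chi(\rho_{i=1}(\beta_1))=-\chi(\beta_1)$ are opposite, as are $\chi(\rho_{i=1}(\alpha_2))$ and $\chi(\rho_{i=1}(\beta_2))$; hence $A'''$ is balanced with $|A'''|=3$, and according to the parity of $g$ it is either odd, and connectable by part \eqref{old-results-prop-part6} of Proposition~\ref{old-results-prop}, or has $\|A'''\|=1$ and is connectable by part \eqref{old-results-prop-part7}. Finally, merging across coordinate $i$ at $g$ and at $g'$ combines $\{a_0,g\}$ with $\{g,\rho_{i=1}(\beta_0)\}$ into $\{\alpha_0,\beta_0\}$ and $\{g',a_1\}$ with $\{g',\rho_{i=1}(\beta_1)\}$ into $\{\alpha_1,\beta_1\}$, while $\{\rho_{i=1}(\alpha_2),\rho_{i=1}(\beta_2)\}$ lifts to $\{\alpha_2,\beta_2\}$ and the untouched side-$0$ pairs reappear; thus $\iota_{i,0}(A'',A''')\overset{*}{\implies}A$. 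Since $A'',A'''\in\Gamma_{n-1}$, part (2) of Lemma~\ref{lemma-simple} gives $\iota_{i,0}(A'',A''')\in\Gamma_n$, and part (1) then yields $A\in\Gamma_n$.

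The main obstacle is the choice of the cut edge when $P$ is short: the forbidden set $S$ may meet every edge of $P$, in particular when $P$ has length one, or when $\{\alpha_0,\beta_0\}$ or $\{\alpha_1,\beta_1\}$ is itself an edge-pair so that $a_0\oplus e_i$ or $a_1\oplus e_i$ already lies in $\bigcup A$. In these degenerate configurations I would instead cut the first edge incident to $a_0$ (peeling off a degenerate pair $\{a_0,a_0\}$ and crossing at $a_0$ itself, which is admissible precisely when $\{\alpha_0,\beta_0\}$ is an edge-pair), or realise one of the two crossings by cutting a second, fully interior, path of the connector; since side $1-k$ stays almost empty such a free crossing point always exists. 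The few genuinely constrained cases, where $P$ together with $S$ leaves no room at all, must be dispatched by the direct surgery used in Lemmas~\ref{lemma-jeden-vpravo} and~\ref{lemma46}, reducing the residual four-vertex configuration on side $1-k$ to Lemma~\ref{2-even2-edgepairs} or to parts \eqref{old-results-prop-part6}--\eqref{old-results-prop-part8} of Proposition~\ref{old-results-prop}.
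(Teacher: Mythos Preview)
Your reduction to the side-$0$ pair-set $A'=\rho_{i=0}\bigl((A\setminus\{\{\alpha_0,\beta_0\},\{\alpha_1,\beta_1\}\})\cup\{\{\alpha_0,\alpha_1\}\}\bigr)$ and the idea of cutting the connector path $P$ between $a_0$ and $a_1$ at a well-chosen edge is exactly the paper's approach (with relabelled indices), and your ``main case'' argument, where an edge $\{g,g'\}$ of $P$ with $g,g'\notin S$ exists, is correct and matches the paper's Claim~C. The problem is that this main case covers much less than you suggest. The defining property of a \emph{bad} coordinate is that every side-$0$ neighbour of $\alpha_0$ or $\alpha_1$ lies in $\bigcup A$ or projects into $S$; translated to $Q_{n-1}$, this means every neighbour of $a_0$ or $a_1$ outside $\bigcup A'$ lies in $S$. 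Since internal vertices of $P$ avoid $\bigcup A'\setminus\{a_0,a_1\}$, the second and the second-to-last vertex of $P$ are \emph{always} in $S$ (unless $P$ has length~$1$). With $|S|=4$, the clean edge you need can fail to exist for all paths of length $\le 5$, and these are precisely the cases that carry all the difficulty.

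Your final paragraph does not close this gap: it is a sketch rather than an argument. Peeling off $\{a_0,a_0\}$ and crossing at $a_0$ leads to collisions in $A'''$ exactly in the configurations that matter (for instance when $\alpha_0\oplus e_i\in\{\alpha_2,\beta_2\}$), and ``cutting a second path of the connector'' changes $A'''$ into a four-pair set whose connectability you have not established. The paper spends the bulk of its proof on these residual cases: Claims~A and~B dispose of the situations where the end-edges of $P$ hit $S$ in particular ways, reducing to $\{a_0,a_1\}=\{\rho_{i=1}(\alpha_2),\rho_{i=1}(\beta_2)\}$; Claim~C then handles length $\ge 5$; Claim~D treats length~$1$ with $\{a_0,a_1\}\cap\{\rho_{i=1}(\alpha_2),\rho_{i=1}(\beta_2)\}=\emptyset$; and a final, rather delicate construction (building an auxiliary diminishable $A_1\in\Upsilon_{n-1}$ with two forced edge-pairs and then performing surgery on its connector) handles the remaining length-$1$ and length-$3$ cases. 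None of this is routine, and it cannot be replaced by an appeal to Lemmas~\ref{lemma-jeden-vpravo} and~\ref{lemma46}, whose hypotheses ($\sigma_i(A)\in\{(n-1,1),(n-2,1)\}$) are disjoint from the bad case $\sigma_i(A)=(n-3,1)$.
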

\begin{proof}
Since $i\in [n]$ is bad for a diminishable pair-set $A$
we conclude that $|A|=n$ and there exist
$\{\alpha_0,\beta_0\},\{\alpha_1,\beta_1\},\{\alpha_2,\beta_2\},\{\alpha_3,\beta_3\}\in A$ such that
$\alpha_0(i)=\beta_0(i)=\alpha_2(i)=\alpha_3(i)=0$, $\alpha_1(i)=
\beta_1(i)=\beta_2(i)=\beta_3(i)=1$,
$\chi (\alpha_2)\ne\chi (\alpha_3)$, $\alpha (i)=\beta (i)=0$ for all
$\{\alpha ,\beta \}\in A\setminus \{\{\alpha_j,\beta_j\}\mid j\in[4]\}$. Furthermore,
\begin{enumerate}[\upshape(a)]
\item\label{lemma47:a}
for every $\kappa\in V_n$ such
that $\kappa (i)=0$ and $\{\kappa ,\alpha_j\}\in E(Q_n)$ for some $
j\in \{2,3\}$ we have that
$\{\kappa ,\kappa\oplus e_i\}\cap\bigcup A\ne\emptyset$. 
\end{enumerate}
Since $
|\{\kappa\in\bigcup A\mid\kappa (i)=0,\,\chi (\kappa )=l\}|=n-2$,
for every $l\in \{-1,1\}$ and $\{\kappa\in\bigcup A\mid\kappa (i)
=1\}=\{\alpha_1,\beta_1,\beta_2,\beta_3\}$, we
obtain that for both $j\in \{2,3\}$,
\begin{align}\tag{b}
\label{lemma47:b}
\begin{split}
&|\{\kappa\in V_n\mid\kappa (i)=0,\,\{\kappa ,\alpha_j\}\in E(Q_n),\,\kappa\in\bigcup A\}|\in \{n-3,n-2\},\\
&|\{\kappa\in V_n\mid\kappa (i)=0,\,\{\kappa ,\alpha_j\}\in E(Q_n),\,\kappa\oplus e_i\in\bigcup A\}|\in \{1,2\}.
\end{split} 
\end{align}
Moreover, if
$|\{\kappa\in\bigcup A\mid\kappa (i)=0,\,\{\kappa ,\alpha_j\}\in E(Q_n)\}|=n-2$ then
$\{\alpha_2,\alpha_3\}\in E(Q_n)$. Therefore for every $\kappa\in V_n$, 
\begin{align}\tag{c}
\label{lemma47:c}
\begin{split}
&\text{if $\kappa (i)=0$  and $\kappa\ne
\alpha_2,\alpha_3$
then $|\{\gamma\in \bigcup A\mid \{\gamma ,\kappa \}\in E(Q_n),\,\gamma
(i)=0\}|\le 3$},\\
&\text{if $\kappa (i)=1$ then
$|\{\gamma\in \bigcup A\mid \{\gamma ,\kappa \}\in E(Q_n),\,\gamma (i)=
1\}|\le 2$}.
\end{split} 
\end{align}
These facts are exploited later in the
proof to show that some vertices are not encompassed by a~suitable set.

Let $A'=\rho_{i=0}\big((A\setminus \{\{\alpha_2,\beta_2\},\{\alpha_
3,\beta_3\}\})\cup \{\alpha_2,\alpha_3\}\big)$.  Then
$A'$ is an odd pair-set with $|A'|=n-2$ which means that $A'\in\dimp_{n-1}\subseteq\conp_{n-1}$. 
Let $\{P_{\kappa ,\kappa'}\mid \{\kappa ,\kappa'\}\in A'\}$ be a connector
of $A'$.  For simplicity let us denote $\alpha_j'=\rho_{i=\alpha_j(i)}(\alpha_j)$ and $\beta'_j=\rho_{i=\beta_j(i)}(\beta_j)$ for $j\in[4]$.  Observe that if $\{\alpha'_j,\beta'_{j'}\}\in E(Q_{n-1})$ for
some $j,j'\in \{2,3\}$ then $j\ne j'$. 
\setcounter{claim}{0}\begin{claim}
\label{lemma47:claim-A}
If $\{\alpha'_2,\alpha_1'\},\{\alpha'_3,\beta_1'\}$ are edges of
$P_{\alpha'_2,\alpha'_3}$ then $A\in\Gamma_n$. Analogously if
$\{\alpha'_2,\beta_1'\}, \{\alpha'_3,\alpha_1'\}$ are edges of
$P_{\alpha'_2,\alpha'_3}$ then $A\in\Gamma_n$.
\end{claim}

\begin{proof}[Proof of Claim~\ref{lemma47:claim-A}]
Since $\{P_{\kappa,\kappa'}\mid\{\kappa,\kappa'\}\in A'$ is a connector of $A'$ we conclude that
$$A''=(A'\setminus\{\{\alpha'_2,\alpha'_3\}\})\cup\{\{\alpha'_2,\alpha'_2\},\{\alpha'_3,\alpha'_3\},\{\alpha_1',\beta_1'\}\}$$
is connectable. If $\{\alpha'_2,\alpha'_3\}\ne\{\beta'_2,\beta'_3\}$ then
$$A'''=\{\{\alpha_1',\alpha_1'\},\{\beta_1',\beta_1'\},\{\alpha'_2,\beta'_2\},\{\alpha'_3,\beta'_3\}\}$$
is a balanced pair-set because $\chi(\alpha_1)\ne\chi(\beta_1)$,
$\chi(\alpha'_2)=\chi(\beta'_2)\ne\chi(\alpha'_3)=\chi(\beta'_3)$. By Proposition~\ref{old-results-prop}\,\eqref{old-results-prop-part8}, $A'''$ is
connectable and from $\iota_{i,0}(A'',A''')\overset{*}{\implies}A$ it follows that $A$ is connectable as well.
If $\{\alpha'_2,\alpha'_3\}=\{\beta'_2,\beta'_3\}$ then $\alpha'_2=\beta'_2$ and $\alpha'_3=\beta'_3$.
Choose an edge $\{\xi,\xi'\}$ in a~path $P_{\kappa,\kappa'}$ for some
$\{\kappa,\kappa'\}\in A'\setminus\{\{\alpha'_2,\alpha'_3\}\}$ such that $\xi$ is closer to $\kappa$
than $\xi'$ in $P_{\kappa,\kappa'}$. Set
\[
\begin{split}
A''=&(A'\setminus\{\{\alpha'_2,\alpha'_3\}\})\cup\{\{\alpha'_2,\alpha'_2\},\{\alpha'_3,\alpha'_3\},\{\alpha_1',\beta_1'\},\{\kappa,\xi\},\{\xi',\kappa'\}\}\\
A'''=&\{\{\alpha_1',\alpha'_1\},\{\beta_1',\beta_1'\},\{\beta'_2,\beta'_2\},\{\beta'_3,\beta'_3\},\{\xi,\xi'\}\}
\end{split}
\]
Then $A''$ is connectable because $\{P_{\eta,\eta'}\mid\{\eta,\eta'\}\in A''\}$ is a connector of $A''$
while $A'''$ is a balanced pair-set with four degenerated pairs and therefore connectable by Proposition~\ref{old-results-prop}\;\eqref{old-results-prop-part10}. From $\iota_{i,0}(A'',A''')\overset{*}{\implies}A$ it follows that $A$ is
connectable and the proof of Claim~\ref{lemma47:claim-A} is complete.
\end{proof}
\begin{claim}
\label{lemma47:claim-B}
If $\{\alpha_j',\beta'_{5-j})\}$ is an edge of $P_{\alpha_2',\alpha'_3}$ and
$\alpha_j\oplus e_i\notin \{\alpha_1,\beta_1\}$ for some $j\in \{2,3\}$ then $A\in\Gamma_n$.
\end{claim}
\begin{proof}[Proof of Claim~\ref{lemma47:claim-B}]
Let $A^{\prime\prime}=(A'\setminus \{\{\alpha'_2,\alpha'_
3\}\})\cup \{\{\alpha'_j,\alpha'_j\},\{\alpha'_{5-j},\beta_{5-j}'
\}\}$,
then $A^{\prime\prime}$ is connectable because $\{P_{\kappa ,\kappa'}
\mid \{\kappa ,\kappa'\}\in A\}$ is a connector
of $A'$. Set
$$A^{\prime\prime\prime}=\{\{\alpha'_1,\beta'_1\}\{\alpha'_j,\beta'_
j\},\{\beta'_{5-j},\beta'_{5-j}\}\}.$$
Since $\alpha_j\oplus e_i\notin \{\alpha_1,\beta_1\}$ and
$\chi (\alpha'_j)=\chi (\beta'_j)\ne\chi (\beta'_{5-j})$, we conclude that $A'''$ is a balanced pair-set and,
by Proposition~\ref{old-results-prop}\;\eqref{old-results-prop-part7}, $A^{\prime\prime\prime}$ is
connectable. From
$\iota_{i,0}(A^{\prime\prime},A^{\prime\prime\prime})\overset {*}{\implies}A$ it follows that
$A\in\Gamma_n$ and the proof of Claim~\ref{lemma47:claim-B} is complete.
\end{proof}
Note that if assumptions of Claim~\ref{lemma47:claim-A} do not hold, then \eqref{lemma47:a} implies that
$\{\alpha'_j,\beta'_{5-j}\}$ for some $j\in\{2,3\}$ is an edge of the path
$P_{\alpha'_2,\alpha'_3}$. Then the edge $\{\alpha'_{5-j},\zeta\}$ of the path
$P_{\alpha'_2,\alpha'_3}$ satisfies $\zeta\in\{\alpha'_1,\beta'_1,\beta'_j\}$. 
If $\zeta\in\{\alpha'_1,\beta'_1\}$ then $\alpha_j\oplus e_i\notin\{\alpha_1,\beta_1\}$ and,
by Claim~\ref{lemma47:claim-B}, $A$ is connectable. If $\zeta=\beta'_j$ then, by
Claim~\ref{lemma47:claim-B}, $A$ is connectable whenever
$\{\alpha_2\oplus e_i,\alpha_3\oplus e_i\}\ne\{\alpha_1,\beta_1\}$. The following fact strengthens
this observation.

\begin{claim}
\label{lemma47:claim-C}
If the path $P_{\alpha'_2,\alpha'_3}$ is of length at least $5$ then $A$ is connectable.
\end{claim}

\begin{proof}[Proof of Claim~\ref{lemma47:claim-C}]
By Claims~\ref{lemma47:claim-A} and \ref{lemma47:claim-B} we can assume that $\{\alpha_1,\beta_1\}=\{\alpha_2\oplus e_i,\alpha_3\oplus e_i\}$.
Since the length of $P_{\alpha'_2,\alpha'_3}$ is at least $5$, there exists an
edge $\{\nu,\nu'\}$ of the path $P_{\alpha'_2,\alpha'_3}$ such that $\nu$ is closer to $\alpha'_2$
than $\nu'$ in $P_{\alpha'_2,\alpha'_3}$ and neither $\{\alpha'_2,\nu\}$ nor $\{\nu',\alpha'_3\}$ is an edge of $P_{\alpha'_2,\alpha'_3}$.
Hence
$$A''=(A'\setminus\{\{\alpha'_2,\alpha'_3\}\})\cup\{\{\alpha'_2,\nu\},\{\nu',\alpha'_3\}\}$$
is connectable. Set
$$A'''=\{\{\alpha'_1,\beta'_1\},\{\nu,\beta'_2\},\{\nu',\beta'_3\}\}.$$
From  $\{\alpha_1,\beta_1\}=\{\alpha_2\oplus e_i,\alpha_3\oplus e_i\}$ we conclude that $A'''$ is a pair-set. Then either $A'''$ is odd and therefore connectable by
Proposition~\ref{old-results-prop}\;\eqref{old-results-prop-part6}, or
$A'''$ is balanced with $\|A'''\|=1$ and therefore connectable by
Proposition~\ref{old-results-prop}\;\eqref{old-results-prop-part7}.
Then $\iota_{i,0}(A'',A''')\overset{*}{\implies}A$ completes the proof of Claim~\ref{lemma47:claim-C}.
\end{proof}
Thus we can assume that either
$P_{\alpha'_2,\alpha'_3}=(\alpha'_2,\beta'_3,\beta'_2,\alpha'_3)$ and $\{\alpha_2\oplus e_i,\alpha_3\oplus e_i\}=\{\alpha_1,\beta_1\}$ or
$P_{\alpha'_2,\alpha'_3}=(\alpha'_2,\alpha'_3)$. 
\begin{claim}
\label{lemma47:claim-D}
If $P_{\alpha'_2,\alpha'_3}=(\alpha'_2,\alpha'_3)$ and $\{\alpha'_2,\alpha'_3\}\cap\{\alpha'_1,\beta'_1\}=\emptyset$ then $A$ is connectable.
\end{claim}
\begin{proof}[Proof of Claim~\ref{lemma47:claim-D}]
Since $\{P_{\kappa,\kappa'}\mid\{\kappa,\kappa'\}\in A'\}$ is a connector of $A'$
and $P_{\alpha'_2,\alpha'_3}=(\alpha'_2,\alpha'_3)$,  we conclude that
$$A''=(A'\setminus\{\{\alpha'_2,\alpha'_3\}\})\cup\{\{\alpha'_2,\alpha'_2\},\{\alpha'_3,\alpha'_3\}\}$$
is connectable. By Proposition~\ref{old-results-prop}\:\eqref{old-results-prop-part7},
$$A'''=\{\{\alpha'_1,\beta'_1\},\{\alpha'_2,\beta'_2\},\{\alpha'_3,\beta'_3\}\}$$
is connectable because $\chi(\alpha'_2)=\chi(\beta'_2)\ne\chi(\alpha'_3)=\chi(\beta'_3)$. Then
$\iota_{i,0}(A'',A''')\overset{*}{\implies}A$ completes the proof.
\end{proof}
Thus it remains to settle the cases
\begin{itemize}
\item
$P_{\alpha'_2,\alpha'_3}=(\alpha'_2,\alpha'_3)$ and
$\{\alpha'_2,\alpha'_3\}\cap\{\alpha'_1,\beta'_1\}\ne\emptyset$;
\item
$P_{\alpha'_2,\alpha'_3}=(\alpha'_2,\beta'_3,\beta'_2,\alpha'_3)$ and
$\{\alpha'_2,\alpha'_3\}=\{\alpha'_1,\beta'_1\}$
\end{itemize}
because otherwise $A$ is connectable. 
We can assume that $\alpha'_3=\beta'_1$. Then
$\alpha_1\oplus e_i\ne\alpha_2\oplus e_j$ for all $j\in[n]$. Therefore if $\eta\in V_n$ such that
$\{\alpha_2,\eta\}\in E(Q_n)$ and $\eta\oplus e_i\in\{\alpha_1,\beta_1,\beta_2,\beta_3\}$, then
$\eta=\beta_3\oplus e_i$. Considering \eqref{lemma47:a} and  \eqref{lemma47:b}, it follows that
$$
\{\eta\in V_n\setminus\bigcup A\mid\{\eta,\alpha_2\}\in E(Q_n)\}=\{\beta_3\oplus e_i\}.
$$
Since $n\ge6$ we can choose
$\gamma,\gamma'\in V_{n-1}\setminus\rho_{i=0}(\bigcup A)$ such that
$\chi(\gamma)\ne\chi(\alpha'_2)$, $\{\gamma,\alpha'_2\}\notin E(Q_{n-1})$ and
$\{\gamma',\alpha'_3\}\in E(Q_{n-1})$. From \eqref{lemma47:c} we deduce that $\enc(\rho_{i=0}(\bigcup(A))\cup\{\gamma,\gamma'\})=\emptyset$.
Set $$A_1=(A'\setminus\{\{\alpha'_2,\alpha'_3\}\})\cup\{\{\alpha'_2,\gamma\},\{\alpha'_3,\gamma'\}\}.$$
Then $A_1$ is an odd pair-set, $\enc(A_1)=\enc(\rho_{i=0}(\bigcup(A))\cup\{\gamma,\gamma'\})=\emptyset$,
$|A_1|=n-1$ and $A_1$ contains two edge pairs $\{\alpha'_0,\beta'_0\}$,
$\{\alpha'_3,\gamma'\}$, thus $A_1\in\dimp_{n-1}\subseteq\conp_{n-1}$. 
Let $\{P_{\kappa,\kappa'}\mid\{\kappa,\kappa'\}\in A_1\}$ be a connector of $A_1$.
First assume that there exist $\{\eta,\eta'\}\in A_1\setminus\{\{\alpha'_3,\gamma'\}\}$ and a subpath $(\zeta,\xi,\xi',\zeta')$ of
$P_{\eta,\eta'}$ such that $\{\xi,\gamma'\}\in E(Q_{n-1})$,
$\zeta,\zeta',\xi'\notin\rho_{i=1}(\bigcup A)$ and $\zeta$ is closer to $\eta$ than $\zeta'$
in $P_{\eta,\eta'}$. Thus
$$A''=(A_1\setminus\{\{\alpha'_3,\gamma'\}\})\cup\{\{\alpha'_3,\xi'\},\{\eta,\zeta\},\{\zeta',\eta'\}\}$$
is connectable. Set
$$A'''=\{\{\alpha'_1,\beta'_1\},\{\zeta,\zeta'\},\{\gamma,\beta'_2\},\{\xi',\beta'_3\}\}.$$
Since $\chi(\zeta)\ne\chi(\zeta')$, $\chi(\gamma)\ne\chi(\alpha'_2)=\chi(\beta'_2)$ and
$\chi(\xi')\ne\chi(\xi)=\chi(\alpha'_3)=\chi(\beta'_3)$, we conclude that $A'''\in\oddp_{n-1}$ and from $n\ge6$ and $|A'''|=4$ it follows that $A'''\in\dimp_{n-1}\subseteq\conp_{n-1}$.

Otherwise it must be the case that the path $P_{\alpha'_3,\gamma'}$ from the connector of $A_1$ passes through at least four neighbors of $\gamma'$ in $V_{n-1}\setminus\bigcup A_1$. Then $P_{\alpha'_3,\gamma'}=(\alpha'_3,\zeta',\xi',\xi,\zeta,\dots,\gamma')$ and --- regarding \eqref{lemma47:a} and \eqref{lemma47:b} --- we have  $\{\zeta',\gamma'\}=\{\alpha'_1,\beta'_2\}$. 
Note that we can assume that $\zeta'\ne\beta'_3$, for otherwise the path $(\alpha'_3,\zeta',\xi',\xi,\zeta,\dots,\gamma')$ may be replaced with the path $(\alpha'_3,\gamma',\dots,\zeta,\xi,\xi',\zeta')$ which possesses the desired property.
Assuming that $\zeta'=\alpha'_1$ and $\gamma'=\beta'_2$ (the other case is analogic), observe that then
$$A''=(A_1\setminus\{\{\alpha'_3,\gamma'\}\})\cup\{\{\alpha'_3,\xi'\},\{\gamma',\xi\}\}$$
is connectable. Set
$$A'''=\{\{\alpha'_1,\beta'_1\},\{\xi,\alpha'_2\},\{\xi',\beta'_3\}\}.$$
Since $\chi(\xi)=\chi(\alpha'_2)\ne\chi(\xi')=\chi(\beta'_3)$, we conclude that $A'''$ is connectable by Proposition~\ref{old-results-prop}\,\eqref{old-results-prop-part7}.

In both cases, $\iota_{i,0}(A'',A''')\overset{*}{\implies}A$ implies that $A$ is connectable, which completes the proof of Lemma~\ref{lemma47}.
\end{proof}
At this point we are ready to gather the fruit of our efforts. The basis of our inductive construction is described as Theorem~\ref{dimen-5}. The general inductive step is formulated as part~\eqref{cor:main-section3:4} of Corollary~\ref{cor:main-section3},  while special cases \eqref{cor:main-section3:1}-\eqref{cor:main-section3:3} are settled by Lemmas~\ref{lem-4-1}, \ref{lem-4-2} and \ref{lem-4-3}, Lemmas~\ref{lemma-jeden-vpravo} and \ref{lemma46}, and Lemma~\ref{lemma47}, respectively. 
\begin{theorem}
\label{thm:main}
$\dimp_n\subseteq\conp_n$ for every $n\ge5$.
\end{theorem}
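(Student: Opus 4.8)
The plan is to prove the statement by induction on $n$, with base case $n=5$, where $\dimp_5\subseteq\conp_5$ is exactly Theorem~\ref{dimen-5}. For the inductive step I would fix $n\ge 6$ and take as the induction hypothesis that $\dimp_{n-1}\subseteq\conp_{n-1}$. The first useful observation is that this hypothesis already delivers the auxiliary inclusion $\oddp_{n-1}^{n-2}\subseteq\conp_{n-1}$: since $n-1\ge 5\neq 4$, every odd pair-set in $Q_{n-1}$ of size at most $(n-1)-1=n-2$ is diminishable directly from the definition, so $\oddp_{n-1}^{n-2}\subseteq\dimp_{n-1}\subseteq\conp_{n-1}$. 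These two inclusions are precisely the hypotheses required by the special-case lemmas of Section~\ref{section:special-cases}, so the entire step will rest only on the induction hypothesis together with the results already established.

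Given an arbitrary $A\in\dimp_n$, I would invoke Corollary~\ref{cor:main-section3} to produce a coordinate $i\in[n]$ realizing one of its four conditions, and then argue case by case. Condition~\eqref{cor:main-section3:4} is the generic inductive step: it supplies a simple $i$-completion $B$ with $\rho_{i=0}(B),\rho_{i=1}(B)\in\dimp_{n-1}$, which are connectable by the induction hypothesis; since a simple $i$-completion is in particular an $i$-completion by Proposition~\ref{observation-on-construction-simple}\,\eqref{observation-on-construction-simple:5}, part~(7) of Lemma~\ref{lemma-simple} immediately yields $A\in\conp_n$.

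The remaining conditions are the genuinely special configurations, each handled by the lemmas of Section~\ref{section:special-cases} under the inclusions secured above. In Condition~\eqref{cor:main-section3:1} we have $|A|=n-1$ and, after possibly exchanging the two coordinate values, $\sigma_i(A)=(n_0,0)$ with $n_0\ge|A|-2$, so $n_0\in\{|A|,|A|-1,|A|-2\}$; these three subcases are settled by Lemmas~\ref{lem-4-1}, \ref{lem-4-2} and \ref{lem-4-3} respectively (the first two using $\oddp_{n-1}^{n-2}\subseteq\conp_{n-1}$, the third using $\dimp_{n-1}\subseteq\conp_{n-1}$). In Condition~\eqref{cor:main-section3:2} we have $|A|=n$, $i$ separating and $\sigma_i(A)=(n_0,1)$ with $n_0\ge n-3$; the subcases $n_0=n-1$ and $n_0=n-2$ are exactly Lemmas~\ref{lemma-jeden-vpravo} and \ref{lemma46}, while the boundary subcase $n_0=n-3$ satisfies $1\le n_0,n_1\le n-3$ and therefore needs no dedicated lemma, being already covered by Condition~\eqref{cor:main-section3:4} when $i$ is not bad and by Condition~\eqref{cor:main-section3:3} when it is. Finally, Condition~\eqref{cor:main-section3:3}, that $i$ is bad for $A$, is dispatched by Lemma~\ref{lemma47}. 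In every case we obtain $A\in\conp_n$, which closes the induction.

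I expect the proof itself to be essentially bookkeeping, since all the combinatorial difficulty has been absorbed into Corollary~\ref{cor:main-section3} and the lemmas of Section~\ref{section:special-cases}. The point demanding care is the correct matching of the corollary's four conditions with the lemmas, together with the verification that each lemma's hypothesis is met by the induction hypothesis; in particular one must notice both that $\oddp_{n-1}^{n-2}\subseteq\dimp_{n-1}$ for $n-1\ge5$ and that the subcase $\sigma_i(A)=(n-3,1)$ of Condition~\eqref{cor:main-section3:2} requires no separate treatment. The genuinely hard step—that some coordinate always realizes one of the four configurations—is not part of this argument but is precisely the content of Corollary~\ref{cor:main-section3}, which we are entitled to assume.
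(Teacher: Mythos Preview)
Your proposal is correct and follows exactly the paper's own approach: induction on $n$ with base case Theorem~\ref{dimen-5}, inductive step via Corollary~\ref{cor:main-section3}, with condition~\eqref{cor:main-section3:4} handled by Lemma~\ref{lemma-simple}\,(7) and conditions~\eqref{cor:main-section3:1}--\eqref{cor:main-section3:3} by Lemmas~\ref{lem-4-1}--\ref{lem-4-3}, \ref{lemma-jeden-vpravo}--\ref{lemma46}, and~\ref{lemma47} respectively. Your explicit observation that $\oddp_{n-1}^{n-2}\subseteq\dimp_{n-1}$ for $n-1\ge 5$, and your remark that the subcase $\sigma_i(A)=(n-3,1)$ of condition~\eqref{cor:main-section3:2} is absorbed by conditions~\eqref{cor:main-section3:3} or~\eqref{cor:main-section3:4}, are both correct and in fact make the bookkeeping slightly more transparent than the paper's one-sentence summary.
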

Putting together Theorem~\ref{thm:main} with Propositions~\ref{dimen-3} and \ref{dimen-4} that settle the case of dimension $n<5$, we can fulfill our main goal and characterize odd pair-sets that are connectable.
\begin{corollary}
Let $A\in\oddp_n^{n-1}$. Then $A$ is connectable if and only if either $n\ne4$
or $n=4$ and $A$ is not isomorphic to the pair-set $C_2$ on Fig.~\ref{fig:Fig2}.
\end{corollary}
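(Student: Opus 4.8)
The plan is to prove the statement by induction on $n$, since the entire inductive machinery has already been assembled in Sections~\ref{section:induction-step} and~\ref{section:special-cases}. The base case $n=5$ is precisely Theorem~\ref{dimen-5}, verified by computer search. So I would fix $n\ge 6$, assume as inductive hypothesis that $\dimp_{n-1}\subseteq\conp_{n-1}$ (equivalently $\Upsilon_{n-1}\subseteq\Gamma_{n-1}$), take an arbitrary $A\in\dimp_n$, and aim to place $A$ in $\conp_n$.

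The first step is to invoke Corollary~\ref{cor:main-section3}, which hands me an index $i\in[n]$ for which $A$ falls into one of its four cases. The generic case~\eqref{cor:main-section3:4} is the easy one: it supplies a simple $i$-completion $B$ of $A$ with $\rho_{i=0}(B),\rho_{i=1}(B)\in\dimp_{n-1}$. By the inductive hypothesis both reduced pair-sets lie in $\conp_{n-1}$, and since $B$ is in particular an $i$-completion of $A$ (Proposition~\ref{observation-on-construction-simple}\,\eqref{observation-on-construction-simple:5}), part~(7) of Lemma~\ref{lemma-simple} yields $A\in\conp_n$ at once.

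It remains to route the three special cases to the lemmas of Section~\ref{section:special-cases}. The preliminary observation used throughout is that for $n-1\ge5$ every odd pair-set of size at most $n-2$ is diminishable — the dimension-four exception in the definition of $\dimp$ cannot be triggered — whence $\oddp_{n-1}^{n-2}\subseteq\dimp_{n-1}\subseteq\conp_{n-1}$ by induction; and $\Upsilon_{n-1}\subseteq\Gamma_{n-1}$ is the inductive hypothesis verbatim. These are exactly the ``if $\oddp_{n-1}^{n-2}\subseteq\Gamma_{n-1}$'' and ``if $\Upsilon_{n-1}\subseteq\Gamma_{n-1}$'' clauses required by the special-case lemmas. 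With them in hand, case~\eqref{cor:main-section3:1} (where $|A|=n-1$ and, after relabelling, $\sigma_i(A)=(n_0,0)$ with $n_0\ge|A|-2$) breaks into the three subcases $n_0=|A|,\,|A|-1,\,|A|-2$, settled by Lemmas~\ref{lem-4-1}, \ref{lem-4-2}, and \ref{lem-4-3} respectively; case~\eqref{cor:main-section3:2} (where $|A|=n$, $i$ is separating, and $\sigma_i(A)=(n_0,1)$ with $n_0\ge n-3$) reduces to $\sigma_i(A)\in\{(n-1,1),(n-2,1)\}$, handled by Lemmas~\ref{lemma-jeden-vpravo} and~\ref{lemma46}; and case~\eqref{cor:main-section3:3}, where $i$ is bad for $A$, is exactly Lemma~\ref{lemma47}. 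In every branch we obtain $A\in\conp_n$, closing the induction.

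Since the deep combinatorial content is already spent in the cited results, the only genuine obstacle is bookkeeping: matching each subcase of~\eqref{cor:main-section3:1} and~\eqref{cor:main-section3:2} to the correct lemma with the parity and separating-coordinate conditions lining up, and confirming that each lemma's hypothesis is supplied by the induction. The one delicate point is the boundary value $\sigma_i(A)=(n-3,1)$ inside case~\eqref{cor:main-section3:2}, which has no dedicated special-case lemma: when $i$ is not bad it already satisfies the hypotheses of Proposition~\ref{prop-diminishable} and is therefore absorbed into the generic reduction~\eqref{cor:main-section3:4}, while if $i$ is bad it is absorbed into~\eqref{cor:main-section3:3}; this is why Corollary~\ref{cor:main-section3} can legitimately report case~\eqref{cor:main-section3:2} only with $n_0\in\{n-1,n-2\}$ once cases~\eqref{cor:main-section3:3} and~\eqref{cor:main-section3:4} have been ruled out.
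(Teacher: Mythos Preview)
Your inductive argument is correct and mirrors the paper's proof of Theorem~\ref{thm:main} ($\dimp_n\subseteq\conp_n$ for $n\ge5$): the paper likewise grounds the induction at $n=5$ via Theorem~\ref{dimen-5} and dispatches the four cases of Corollary~\ref{cor:main-section3} with exactly the lemma assignments you give, including your observation that the $(n-3,1)$ boundary of case~\eqref{cor:main-section3:2} is absorbed by cases~\eqref{cor:main-section3:3} or~\eqref{cor:main-section3:4}. To finish the Corollary itself you should add the one-line remark that $\oddp_n^{n-1}\subseteq\dimp_n$ for every $n\ne4$ (immediate from the definition of diminishable), and invoke Propositions~\ref{dimen-3} and~\ref{dimen-4} for $n\le4$, the latter also furnishing the ``only if'' direction at $n=4$.
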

\section*{Acknowledgements}
\noindent
This research was supported by the Czech Science Foundation under grant GA14-10799S.
 The first and third authors would like to dedicate this work to the memory of their teacher, colleague and friend Va\v{s}ek Koubek whose sudden decease made this paper his last work.

\end{document}